\def\final{1}
\newcommand{\todo}[1]{}
\newcommand{\onote}[1]{}
\newcommand{\hnote}[1]{}
\newcommand{\remove}[1]{}
\newcommand{\todo}[1]{\begingroup\color{blue}TODO: #1\endgroup}
\newcommand{\onote}[1]{\begingroup\color{purple}Neil: #1\endgroup}
\newcommand{\hnote}[1]{\begingroup\color{purple}Nick: #1\endgroup}
\newcommand{\remove}[1]{\begingroup\color{gray}Removed: #1\endgroup}
\newcommand{\tcskip}{}
\newcommand{\iftc}[1]{}
\newcommand{\tcversion}[2]{#2}
\newcommand{\proofbelow}{5pt}
\definecolor{darkblue}{rgb}{0,0,0.38}
\definecolor{darkred}{rgb}{0.6,0,0}
\definecolor{darkgreen}{rgb}{0.1,0.35,0}
\newcommand{\newterm}[1]{\textit{#1}}
\newcommand{\AppendixName}[1]{\label{app:#1}}
\newcommand{\ClaimName}[1]{\label{clm:#1}}
\newcommand{\ConjectureName}[1]{\label{conj:#1}}
\newcommand{\CorollaryName}[1]{\label{cor:#1}}
\newcommand{\EquationName}[1]{\label{eq:#1}}
\newcommand{\LemmaName}[1]{\label{lem:#1}}
\newcommand{\TheoremName}[1]{\label{thm:#1}}
\newcommand{\SectionName}[1]{\label{sec:#1}}
\newcommand{\Appendix}[1]{Appendix~\ref{app:#1}}
\newcommand{\Claim}[1]{Claim~\ref{clm:#1}}
\newcommand{\Conjecture}[1]{Conjecture~\ref{conj:#1}}
\newcommand{\Corollary}[1]{Corollary~\ref{cor:#1}}
\newcommand{\Lemma}[1]{Lemma~\ref{lem:#1}}
\newcommand{\Section}[1]{Section~\ref{sec:#1}}
\newcommand{\Theorem}[1]{Theorem~\ref{thm:#1}}
\newcommand{\afterproof}{$\qed$ \par \vspace{\proofbelow}}
    \newenvironment{proofof}[1]{\begin{proof}[Proof of~#1]}{\end{proof}}}
\newcommand{\union}{\cup}                                       
\newcommand{\intersect}{\cap}                                   
\newcommand{\abs}[1]{\lvert #1 \rvert}
\newcommand{\set}[1]{\left \{ #1 \right \}}                     
\newcommand{\setst}[2]{\left\{\, #1 \,:\, #2 \,\right\}}        
\newcommand{\card}[1]{\abs{#1}}
\newcommand{\smallsum}[2]{{\textstyle \sum_{#1}^{#2}}}
\newcommand{\smallprod}[2]{{\textstyle \prod_{#1}^{#2}}}
\newcommand{\floor}[1]{\left\lfloor #1 \right\rfloor}
\newcommand{\ceil}[1]{\left\lceil #1 \right\rceil}
\renewcommand{\th}{\ifmmode{^{\textrm{th}}}\else{\textsuperscript{th}\ }\fi}
\newcommand{\norm}[1]{\left\lVert #1 \right\rVert}
\newcommand{\rank}{\operatorname{rank}}
\newcommand{\trace}{\operatorname{tr}}
\newcommand{\transpose}{^{\mathsf{T}}}                                    
\newcommand{\image}{\operatorname{im}}
\newcommand{\prob}[1]{\operatorname{\mathbb{P}}\left[\,#1\,\right]}               
\newcommand{\probover}[2]{\operatorname{\mathbb{P}}_{#1}\left[\,#2\,\right]}      
\newcommand{\expect}[1]{\operatorname{\mathbb{E}}\left[\,#1\,\right]}              
\newcommand{\expectover}[2]{\operatorname{\mathbb{E}}_{#1}\left[\,#2\,\right]}      
\newcommand{\bC}{\mathbb{C}}
\newcommand{\bR}{\mathbb{R}}
\newcommand{\bRnneg}{\mathbb{R}_+}
\newcommand{\bZnneg}{\mathbb{Z}_+}
\newcommand{\cB}{\mathcal{B}}
\newcommand{\cD}{\mathcal{D}}
\newcommand{\cE}{\mathcal{E}}
\newcommand{\lmax}{\lambda_\mathrm{max}}
\newcommand{\lmin}{\lambda_\mathrm{min}}
\newcommand{\mat}{\mathbf{M}}
\newcommand{\connec}{k}
\newcommand{\conduc}{\kappa}
\newcommand{\Symraw}{\mathbb{S}}
\newcommand{\Pd}[1][]{\Symraw_{++}^{\ifthenelse{\equal{#1}{}}{n}{#1}}}
\newcommand{\Psd}[1][]{\Symraw_+^{\ifthenelse{\equal{#1}{}}{n}{#1}}}
\newcommand{\Sym}[1][]{\Symraw^{\ifthenelse{\equal{#1}{}}{n}{#1}}}
\newcommand{\Diag}[1][]{\mathbb{D}^{\ifthenelse{\equal{#1}{}}{n}{#1}}}
\newcommand{\stable}{\operatorname{st.rank}}
\newcommand{\LM}{\operatorname{LM}}
\newcommand{\AGM}{\operatorname{BM}}
\title{Pipage Rounding, Pessimistic Estimators \\ and Matrix Concentration}
\date{}
\author{
\alignauthor
Nicholas J.~A.~Harvey\\
       \affaddr{Department of Computer Science}\\
       \affaddr{UBC}\\
       \affaddr{Vancouver, Canada}\\
       \email{nickhar@cs.ubc.ca}
\alignauthor
Neil Olver\\
\affaddr{Department of Mathematics}\\
       \affaddr{MIT}\\
       \affaddr{Cambridge, USA}\\
       \email{olver@math.mit.edu}
       \alignauthor
}
\date{\today}
\begin{document}

\tcversion{}{\pagestyle{empty}}

\maketitle

\begin{abstract}
    Pipage rounding is a dependent random sampling technique that has
several interesting properties and diverse applications.
One property that has been particularly useful is negative correlation of the resulting vector.
Unfortunately negative correlation has its limitations, and there are some further desirable
properties that do not seem to follow from existing techniques.
In particular, recent concentration results for sums of independent random matrices
are not known to extend to a negatively dependent setting.

We introduce a simple but useful technique called \emph{concavity of pessimistic estimators}.
This technique allows us to show concentration of submodular functions and concentration
of matrix sums under pipage rounding.
The former result answers a question of Chekuri et al.\ (2009).
To prove the latter result, we derive a new variant of Lieb's celebrated concavity theorem in matrix
analysis.

We provide numerous applications of these results.
One is to \emph{spectrally-thin trees}, a spectral analog of the \emph{thin trees} that played a crucial
role in the recent breakthrough on the asymmetric traveling salesman problem.
We show a polynomial time algorithm that, given a graph where every edge has effective conductance at least $\conduc$, returns an $O(\conduc^{-1} \cdot \log n / \log \log n)$-spectrally-thin tree.
There are further applications to rounding of semidefinite programs,
to the column subset selection problem,
and to a geometric question of extracting a nearly-orthonormal basis
from an isotropic distribution.

\end{abstract}

\tcversion{}{\newpage \pagestyle{plain}\setcounter{page}{1}}

\section{Introduction}

\textit{Rounding} is a crucial step in the design of many approximation algorithms.
Given a fractional vector satisfying some constraints,
a rounding method produces an integer vector that satisfies those constraints, either exactly or
approximately.
\textit{Randomized rounding} \cite{RT} \cite[Chapter 5]{SW},
in which the coordinates of the fractional vector are rounded randomly and independently,
produces good integer vectors for many applications.
\textit{Dependent rounding} methods, in which the resulting integer vector does not have independent
coordinates, are important in many scenarios where naive randomized rounding does poorly.
Various techniques exist for designing dependent rounding methods
(see, e.g., the surveys \cite{SriSurvey,BansalSlides}).

It is common for a rounding scenario to involve two types of constraints:
hard constraints, which must be satisfied exactly by the integer solution,
and soft constraints, which must be approximately satisfied by the integer solution.
Low-congestion multi-path routing \cite{Sri01},
max cut with given sizes of parts \cite{AS},
thin spanning trees \cite{AGMOS}, and
submodular maximization under a matroid constraint \cite{CCPV,CVZFOCS} are 
examples of problems whose solutions involve such a rounding scenario.
The hard constraint is often membership in an integer polytope
that is defined using combinatorial objects (e.g., matchings or matroids).
The soft constraints are usually simple linear inequalities.

With randomized rounding, the independent choices lead to
concentration of measure phenomena that are useful for handling soft constraints.
For example, Chernoff bounds are commonly used to show that linear inequalities
are approximately satisfied \cite{RT}.
The past decade has seen various uses of \emph{matrix} concentration bounds
(e.g., \cite{AW,RV07,Tropp11})
to show that linear matrix inequalities are approximately satisfied 
by random sampling or rounding.
Such uses have occurred in many diverse areas:
    graph sparsification \cite{SS08}, 
    compressed sensing \cite{VershyninSurvey},
    statistics \cite{TroppMasked},
    machine learning \cite{Recht} and 
    numerical linear algebra \cite{Mahoney}.

With dependent rounding, concentration phenomena can also occur.
\emph{Pipage rounding}, \emph{swap rounding} and \emph{maximum entropy sampling}
are dependent rounding techniques
that have seen many important uses over the past decade
\cite{Sri01,AS,GandhiKPS06,CCPV,AGMOS,CVZFOCS}.
An important feature in some scenarios is that
any Chernoff bound that is valid under independent randomized rounding
remains valid under these dependent rounding techniques.
This fact is proven by showing that the rounded solution has a
\emph{negatively correlated distribution}, then appealing to the fact
that Chernoff bounds remain valid under such distributions~\cite{PS97}.
Unfortunately, commutativity plays a key role in proving that fact,
and these arguments do not seem to extend to matrix concentration bounds,
e.g., \cite{AW,Oliviera,RV07,Tropp11}.
Consequently, these matrix inequalities have so far not been combined with dependent rounding.

We prove the first result showing that matrix concentration bounds
are usable in a dependent rounding scenario.
Our technique is not based on negative correlation, but rather the
fortuitous interaction between pipage rounding and various pessimistic estimators.
In particular, we show that Tropp's matrix Chernoff bound \cite{Tropp11}
has a pessimistic estimator that decreases monotonically under pipage rounding.
As a consequence, we can extend the reach of pipage rounding
from soft constraints that are linear inequalities to soft constraints that are
linear matrix inequalities.
Our proof uses non-trivial techniques from matrix analysis and complex analysis;
in particular, we prove a new variant of Lieb's concavity theorem.

\subsection{Motivation and Results}

One key area where our techniques yield new results is for \emph{thin spanning trees}.
These are intriguing objects in graph theory that relate to
foundational topics, such as nowhere-zero flows \cite{Goddyn},
and the asymmetric traveling salesman problem \cite{AGMOS}.
Given a graph $G$ on $n$ nodes, a spanning tree $T$ of $G$ is \emph{$\alpha$-thin} if, for every
cut, the number of
edges of $T$ crossing the cut is at most $\alpha$ times the number of edges of $G$ crossing
the cut.
It has been conjectured that any graph with connectivity $\connec$ has an $f(\connec)$-thin
spanning tree where $f(\connec) = O(1/\connec)$.
This would imply a constant factor approximation algorithm for the asymmetric traveling salesman
problem~\cite{OS}. 
Asadpour et al.~\cite{AGMOS} give a randomized algorithm to find a spanning tree that is $O(\frac{\log n}{\connec \log \log n})$-thin. 
Later Chekuri et al.~\cite{CVZArxiv,CVZFOCS} gave a simpler algorithm using randomized pipage rounding
or swap rounding.

A \emph{spectrally-thin} spanning tree is a stronger notion that is naturally motivated by work on
spectral sparsification \cite{SS08,BSS09}. 
A spanning tree $T$ is $\alpha$-spectrally-thin if $L_T \preceq \alpha L_G$, where $L_G$ refers to the Laplacian of $G$, and $\preceq$ to the L\"owner ordering of Hermitian matrices.
In \Section{thin}, we show a result on spectrally thin trees that strongly mirrors the result of
Asadpour et al.

\begin{theorem}\TheoremName{specthin}
There is a deterministic, polynomial-time algorithm that given any graph on $n$ nodes where every edge has effective conductance at least $\conduc$, constructs a 
$O(\frac{\log n}{\conduc \log \log n})$-spectrally-thin spanning subtree.
\end{theorem}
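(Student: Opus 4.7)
The plan is to reduce the theorem to the matrix Chernoff bound for pipage rounding that the paper establishes via concavity of a Lieb-type pessimistic estimator. Define $X_e := L_G^{\dagger/2} L_e L_G^{\dagger/2}$, where $L_e = b_e b_e\transpose$ is the edge Laplacian and $L_G^{\dagger}$ denotes the Moore--Penrose pseudoinverse. Each $X_e$ is PSD of rank one with operator norm $\norm{X_e} = b_e\transpose L_G^{\dagger} b_e = R_{\mathrm{eff}}(e) \leq 1/\conduc$, the last inequality being the conductance hypothesis. Moreover $\sum_e X_e = \Pi$, the orthogonal projection onto $\image(L_G)$. In this notation the goal is equivalent to producing a spanning tree $T$ with $\lmax(\sum_{e\in T} X_e) \leq O(\log n/(\conduc \log\log n))$.

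Next I would choose the fractional point $x_e := R_{\mathrm{eff}}(e)$ for every edge $e$. By the electrical interpretation of Kirchhoff's matrix--tree theorem, these are precisely the edge marginals of the uniform random spanning tree of $G$, so $x$ lies in the base polytope of the graphic matroid. Because $R_{\mathrm{eff}}(e) \leq 1/\conduc$ for every edge,
\[
  \sum_e x_e L_e \;\preceq\; \frac{1}{\conduc}\, L_G,
  \qquad\text{equivalently}\qquad
  \sum_e x_e X_e \;\preceq\; \frac{1}{\conduc}\, \Pi.
\]
Both the per-edge bound $R = 1/\conduc$ and the mean bound $\mu = 1/\conduc$ required by Tropp's matrix Chernoff inequality therefore hold, with the favourable ratio $\mu/R = 1$.

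I would then invoke the paper's matrix Chernoff bound for pipage rounding: because the pessimistic estimator is shown to be concave along each pipage swap (via the new Lieb-type concavity theorem), Tropp's bound survives in the dependent setting and, starting from $x$ in the spanning tree polytope, yields a random spanning tree $T$ satisfying
\[
  \prob{\lmax\bigl(\textstyle\sum_{e\in T} X_e\bigr) \geq \alpha/\conduc} \;\leq\; n \cdot \bigl(e^{\alpha-1}/\alpha^{\alpha}\bigr),
\]
which is below $1$ once $\alpha = \Theta(\log n/\log\log n)$. Conjugating by $L_G^{1/2}$ gives $L_T \preceq O(\log n/(\conduc \log\log n))\cdot L_G$, the claimed spectral thinness. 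To make the algorithm deterministic I would derandomize with the method of conditional expectations using that same pessimistic estimator: concavity along each swap guarantees that at every step one of the two deterministic outcomes does not increase the estimator, and this choice is computable in polynomial time; iterating over the $O(m)$ pipage steps yields a spanning tree whose estimator value is bounded by the value at the starting point $x$, which already certifies the required bound. The main conceptual obstacle lies in the matrix Chernoff inequality under pipage rounding itself, the central technical development of the paper; the present theorem is then essentially the observation that the choice $x_e = R_{\mathrm{eff}}(e)$ lets the effective-conductance hypothesis simultaneously control both the per-edge norm and the mean.
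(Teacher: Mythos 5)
Your proposal is correct and follows essentially the same route as the paper, which obtains this result by applying \Theorem{main} (equivalently \Corollary{matrixconcentration} after conjugating by $L_G^{+/2}$) to the graphic matroid with the effective resistances as the fractional starting point, so that the width bound $R$ and the mean bound $\mu$ are both $1/\conduc$ and the estimator-based deterministic pipage rounding gives the $O(\log n/(\conduc\log\log n))$ guarantee. The paper's appendix phrases the same argument via \Theorem{isotropic} using the normalized vectors $w_e = L_G^{+/2}(e_u-e_v)/\sqrt{R_e}$ with marginals $R_e/(n-1)$, but this is only a cosmetic rescaling of your calculation.
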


This spectral notion of thinness seems to be an important one, as the recent breakthrough of Marcus et al.~\cite{MSS} implies that $O(1/\kappa)$-spectrally-thin trees exist. Details of this connection are given in \Appendix{kadison}. 
It is unknown if similar techniques can show that $O(1/k)$-thin trees exist. 
The best known algorithmic construction of spectrally-thin trees is still \Theorem{specthin}.

This result is a special case of a result in a more abstract geometric setting.
Suppose $V = \set{v_1,\ldots,v_m}$ are unit vectors in $\ell_2^n$ for which
$\sum_{i=1}^m v_i v_i \transpose$ is a multiple of the identity.
Does there exist a subset $V_B = \setst{ v_i }{ i \in B }$
that is a \emph{basis} of $\bR^n$ 
and for which the maximum eigenvalue of $\sum_{i \in B} v_i v_i \transpose$ is small?
The maximum eigenvalue is 1 if and only if $V_B$ is orthonormal,
but an arbitrary $V$ need not contain an orthonormal basis.
Again, the breakthrough of Marcus et al.~\cite{MSS} yields a non-constructive proof
of a basis with maximum eigenvalue $O(1)$; see \Appendix{kadison}.
In \Section{isotropic}, we show how to find in polynomial time a basis $V_B \subseteq V$
for which the maximum eigenvalue of $\sum_{i \in B} v_i v_i \transpose$ is $O(\log n / \log \log n)$.
Previous constructive techniques \cite{AW,Oliviera,RV07,Tropp11} only provide a bound of $O(\log n)$.

Our geometric result also relates to the \newterm{column subset selection} problem in numerical linear algebra \cite{BMD,Tropp09,BDM,DR} 
which seeks to ``approximate'' a matrix $A$ by a small subset of its columns,
under various notions of approximation.
Define the \newterm{stable rank} of $A$ to be 
the Frobenius norm divided by the spectral norm, all squared;
this roughly captures the rank of $A$, ignoring negligibly small singular values.
In numerical linear algebra \cite{BMD,BDM,DR}, the number of columns chosen
is typically much larger than the stable rank.
The operator theory community considers similar questions \cite{BT87,BT91,SSRI,Tropp09},
although the number of columns selected is typically much smaller than the stable rank.
In \Section{CSS}, 
we show that one can efficiently select a \emph{linearly independent} set of columns of size \emph{equal}
to the stable rank, while carefully controlling the maximum singular value.

\subsection{Techniques}

Our results are based on the pipage rounding technique \cite{AS,Sri01,GandhiKPS06,CCPV},
which has had several interesting uses in the recent literature.
Deterministic and randomized forms of pipage rounding exist;
our result applies to both of those, as well as to swap rounding.
Typical uses of pipage rounding involve some of the following ideas.
\begin{itemize}
\item There are processes that iteratively move a point in a matroid base polytope towards
an extreme point, while modifying only two coordinates at a time.
The exchange properties of matroid bases ensure that this is possible.

\item One can define a ``potential function'' on the matroid base polytope
(e.g., the ad hoc functions defined in \cite{AS},
or the multilinear extension of a submodular function \cite{CCPV})
such that the function is concave or convex in directions that increase one coordinate and decrease
another.

\item The randomized form of pipage rounding \cite{Sri01,GandhiKPS06,CVZFOCS}
outputs a matroid base whose elements are negatively correlated
(more precisely, \textit{negative cylinder dependent}).
This ensures that linear functions of that base satisfy the same
Chernoff-type concentration bounds that are satisfied under independent rounding.
\end{itemize}

Our aim is to show that, for various concentration bounds,
the final extreme point satisfies the same bounds that
would be achieved by independent randomized rounding.
For Chernoff bounds this follows from negative correlation,
but for other bounds such a result was not previously known.
\begin{itemize}
\item Let $f$ be a monotone submodular function defined on the ground set of the matroid.
When using randomized pipage rounding, does
the value of $f$ at the final extreme point satisfy the same lower tail bound
as when using independent rounding?
Chekuri et al.~\cite{CVZArxiv} conjectured this to be true,
and they proved such a result when using swap rounding.

\item Let $f$ be a linear function 
mapping points in the matroid base polytope to symmetric matrices.
When using pipage rounding, can the value of $f$ at the final extreme point 
be guaranteed to satisfy the same eigenvalue bounds
as when using independent rounding?
\end{itemize}
It does not seem easy to answer these questions using negative correlation properties.

We present a new approach that leads to a positive answer to both of these questions.
In both cases, we can define a pessimistic estimator \cite{R}
that bounds the probability that randomized rounding
fails to achieve the desired concentration.
We show that these pessimistic estimators are concave 
when one element's sampling probability is increased
and another's is decreased by the same amount.
Due to that concavity property, the base output by randomized pipage rounding
satisfies the same concentration
bounds that would be satisfied under independent randomized rounding.
For the second question (matrix concentration), the pessimistic estimator
can be efficiently evaluated, so deterministic pipage rounding can also be used.

The concavity property of our pessimistic estimator 
for matrix concentration is a non-trivial fact.
We establish that fact by proving a new variant of Lieb's concavity theorem \cite{Lieb},
which is a ``masterpiece of matrix analysis''~\cite{BS}
with deep applications in mathematical physics and quantum information theory
\cite{Carlen,Effros,NC}.
Although there is much interest in the mathematical physics community on
extensions and variants of Lieb's theorem, 
our particular variant does not seem to appear in the literature.

\section{Preliminaries}
\SectionName{prelim}

Let $[m] = \set{1,\ldots,m}$.
For a set $S \subseteq [m]$, the vector $\chi(S) \in \bR^m$ is the characteristic vector of $S$.
For a vector $x \in \bR^m$ and a set $S \subseteq [m]$, 
the notation $x(S)$ denotes $\sum_{i \in S} x_i$.
The vector $e_i$ denotes the $i\th$ standard basis vector of the
finite dimensional vector space that is apparent from context.
The vector $\vec{1}$ denotes a vector whose components are all ones
and whose dimension is apparent from context.
We will use $\bRnneg$ and $\bZnneg$ to denote the nonnegative and positive reals respectively.

Let $\Sym$ denote the space of symmetric, real matrices of size $n \times n$.
Let $\Psd, \Pd \subset \Sym$ respectively denote the cones of positive semidefinite
and positive definite matrices.
Let $\Diag \subseteq \Sym$ denote the space of $n \times n$ diagonal matrices.
Let $\preceq$ denote the L\"owner partial order on symmetric matrices,
i.e., $A \preceq B$ iff $B-A \in \Psd$.
Similarly, $A \prec B$ iff $B-A \in \Pd$.
For $A \in \Sym$, let $\lmax(A)$ and $\lmin(A)$ respectively denote the largest and smallest
eigenvalues of $A$.
For $B \in \Sym$, let $B^+$ denote its Moore-Penrose pseudoinverse.
For $B \in \Psd$, let $B^{+/2} \in \Psd$ denote the positive semidefinite square root of $B^+$.
The image of $B$ is $\image B$
and the orthogonal projection onto $\image B$ is $I_{\image B}$.

The notation $\norm{\cdot}$ denotes the $\ell_2$ norm
for vectors and the $\ell_2$ operator norm for matrices.

If $\cD$ is a distribution, $X \sim \cD$ means that
the random variable $X$ has distribution $\cD$.

\section{Concavity of Pessimistic \tcskip Estimators}

In this section we state the known results on pipage rounding
and our concavity of pessimistic estimators technique.
We then apply this technique in three scenarios, of increasing difficulty:
(1) Chernoff bounds, (2) submodular functions, and (3) matrix concentration.
The latter two results are new,
and in particular are not known to follow using negative correlation.
This pessimistic estimator for matrix concentration
underlies all applications in \Section{applications}.

\subsection{Pipage Rounding}

Pipage rounding is a dependent rounding process
originating in works of Ageev, Srinivasan and Sviridenko \cite{AS,Sri01}.
Calinescu et al.\ \cite{CCPV} generalized it to a matroid setting.
We now state the main results of randomized and deterministic pipage rounding;
a proof sketch is given in \Appendix{pip}.

Let $\mat$ be a matroid on $[m]$ and let $P \subset \bR^m$ be its base polytope.
For all algorithmic applications in this paper, $\mat$ can be presented to the
algorithm via an independence oracle.
A function $g : P \rightarrow \bR$ is said to be \textit{concave under swaps} if
\begin{equation}
\EquationName{fconcave}
\forall p \in P, ~\forall a,b \in [m],
\quad
z \mapsto g\big(p + z (e_a\kern-1pt - e_b)\kern-1pt\big)
~~\text{is concave}.
\end{equation}

\begin{theorem}[Randomized Pipage Rounding] \iftc{\mbox{}\\}
\TheoremName{randpip}
There is a randomized, polynomial-time algorithm that,
given $x \in P$, outputs an extreme point $\hat{x}$ of $P$
with $\expect{\hat{x}}=x$ and such that,
for any $g$ concave under swaps, $\expect{g(\hat{x})} \leq g(x)$.
\end{theorem}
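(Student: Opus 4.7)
The plan is to exhibit the classical pipage rounding process as a martingale in which each step is a randomized two-coordinate swap, and then leverage the hypothesis that $g$ is concave under swaps via Jensen's inequality at every step.

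First I would describe one iteration of the algorithm. Starting from $x_0 = x$, while $x_t$ is not an extreme point of the matroid base polytope $P$, use the matroid exchange property to identify two fractional coordinates $a,b \in [m]$ such that the segment of points of the form $x_t + z(e_a - e_b)$ intersected with $P$ is a nondegenerate interval $[-\beta,\alpha]$ with $\alpha,\beta > 0$. Concretely, $x_t$ lies in the relative interior of some face of $P$; since $x_t$ is not a vertex, the matroid exchange axiom, applied to the support structure of $x_t$, produces such a swap direction. Then define $x_{t+1}$ by setting
\[
x_{t+1} \;=\; \begin{cases} x_t + \alpha(e_a - e_b) & \text{with probability } \beta/(\alpha+\beta), \\ x_t - \beta(e_a - e_b) & \text{with probability } \alpha/(\alpha+\beta). \end{cases}
\]
This choice is a direct calculation that gives $\expect{x_{t+1} \mid x_t} = x_t$, so by the tower property the output $\hat{x}$ satisfies $\expect{\hat{x}} = x$. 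Moreover, at least one coordinate of $x_{t+1}$ either becomes $0$ or $1$, or some other tight linear constraint defining a face of $P$ is newly activated, which drives polynomial-time termination since the face containing the current iterate strictly decreases in dimension. An independence oracle for $\mat$ suffices to find the swap direction and compute $\alpha,\beta$ in polynomial time.

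Next I would invoke the hypothesis. Define $h_t(z) = g\bigl(x_t + z(e_a - e_b)\bigr)$ on $[-\beta,\alpha]$; by \Equation{fconcave} applied at $p = x_t$, $h_t$ is concave. The randomized update chooses $z \in \{\alpha, -\beta\}$ with the two probabilities above, which satisfy $\expect{z} = 0$. Jensen's inequality for concave functions therefore yields
\[
\expect{\,g(x_{t+1}) \mid x_t\,} \;=\; \expect{\,h_t(z)\,} \;\le\; h_t(\expect{z}) \;=\; h_t(0) \;=\; g(x_t).
\]
Taking expectations and telescoping over all iterations $t = 0, 1, \ldots$ until the process terminates at $\hat{x}$ gives $\expect{g(\hat{x})} \le g(x)$, which is the desired inequality.

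The conceptually hard part is not the Jensen step, which is essentially automatic once the setup is in place, but rather the matroid-combinatorial argument that guarantees (i) a valid swap pair $(a,b)$ exists whenever $x_t$ is fractional, and (ii) the process terminates at an extreme point in polynomially many iterations. Both are standard consequences of the characterization of faces of the matroid base polytope via tight sets and the matroid exchange axiom; since the excerpt attributes these to \cite{AS,Sri01,CCPV} and states that a proof sketch appears in \Appendix{pip}, I would simply cite those developments and focus the exposition on the concavity-plus-Jensen chain, which is the ingredient that actually uses the hypothesis on $g$.
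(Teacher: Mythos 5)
Your proposal is correct and follows essentially the same route as the paper's proof sketch in \Appendix{pip}: the same two-point randomized swap with zero-drift probabilities (your $\beta/(\alpha+\beta)$, $\alpha/(\alpha+\beta)$ split matches the paper's $\frac{-\ell}{u-\ell}$, $\frac{u}{u-\ell}$), the same Jensen step using concavity under swaps, and the same martingale/telescoping argument, with the matroid-combinatorial existence of a swap pair and termination deferred to the standard chain-of-tight-sets argument as in the paper.
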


\begin{theorem}[Deterministic Pipage Rounding] \iftc{\mbox{}\\}
\TheoremName{detpip}
There is a deterministic, polynomial-time algorithm that, given
$x \in P$ and a value oracle for a function $g$ that is concave under swaps,
outputs an extreme point $\hat{x}$ of $P$ with $g(\hat{x}) \leq g(x)$.
\end{theorem}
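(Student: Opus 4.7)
The plan is to use essentially the same iterative swap procedure that underlies \Theorem{randpip}, derandomized in the style of the method of conditional expectations: instead of flipping a biased coin to choose between the two endpoints of a swap, choose deterministically the endpoint with the smaller $g$ value.

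Concretely, initialize $y \leftarrow x$ and repeat the following until $y$ is an extreme point of $P$. First, find a swap pair: distinct indices $a, b \in [m]$ with $y_a, y_b \in (0,1)$ together with a maximal interval $[z^-, z^+]$ satisfying $z^- < 0 < z^+$ and $y + z(e_a - e_b) \in P$ for all $z \in [z^-, z^+]$. Second, query $g$ at the two endpoints $y + z^-(e_a - e_b)$ and $y + z^+(e_a - e_b)$, and replace $y$ by whichever endpoint has smaller $g$ value. At termination, return $y$.

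For the first step, the existence of a valid swap pair is a standard combinatorial lemma about matroid base polytopes: the family of sets tight at $y$ is laminar, and within any atom of this family the matroid exchange axiom supplies a feasible swap direction; such a pair, together with $z^\pm$, can be located using polynomially many independence oracle calls. For the second step, since $y$ is a convex combination of the two swap endpoints, the concavity condition (\Equation{fconcave}) immediately yields
\[
\min\bigl(g(y + z^-(e_a - e_b)),\, g(y + z^+(e_a - e_b))\bigr) \leq g(y),
\]
so $g$ is non-increasing across iterations. For termination, a careful accounting shows that each swap either fixes at least one coordinate to $\{0,1\}$ or strictly enlarges the laminar family of tight sets (which has size $O(m)$); hence $\poly(m)$ iterations suffice.

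The main obstacle is the combinatorial/polyhedral analysis behind the first step: proving that a valid swap pair always exists at any non-integral $y \in P$ and that the tight set structure evolves so as to guarantee progress. This is standard in the matroid pipage rounding literature but genuinely relies on the exchange axiom and the lattice structure of tight sets in matroid base polytopes. The remaining ingredients --- concavity-driven deterministic endpoint selection and the iteration count bound --- are then essentially mechanical.
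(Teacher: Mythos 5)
Your proposal is correct and follows essentially the same route as the paper's own proof sketch (Appendix~\ref{app:pip}): find a feasible swap direction $e_a - e_b$ from the chain/laminar structure of tight sets, move to whichever endpoint of the feasible interval has smaller $g$ value (valid by concavity under swaps), and argue termination because each step adds a linearly independent tight constraint. The only cosmetic difference is that the paper phrases progress as moving to a strictly lower-dimensional face, giving at most $m$ iterations, while you phrase it via growth of the tight-set family; these are equivalent.
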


The swap rounding procedure of Chekuri et al.\ \cite{CVZArxiv,CVZFOCS}
also proves \Theorem{randpip} and \Theorem{detpip}.

For $x \in [0,1]^m$, let $\cD(x)$ be the product distribution on $\set{0,1}^m$ with marginals given by $x$,
i.e., $\probover{X \sim \cD(x)}{X_i=1} = x_i$.
Let $\cE \subseteq \set{0,1}^m$.
A \newterm{pessimistic estimator} \cite{R,SrinivasanNotes} for $\cE$
is a function $g : [0,1]^m \rightarrow \bR$
that
satisfies
\tcversion{\begin{gather}
\EquationName{PE}
\probover{X \sim \cD(x)}{X \in \cE} ~\leq~ g(x) \qquad\forall x \in [0,1]^m
    \\\nonumber
\min \{ g(x \!-\! x_i e_i), g(x \!+\!(1\!-\!x_i)e_i) \} \leq g(x)
~~\forall x \!\in\! [0,1]^m , i \!\in\! [m].
\end{gather}}
{\begin{gather}
\EquationName{PE}
\probover{X \sim \cD(x)}{X \in \cE} ~\leq~ g(x) \quad\quad\qquad\forall x \in [0,1]^m
    \\\nonumber
\min \big\{\: g(x - x_i e_i) ,\, g\big(x +(1 - x_i)e_i\big) \:\big\} ~\leq~ g(x)
~~~\forall x \in [0,1]^m,~ i \in [m].
\end{gather}
}
For uses of pessimistic estimators in derandomization,
the function $g$ is also required to be efficiently computable.
That is not required with their use in randomized pipage rounding
as $g$ is not even provided as input to the algorithm.

\begin{claim}[Concavity of Pessimistic Estimators]\iftc{\mbox{}\\}
\ClaimName{concPE}
Let $\cE \subseteq \set{0,1}^m$ and let $g$ be a function 
that satisfies \eqref{eq:PE} and is concave under swaps.

Suppose randomized pipage rounding is started at an initial point
$x_0 \in P$, and let $\hat{x}$ be the (random) extreme point of $P$ that is output.
If $g(x_0) \leq \epsilon$ then $\prob{ \hat{x} \in \cE } \leq \epsilon$.

Suppose deterministic pipage rounding 
is given oracle access to $g$ and an initial point $x_0 \in P$ with $g(x_0) < 1$.
Then the extreme point $\hat{x}$ of $P$ that is output satisfies $\hat{x} \not \in \cE$.
\end{claim}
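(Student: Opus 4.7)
The plan is to combine the pipage rounding theorems (\Theorem{randpip} and \Theorem{detpip}) with the first inequality in \eqref{eq:PE} applied at an extreme point. The key observation that makes the argument almost trivial is that when $\hat{x} \in \{0,1\}^m$, the product distribution $\cD(\hat{x})$ is a point mass at $\hat{x}$, so $\probover{X \sim \cD(\hat{x})}{X \in \cE} = \mathbf{1}[\hat{x} \in \cE]$. The pessimistic estimator condition therefore sandwiches the indicator of the bad event at any extreme point: $\mathbf{1}[\hat{x} \in \cE] \leq g(\hat{x})$.

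For the randomized case, I would invoke \Theorem{randpip} directly. Since $g$ is concave under swaps, that theorem gives $\expect{g(\hat{x})} \leq g(x_0) \leq \epsilon$. Combined with the observation above,
\begin{equation*}
\prob{\hat{x} \in \cE} \;=\; \expect{\mathbf{1}[\hat{x} \in \cE]} \;\leq\; \expect{g(\hat{x})} \;\leq\; g(x_0) \;\leq\; \epsilon,
\end{equation*}
which is the desired tail bound.

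For the deterministic case, I would invoke \Theorem{detpip} using $g$ itself as the swap-concave potential. The theorem yields an extreme point $\hat{x}$ with $g(\hat{x}) \leq g(x_0) < 1$. Applying $\mathbf{1}[\hat{x} \in \cE] \leq g(\hat{x}) < 1$ and the fact that the indicator takes values in $\{0,1\}$, it must equal $0$, so $\hat{x} \notin \cE$.

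The argument is essentially a two-line reduction, so I do not expect any real obstacle at this point; the substantive work has been pushed into the two pipage rounding theorems and into the definition \eqref{eq:PE} itself. The only minor thing to check is that the pessimistic estimator $g$, which is defined on all of $[0,1]^m$, can legitimately be restricted to the base polytope $P$ and used in \eqref{eq:fconcave} there; since $P \subseteq [0,1]^m$ and the swap direction $e_a - e_b$ keeps the path inside $[0,1]^m$ whenever it stays inside $P$, this restriction is harmless. All the genuine difficulty of the paper will lie in constructing such $g$ for the submodular and matrix-concentration settings and verifying swap-concavity there, not in this claim.
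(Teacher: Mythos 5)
Your proof is correct and is exactly the argument the paper has in mind: the paper omits the proof of \Claim{concPE}, stating it is an easy consequence of \Theorem{randpip} and \Theorem{detpip}, and your reduction (point-mass observation at $0/1$ extreme points plus $\expect{g(\hat{x})} \leq g(x_0)$, respectively $g(\hat{x}) \leq g(x_0) < 1$) is precisely that consequence.
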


We omit the proof of \Claim{concPE} as it is an easy consequence of
\Theorem{randpip} and \Theorem{detpip}.

\subsection{Chernoff bound}

Let us start with a simple result to illustrate the technique.
First we state the Chernoff bound in convenient notation.
We discuss only the right tail; an analogous result holds for the left tail.
Fix any vector $w \in [0,1]^m$.
For $t \in \bR$ and $\theta > 0$, define $g_{t,\theta} : [0,1]^m \rightarrow \bR$ by
$$
g_{t,\theta}(x) ~:=~
    e^{-\theta t} \cdot \expectover{X \sim \cD(x)}{ e^{\theta w \transpose X}}.
$$
Let $\mu = w \transpose x$ and $\delta \geq 0$. Then
\tcversion{
\begin{align}\nonumber
\probover{X \sim \cD(x)}{ w\transpose X \geq t }
    &~\leq~ \inf_{\theta>0} \, g_{t,\theta}(x)
    \\
    \EquationName{chernoffRHS}
    \text{and}\tcversion{\quad}{\qquad\qquad}
g_{(1+\delta) \mu ,\, \ln(1+\delta)}(x)
    &~\leq~ \Big( \frac{e^\delta}{(1+\delta)^{1+\delta}} \Big)^\mu.
\end{align}
}{
\begin{equation}
\EquationName{chernoffRHS}
    \probover{X \sim \cD(x)}{ w\transpose X \geq t } ~\leq~ \inf_{\theta>0} \, g_{t,\theta}(x) \qquad
    \text{and}\qquad g_{(1+\delta) \mu ,\, \ln(1+\delta)}(x) ~\leq~ \Big( \frac{e^\delta}{(1+\delta)^{1+\delta}} \Big)^\mu.
\end{equation}
}

The following claim is proven in \Appendix{concavity}.

\begin{claim}
\ClaimName{chernoffconcave}
$g_{t,\theta}$ is concave under swaps.
\end{claim}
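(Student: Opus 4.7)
The plan is to use the fact that $\mathcal{D}(x)$ is a product distribution to rewrite $g_{t,\theta}(x)$ as a product, and then observe that along any swap direction only two factors vary, producing a simple quadratic whose leading coefficient is non-positive.

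First I would write, using independence of the coordinates of $X \sim \mathcal{D}(x)$,
\[
g_{t,\theta}(x) ~=~ e^{-\theta t} \prod_{i=1}^m \bigl(1 + x_i(e^{\theta w_i} - 1)\bigr).
\]
Set $\alpha_i := e^{\theta w_i} - 1$; since $w_i \geq 0$ and $\theta > 0$ we have $\alpha_i \geq 0$ for all $i$. Also every factor $1 + x_i \alpha_i$ is strictly positive for $x_i \in [0,1]$.

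Next, fix $p \in P$ and $a,b \in [m]$, and define $h(z) := g_{t,\theta}\bigl(p + z(e_a - e_b)\bigr)$. Since only the $a$th and $b$th coordinates depend on $z$, all other factors collapse into a positive constant
\[
C ~:=~ e^{-\theta t} \prod_{i \neq a,b}\bigl(1 + p_i \alpha_i\bigr) ~\geq~ 0,
\]
and I can write $h(z) = C\bigl(1 + (p_a+z)\alpha_a\bigr)\bigl(1 + (p_b - z)\alpha_b\bigr)$. Expanding, the coefficient of $z^2$ is $-C \alpha_a \alpha_b \leq 0$, so $h$ is a concave (at worst affine) quadratic in $z$. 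Since $p, a, b$ were arbitrary, this verifies \eqref{eq:fconcave} and establishes that $g_{t,\theta}$ is concave under swaps.

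There is no real obstacle here; the argument is essentially a one-line calculation once independence is used to factor the expectation. The only point requiring slight care is recording that $\alpha_a, \alpha_b \geq 0$ (ensuring non-positivity, not just non-positivity up to sign) and that the remaining product $C$ is non-negative, both of which follow immediately from $w \in [0,1]^m$ and $\theta > 0$.
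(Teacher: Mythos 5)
Your argument is correct and is essentially the paper's own proof: the same factorization of the expectation over the product distribution, the same observation that only the $a$th and $b$th factors depend on $z$, and the same conclusion that the resulting quadratic in $z$ has non-positive leading coefficient since $e^{\theta w_a}-1, e^{\theta w_b}-1 \geq 0$. No gaps.
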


Consequently, \Claim{concPE} implies the following result.

\begin{corollary}
If randomized pipage rounding starts at $x_0 \in P$
and outputs the extreme point $\hat{x}$ of $P$ then,
$\forall w \in [0,1]^m ,\, \delta \geq 0$,
\begin{equation}
\EquationName{pipageChernoffRHS}
\prob{ w \transpose \hat{x} \geq (1+\delta) \mu }
    ~\leq~ \Big( \frac{ e^\delta }{ (1+\delta)^{1+\delta} } \Big)^{\mu}
\end{equation}
where $\mu = w \transpose x_0$.
Furthermore, if this right-hand side is strictly less than $1$,
then deterministic pipage rounding outputs an extreme point $\hat{x}$ of $P$
with $w \transpose \hat{x} < (1+\delta) \mu$.
\end{corollary}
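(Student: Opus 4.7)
The plan is to apply \Claim{concPE} to the event $\cE = \setst{ y \in \set{0,1}^m }{ w\transpose y \geq (1+\delta)\mu }$ using the pessimistic estimator $g := g_{t,\theta}$ with $t = (1+\delta)\mu$ and $\theta = \ln(1+\delta)$. Since $\mu = w\transpose x_0$, this choice matches the parameters in~\eqref{eq:chernoffRHS}, which will provide both the bound on $\prob{\cE}$ and the bound on $g(x_0)$ needed by \Claim{concPE}.

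First I would verify that $g$ satisfies both conditions of~\eqref{eq:PE}. The probability bound is exactly the left part of~\eqref{eq:chernoffRHS} specialized to $\theta = \ln(1+\delta) > 0$ (for $\delta = 0$ the corollary is trivial, as the right-hand side is $1$). For the second condition, I would use that
\begin{equation*}
    g(x) ~=~ e^{-\theta t} \prod_{i=1}^m \big( x_i e^{\theta w_i} + (1 - x_i) \big)
\end{equation*}
is multilinear, and therefore affine in each coordinate $x_i$. This yields the identity
\begin{equation*}
    g(x) ~=~ (1-x_i)\, g(x - x_i e_i) \,+\, x_i\, g\big(x + (1-x_i)\, e_i\big),
\end{equation*}
exhibiting $g(x)$ as a convex combination of the two extremal values, and hence at least their minimum, which is precisely what~\eqref{eq:PE} requires.

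Concavity of $g$ under swaps is supplied directly by \Claim{chernoffconcave}. Moreover, the right part of~\eqref{eq:chernoffRHS} gives $g(x_0) \leq \epsilon$, where $\epsilon := \big(e^\delta / (1+\delta)^{1+\delta}\big)^\mu$. Invoking the randomized half of \Claim{concPE} with this $\epsilon$ then immediately yields $\prob{ \hat{x} \in \cE } \leq \epsilon$, which is exactly~\eqref{eq:pipageChernoffRHS}. For the deterministic statement, whenever the right-hand side of~\eqref{eq:pipageChernoffRHS} is strictly less than $1$ we have $g(x_0) < 1$, so the deterministic half of \Claim{concPE} guarantees $\hat{x} \notin \cE$, i.e., $w\transpose \hat{x} < (1+\delta)\mu$.

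There is no substantive obstacle here: the argument is essentially a bookkeeping check that the classical Chernoff exponential moment, viewed as a function on $[0,1]^m$, plugs cleanly into the framework of \Claim{concPE}. The only non-trivial ingredient—concavity of $g_{t,\theta}$ in the swap directions $e_a - e_b$—is already isolated in \Claim{chernoffconcave} and will be verified in \Appendix{concavity}.
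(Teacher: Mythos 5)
Your proposal is correct and matches the paper's (implicit) proof: the corollary is obtained exactly by instantiating \Claim{concPE} with the estimator $g_{t,\theta}$ for $t=(1+\delta)\mu$, $\theta=\ln(1+\delta)$, using the two inequalities in \eqref{eq:chernoffRHS} and the swap-concavity from \Claim{chernoffconcave}. Your extra verification of the second condition of \eqref{eq:PE} via multilinearity is the standard bookkeeping the paper leaves unstated, and it is correct.
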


The key point is that the right-hand sides of \eqref{eq:chernoffRHS} and
\eqref{eq:pipageChernoffRHS} are the same.
Chekuri et al.~\cite{CVZArxiv} proved this fact using negative correlation of $\hat{x}$,
generalizing a result of Srinivasan~\cite{Sri01}.

\subsection{Submodular functions}

Chekuri et al.~\cite[Theorem 1.3]{CVZArxiv} prove an analog of the Chernoff bound
for concentration of submodular functions under independent rounding.
They show that the same bound remains true under swap rounding~\cite[Theorem 1.4]{CVZArxiv} and
ask whether it remains true under pipage rounding.

Formally, let $f : \set{0,1}^m \rightarrow \bR$
be a non-negative, monotone, submodular function with marginals in $[0,1]$.
The \textit{multilinear extension} of $f$ is $F : [0,1]^m \rightarrow \bR$ with
$ F(x) := \expectover{X \sim \cD(x)}{f(X)} $.
For $t \in \bR$ and $\theta<0$, define $g_{t,\theta} : [0,1]^m \rightarrow \bR$ by
\[
   g_{t,\theta}(x) ~:=~ e^{-\theta t} \cdot \expectover{X \sim \cD(x)}{ e^{\theta f(X)} }.
   \]
The left tail bound of Chekuri et al.\ is: with $\mu = F(x) ,\, \delta \in [0,1)$,
\tcversion{
\begin{align*}
\probover{X \sim \cD(x)}{ f(X) \leq t }
    &~\leq~ \inf_{\theta<0} \, g_{t,\theta}(x) \\
    \text{and}\tcversion{\quad}{\qquad\qquad}
g_{(1-\delta) \mu ,\, \ln(1-\delta)}(x)
    &~\leq~ \exp( - \delta^2 \mu / 2 ).
\end{align*}
}{
\[
\probover{X \sim \cD(x)}{ f(X) \leq t }
    ~\leq~ \inf_{\theta<0} \, g_{t,\theta}(x) 
    \qquad\text{and}\qquad
g_{(1-\delta) \mu ,\, \ln(1-\delta)}(x)
    ~\leq~ \exp( - \delta^2 \mu / 2 ).
\]
}

The following claim is proven in \Appendix{concavity}.

\begin{claim}
\ClaimName{submodconcave}
$g_{t,\theta}$ is concave under swaps.
\end{claim}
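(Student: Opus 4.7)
The plan is to reduce concavity under swaps to a pointwise inequality and then to verify that inequality using monotonicity and submodularity of $f$ together with convexity and monotonicity of $t \mapsto e^{\theta t}$. Since $e^{-\theta t}$ is a positive constant, it suffices to show that $h(x) := \expectover{X \sim \cD(x)}{e^{\theta f(X)}}$ is concave under swaps. Note that $h$ is the multilinear extension of $\phi := e^{\theta f}$, and hence affine in each coordinate separately.

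Fix $p \in P$ and indices $a \neq b \in [m]$, and consider $H(z) := h(p + z(e_a - e_b))$. Conditioning on the coordinates outside $\{a,b\}$ yields a random set $S \subseteq [m] \setminus \{a,b\}$ with marginals $p_{-ab}$, and $h$ decomposes as
\[
h(x) ~=~ \expectover{S}{\, x_a x_b\, \phi_{ab} + x_a(1-x_b)\, \phi_a + (1-x_a) x_b\, \phi_b + (1-x_a)(1-x_b)\, \phi_\emptyset \,},
\]
where $\phi_T := \phi(S \cup T)$. Substituting $x_a \mapsto p_a + z$ and $x_b \mapsto p_b - z$ and collecting the coefficient of $z^2$ gives
\[
[z^2]\, H(z) ~=~ \expectover{S}{\, \phi_a + \phi_b - \phi_{ab} - \phi_\emptyset\, }.
\]
Concavity of $H$ therefore reduces to the pointwise ``supermodularity'' inequality $\phi(S \cup \{a,b\}) + \phi(S) \geq \phi(S \cup \{a\}) + \phi(S \cup \{b\})$ for every set $S$.

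For the pointwise step, set $\alpha := f(S)$, $\beta := f(S \cup \{a\})$, $\gamma := f(S \cup \{b\})$, $\delta := f(S \cup \{a,b\})$. Monotonicity of $f$ gives $\alpha \leq \min(\beta,\gamma) \leq \max(\beta,\gamma) \leq \delta$, and submodularity gives $\alpha + \delta \leq \beta + \gamma$. The map $t \mapsto e^{\theta t}$ is convex and, since $\theta<0$, strictly decreasing. First, slide $\beta$ and $\gamma$ downward within $[\alpha,\delta]$ to values $\beta',\gamma'$ satisfying $\beta' + \gamma' = \alpha + \delta$; this is feasible because the excess $(\beta+\gamma) - (\alpha+\delta)$ is at most $(\beta-\alpha)+(\gamma-\alpha)$, and since $\phi$ is decreasing the slide only increases the right-hand side: $e^{\theta \beta} + e^{\theta \gamma} \leq e^{\theta \beta'} + e^{\theta \gamma'}$. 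Second, the pair $(\alpha,\delta)$ majorizes $(\beta',\gamma')$ (same sum, wider spread), so the standard two-point majorization inequality for the convex function $t \mapsto e^{\theta t}$ yields $e^{\theta \alpha} + e^{\theta \delta} \geq e^{\theta \beta'} + e^{\theta \gamma'}$, completing the chain.

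The main difficulty is mild: it lies in handling the slack in submodularity (the case $\beta+\gamma > \alpha+\delta$), which is exactly where monotonicity of $f$ and the sign condition $\theta<0$ enter the argument. Once one reduces to the tight case $\beta+\gamma = \alpha+\delta$, the desired inequality is a standard two-variable majorization fact for the convex exponential. In contrast to the matrix analogue treated in the next subsection, no deep tool (like a variant of Lieb's theorem) is required here.
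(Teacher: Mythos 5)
Your proof is correct and follows essentially the same route as the paper: both arguments reduce concavity under swaps to pointwise supermodularity of $e^{\theta f}$, which is exactly the paper's \Claim{supermodular} (a non-increasing convex function composed with a monotone submodular function is supermodular), applied through the multilinear extension. The only cosmetic differences are that you compute the $z^2$ coefficient of the swapped multilinear extension directly where the paper cites Calinescu et al.\ for the sign of the mixed second partials, and you verify the composition fact by a slide-and-majorize argument rather than the paper's chord-slope inequality.
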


\Claim{concPE} implies the following result,
answering an open question of Chekuri et al.\ \cite[p. 3]{CVZArxiv}.

\begin{corollary}
If randomized pipage rounding starts at $x_0 \in P$
and outputs the extreme point $\hat{x}$ of $P$ then,
letting $\mu = F(x_0)$,
we have 
$ \prob{ f(\hat{x}) \leq (1-\delta) \mu } \leq \exp(-\delta^2 \mu / 2)$.
\end{corollary}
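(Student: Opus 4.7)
The plan is to apply \Claim{concPE} (Concavity of Pessimistic Estimators) with an appropriate bad event and estimator. Set $t := (1-\delta)\mu$ and $\theta := \ln(1-\delta) < 0$, and let $\cE := \{y \in \{0,1\}^m : f(y) \leq t\}$, so that the event $\{\hat{x} \in \cE\}$ is precisely $\{f(\hat{x}) \leq (1-\delta)\mu\}$.

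My first step is to verify that $g := g_{t,\theta}$ is a pessimistic estimator for $\cE$, i.e.\ that it satisfies \eqref{eq:PE}. The probability bound is immediate from Markov's inequality, using $\theta < 0$:
\[
\probover{X \sim \cD(x)}{f(X) \leq t} \;=\; \probover{X \sim \cD(x)}{e^{\theta f(X)} \geq e^{\theta t}} \;\leq\; e^{-\theta t}\,\expectover{X \sim \cD(x)}{e^{\theta f(X)}} \;=\; g(x).
\]
For the second (coordinate-wise minimum) condition, note that $g$ is affine in each coordinate $x_i$, since $\cD(x)$ is a product distribution and the quantity being averaged depends on $X_i \in \{0,1\}$ only through $f$. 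Hence $g(x)$ equals the convex combination $(1-x_i)\,g(x - x_i e_i) + x_i\,g(x + (1-x_i)e_i)$ of the two boundary values, which is at least their minimum.

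With $g$ established as a pessimistic estimator, \Claim{submodconcave} provides that $g$ is concave under swaps, and the second half of the displayed Chekuri et al.\ bound (instantiated at $t = (1-\delta)\mu$, $\theta = \ln(1-\delta)$) gives $g(x_0) \leq \exp(-\delta^2\mu/2)$. Invoking \Claim{concPE} with $\epsilon = \exp(-\delta^2\mu/2)$ then yields $\prob{\hat{x} \in \cE} \leq \exp(-\delta^2\mu/2)$, which is the stated corollary. All of the substantive content — the concavity of $g_{t,\theta}$ under swaps (\Claim{submodconcave}), and the bound on the moment generating function of the multilinear extension of a submodular function (from Chekuri et al.) — has already been assembled in the preceding claims, so no real obstacle arises here: the proof is simply the packaging of these ingredients through \Claim{concPE}.
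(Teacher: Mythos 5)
Your proof is correct and follows exactly the route the paper intends: instantiate $g_{t,\theta}$ at $t=(1-\delta)\mu$, $\theta=\ln(1-\delta)$, check \eqref{eq:PE} (Markov plus multilinearity of the product-distribution expectation), and combine \Claim{submodconcave} with the Chekuri et al.\ bound via \Claim{concPE}. The paper leaves these routine verifications implicit, so your write-up is simply a more explicit version of the same argument.
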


\noindent
Chekuri et al.~\cite[p.~583]{CVZFOCS} state that this fact
does not follow from negative correlation of $\hat{x}$.

\subsection{Matrix Concentration}

\onote{Modified}
Tropp~\cite{Tropp11}, improving on Ahlswede-Winter~\cite{AW} and Oliviera~\cite{Oliviera},  proves a beautiful analog of the Chernoff bound
for sums of independent random matrices.
We state a simplified form here.

\begin{theorem}
\TheoremName{tropp}
Let $M_1,\ldots,M_m \in \Psd$ satisfy $M_i \preceq R \cdot I$.
For $t \in \bR$ and $\theta>0$, define $g_{t,\theta} : [0,1]^m \rightarrow \bR$ by
$$
   g_{t,\theta}(x) ~:=~ e^{-\theta t} \cdot 
      \trace \exp\Big(\sum_{i=1}^m \log \expectover{X \sim \cD(x)}{ e^{\theta X_i M_i} } \Big).
$$
Then, for $\mu \geq \norm{ \expectover{X \sim \cD(x)}{ \sum_i X_i M_i } }$ and $\delta \geq 0$,
\tcversion{
\begin{align*}
\probover{X \sim \cD(x)}{ \norm{\smallsum{i}{} X_i M_i} \geq t }
    &~\leq~ \inf_{\theta>0} \, g_{t,\theta}(x) \\
    \text{and}\tcversion{\quad}{\qquad\qquad\quad}
g_{(1+\delta) \mu ,\, \ln(1+\delta)}(x)
    &~\leq~ n \cdot \Big( \frac{e^\delta}{(1+\delta)^{1+\delta}} \Big)^{\mu/R}.
\end{align*}
}{
$$
\probover{X \sim \cD(x)}{ \norm{\smallsum{i}{} X_i M_i} \geq t }
    ~\leq~ \inf_{\theta>0} \, g_{t,\theta}(x)
    \qquad\text{and}\qquad
g_{(1+\delta) \mu ,\, \ln(1+\delta)}(x)
    ~\leq~ n \cdot \Big( \frac{e^\delta}{(1+\delta)^{1+\delta}} \Big)^{\mu/R}.
$$
}
\end{theorem}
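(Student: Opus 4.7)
The plan is to follow the Laplace-transform/Markov inequality argument that originated with Ahlswede--Winter and was streamlined by Tropp. The two inequalities in the statement are handled separately.

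For the tail probability bound, since each $M_i \succeq 0$ the sum $S := \sum_i X_i M_i$ is positive semidefinite, so $\|S\| = \lmax(S)$. For any $\theta > 0$, monotonicity of $\exp$ on the spectrum together with $\lmax(A) \leq \trace(A)$ for PSD $A$ yield
\[
\prob{\lmax(S) \geq t} \leq \prob{\trace e^{\theta S} \geq e^{\theta t}} \leq e^{-\theta t}\, \expect{\trace e^{\theta S}}
\]
by Markov. The crucial (and only nontrivial) step is subadditivity of the matrix cumulant generating function for independent random matrices,
\[
\expect{\trace\exp\bigl(\smallsum{i}{} \theta X_i M_i\bigr)} \leq \trace\exp\bigl(\smallsum{i}{} \log\expect{e^{\theta X_i M_i}}\bigr),
\]
which I would prove by peeling off one $X_i$ at a time and invoking Lieb's concavity theorem in the form that $A \mapsto \trace\exp(H + \log A)$ is concave on $\Pd$. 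Taking the infimum over $\theta$ then yields the first inequality.

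For the second part, I would reduce to scalar inequalities via the L\"owner order. Since $X_i$ is Bernoulli with mean $x_i$, $\expect{e^{\theta X_i M_i}} = (1-x_i) I + x_i e^{\theta M_i}$. Convexity of $s \mapsto e^{\theta s}$ on $[0,R]$ gives the secant bound $e^{\theta s} \leq 1 + \frac{s}{R}(e^{\theta R}-1)$, which when applied spectrally to $0 \preceq M_i \preceq R \cdot I$ yields $e^{\theta M_i} \preceq I + \frac{e^{\theta R}-1}{R} M_i$. Hence $\expect{e^{\theta X_i M_i}} \preceq I + x_i \frac{e^{\theta R}-1}{R} M_i$, and applying the spectral form of $\log(1+s) \leq s$ gives $\log \expect{e^{\theta X_i M_i}} \preceq x_i \frac{e^{\theta R}-1}{R} M_i$. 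Summing, using monotonicity of $\trace \exp(\cdot)$ in the L\"owner order, and the bound $\trace e^A \leq n \cdot e^{\lmax(A)}$, one arrives at
\[
\trace \exp\bigl(\smallsum{i}{} \log \expect{e^{\theta X_i M_i}}\bigr) \leq n \exp\bigl(\tfrac{e^{\theta R}-1}{R}\,\mu\bigr),
\]
using $\lmax(\sum_i x_i M_i) \leq \mu$. Substituting the prescribed values of $t$ and $\theta$ (the choice that collapses the scalar prefactor to $(e^\delta/(1+\delta)^{1+\delta})^{\mu/R}$) then produces the stated bound.

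The only hard step is the matrix CGF subadditivity in the first part: it is precisely where Lieb's concavity is required, and it is the reason naive commutative Chernoff-style manipulations of the matrix exponential fail. Everything else is routine L\"owner monotonicity of operator functions together with scalar convexity and optimization.
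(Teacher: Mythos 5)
Your outline is the standard Ahlswede--Winter/Tropp argument, and that is the appropriate route here: the paper never proves \Theorem{tropp} itself but cites Tropp~\cite{Tropp11} (its own contribution is \Lemma{pipage}/\Theorem{liebvariant}, not this bound). Your two substantive steps are correct and are exactly how the cited source proceeds: the Markov/trace-exponential reduction followed by the subadditivity $\expect{\trace\exp(\sum_i \theta X_i M_i)} \leq \trace\exp\big(\sum_i \log\expect{e^{\theta X_i M_i}}\big)$, proved by peeling off one coordinate at a time and applying Jensen to Lieb's concave map $A \mapsto \trace\exp(H+\log A)$; and the Bernoulli moment estimate $\log\expect{e^{\theta X_i M_i}} \preceq x_i\,\frac{e^{\theta R}-1}{R}\,M_i$ via the chord bound, $\log(1+s)\leq s$, operator monotonicity of $\log$, and monotonicity of $\trace\exp$ in the L\"owner order.

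The one step that does not close as written is the final substitution. Your chain yields $g_{t,\theta}(x) \leq n\exp\big({-\theta t} + \frac{e^{\theta R}-1}{R}\mu\big)$, and the value of $\theta$ that collapses this to $n\big(e^\delta/(1+\delta)^{1+\delta}\big)^{\mu/R}$ at $t=(1+\delta)\mu$ is $\theta=\ln(1+\delta)/R$, not the prescribed $\theta=\ln(1+\delta)$; with the prescribed $\theta$ you obtain $n\exp\big(\mu\big[\frac{(1+\delta)^R-1}{R}-(1+\delta)\ln(1+\delta)\big]\big)$, which exceeds the claimed bound whenever $R\neq 1$. Indeed the second inequality as literally stated fails for $R\neq 1$: with $n=m=1$, $M_1=2$ (so $R=2$), $x_1=0.1$, $\mu=0.2$, $\delta=1$, one computes $g_{0.4,\ln 2}(x)=2^{-0.4}\cdot 1.3\approx 0.985$ while $n\,(e/4)^{\mu/R}\approx 0.962$. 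So you should either normalize to $R=1$ by rescaling the $M_i$ and $\mu$ (which is how the paper uses the bound --- every application takes $R=1$), or run your last step with $\theta=\ln(1+\delta)/R$; with either fix your argument goes through verbatim.
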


The following is our main lemma on pessimistic estimators.
The proof is in \Appendix{concavity}.

\begin{lemma}
\LemmaName{pipage}
$g_{t,\theta}$ is concave under swaps.
\end{lemma}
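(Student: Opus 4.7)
The plan is to reduce the concavity-under-swaps property to a new variant of Lieb's concavity theorem, and then attack that variant via complex interpolation. Since $X_i \in \set{0,1}$ with $\prob{X_i=1}=x_i$, we have $\expectover{X \sim \cD(x)}{e^{\theta X_i M_i}} = (1-x_i) I + x_i e^{\theta M_i}$; call this matrix $A_i(x_i)$. It is affine in $x_i$ and lies in $\Pd$ as a convex combination of $I$ and $e^{\theta M_i} \succ 0$. For any swap direction $(e_a-e_b)$ based at $p \in P$, concavity under swaps of $g_{t,\theta}$ amounts to concavity in $z$ of
\[
    h(z) ~:=~ \trace \exp\bigl(\log A_a(p_a+z) + \log A_b(p_b-z) + K\bigr),
\]
with $K := \sum_{i\neq a,b} \log A_i(p_i)$ a fixed Hermitian matrix. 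The two curves $z \mapsto A_a(p_a+z)$ and $z \mapsto A_b(p_b-z)$ are both affine in $z$ and, critically, move in \emph{opposite} directions through $\Pd$.

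Classical Lieb concavity asserts that $A \mapsto \trace \exp(H + \log A)$ is concave for fixed Hermitian $H$; this handles one logarithm but not two. Naive joint concavity of $(A,B) \mapsto \trace \exp(H + \log A + \log B)$ fails already in the scalar case, since the product $ab$ is bilinear and not concave. However, the scalar analogue of $h$, namely $(p_a+z)(p_b-z)$, \emph{is} concave in $z$ precisely because the two factors move in opposite directions. The matrix generalization I would need is: for Hermitian $H$, $A_0,B_0 \in \Pd$, and $U,V \in \Psd$, the function
\[
    z ~\longmapsto~ \trace \exp\bigl(H + \log(A_0 + zU) + \log(B_0 - zV)\bigr)
\]
is concave on the interval where both matrix arguments of $\log$ remain in $\Pd$. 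This is the new variant of Lieb's theorem announced in the introduction.

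To establish the variant I would mimic Lieb's original complex-interpolation proof. Extend the two logarithms to holomorphic functions on a complex strip, e.g.\ via the integral representation $\log X = \int_0^\infty \bigl(\tfrac{1}{s+1} - (sI+X)^{-1}\bigr)\,ds$ or via analytic continuation of $X^\zeta$ at $\zeta=0$, and continue the two logarithms with \emph{opposite} orientations in $\zeta$ so as to respect the opposite motion of the two affine curves. The main steps are then: (i) verify holomorphy of the extension throughout the strip; (ii) bound the extension on the boundary of the strip using operator monotonicity of $\log$ together with classical Lieb and Araki--Lieb--Thirring trace inequalities; and (iii) apply the Hadamard three-lines theorem to obtain log-convexity of an auxiliary quantity whose second derivative at the real axis equals $-h''(0)$, yielding concavity of $h$.

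The main obstacle will be isolating the correct ``twist'' in the complex extension. Any symmetric continuation of the two logarithms would falsely imply joint concavity, so the argument must exploit the opposite orientations of the two affine curves in a genuinely new way, and must reduce to an operator-theoretic inequality that does not follow directly from previously known Lieb-type theorems. Once the extension is set up correctly, the remainder is a careful but routine trace-analytic computation; the conceptual novelty lies in pinpointing this asymmetric complex interpolation that captures the ``opposite direction'' structure of pipage swaps.
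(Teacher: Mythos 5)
Your reduction is exactly the paper's: you write $\expectover{X \sim \cD(x)}{e^{\theta X_i M_i}} = (1-x_i)I + x_i e^{\theta M_i} =: C_i$, observe that a swap moves the estimator along $z \mapsto \trace \exp\bigl(L + \log(C_a + zK_a) + \log(C_b - zK_b)\bigr)$ with $K_i = e^{\theta M_i} - I \succeq 0$, and the statement you isolate --- concavity of $z \mapsto \trace\exp\bigl(H + \log(A_0+zU) + \log(B_0 - zV)\bigr)$ for $H$ Hermitian, $A_0,B_0 \in \Pd$, $U,V \in \Psd$ --- is precisely \Theorem{liebvariant}. Up to that point your argument coincides with the paper's proof of \Lemma{pipage}.

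The genuine gap is the proof of that variant, which is where all the technical content lives. Your plan (holomorphic extension of the two logarithms on a strip, boundary bounds from known Lieb/Araki--Lieb--Thirring inequalities, Hadamard three-lines) is never carried out, and you yourself identify the missing ingredient: the asymmetric ``twist'' in the continuation that separates the true one-parameter concavity from the false joint-concavity statement. Without a concrete construction of that extension and verification of the boundary estimates, this is a research program, not a proof; nothing in the sketch rules out the same failure mode as the symmetric continuation. (A side point: the interpolation scheme is not ``Lieb's original'' argument --- Lieb proves concavity by a direct second-derivative analysis, and the complex-analytic route is Epstein's, which proceeds via Herglotz/Pick functions rather than three-lines.) The paper closes the gap in two ways worth comparing with your plan. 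First, in full generality, it adapts Epstein's method: one shows that $g(z) = z f(1/z)$ maps the open upper half-plane into itself, using that $0 \preceq \Im \log(A + Bz) \preceq \arg z$ for $A, B \succeq 0$, and then the Herglotz--Nevanlinna--Riesz representation gives an explicit formula forcing $f'' \leq 0$ near the origin (\Appendix{epstein}). Second --- and this is a simplification your reduction misses --- for \Lemma{pipage} the general variant is not even needed: in your own setup $C_i$ and $K_i$ commute, since both are polynomials in $M_i$, and the commuting case of the variant admits a more elementary proof by adapting Lieb's second-derivative computation with logarithmic-mean and AM--GM estimates (\Appendix{lieb}). Exploiting that commutativity would let you aim at a much easier target than the interpolation argument you outline.
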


Consequently, \Claim{concPE} implies the following result.

\begin{corollary}
\CorollaryName{matrixconcentration}
Let $P$ be a matroid base polytope and let $x_0 \in P$.
Let $M_1,\ldots,M_m \in \Psd$ satisfy $M_i \preceq R \cdot I$.
Let $\mu \geq \norm{\expectover{X \sim \cD(x_0)}{\sum_i X_i M_i}}$.
If randomized pipage rounding starts at $x_0$
and outputs the extreme point $\hat{x}=\chi(S)$ of $P$ then 
we have
\begin{equation}
\EquationName{matrixPE}
\prob{ \norm{\smallsum{i \in S}{} M_i } \geq (1+\delta) \mu }
    \:\leq\: n \cdot \Big( \frac{ e^\delta }{ (1+\delta)^{1+\delta} } \Big)^{\mu/R}.
\end{equation}
Furthermore, if this right-hand side is strictly less than $1$, then
deterministic pipage rounding outputs an extreme point $\hat{x} = \chi(S)$ of $P$ with
$\norm{\sum_{i \in S} M_i} < (1+\delta) \mu$.
\end{corollary}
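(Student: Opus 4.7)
The plan is to apply \Claim{concPE} directly, following the same plug-and-play pattern used in the Chernoff and submodular corollaries proved above. I would take the bad event to be
\[
\cE \:=\: \set{X \in \set{0,1}^m : \norm{\smallsum{i}{} X_i M_i} \geq (1+\delta)\mu}
\]
and the candidate estimator to be $g := g_{(1+\delta)\mu,\,\ln(1+\delta)}$, i.e., the function from \Theorem{tropp} with the Chernoff-style parameter choice $t = (1+\delta)\mu$ and $\theta = \ln(1+\delta)$.

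First I would verify the two hypotheses of \Claim{concPE}. Concavity under swaps is handed to us by \Lemma{pipage}, which is the paper's main technical contribution. For the pessimistic estimator property \eqref{eq:PE}, the first inequality is exactly the probability bound stated in \Theorem{tropp}, and the coordinate-wise min property follows from concavity of $g$ along each individual coordinate direction, which is a byproduct of the Lieb-type analysis underlying \Lemma{pipage}. Next, using the hypothesis $\mu \geq \norm{\expectover{X \sim \cD(x_0)}{\sum_i X_i M_i}}$, the second inequality in \Theorem{tropp} applied at $x_0$ yields
\[
g(x_0) \:\leq\: n \cdot \Big(\frac{e^\delta}{(1+\delta)^{1+\delta}}\Big)^{\mu/R} \:=\: \epsilon.
\]

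With those pieces in hand, \Claim{concPE} delivers both conclusions simultaneously. For the randomized case, $\prob{\hat{x} \in \cE} \leq \epsilon$, which is precisely \eqref{eq:matrixPE}. For the deterministic case, whenever $\epsilon < 1$, the claim guarantees $\hat{x} = \chi(S) \notin \cE$, i.e., $\norm{\sum_{i \in S} M_i} < (1+\delta)\mu$.

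There is essentially no obstacle at this stage: the corollary is a short combination of \Theorem{tropp} (the matrix Chernoff bound) and \Lemma{pipage} (the concavity-under-swaps property) routed through the generic framework of \Claim{concPE}. All the genuinely difficult work has been absorbed into \Lemma{pipage}, whose proof in the appendix rests on the new variant of Lieb's concavity theorem advertised in the introduction; once that is in hand, the corollary itself is pure bookkeeping.
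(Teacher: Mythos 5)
Your proposal is correct and takes essentially the same route as the paper, which proves \Corollary{matrixconcentration} exactly this way: plug the estimator $g_{(1+\delta)\mu,\,\ln(1+\delta)}$ from \Theorem{tropp} and the swap-concavity of \Lemma{pipage} into \Claim{concPE}, using the hypothesis on $\mu$ to bound $g(x_0)$. The only cosmetic quibble is that concavity along a single coordinate (used for the min condition in \Equation{PE}) comes most directly from Lieb's original \Theorem{lieb} rather than as a byproduct of \Lemma{pipage}, but this does not affect the argument.
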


The inequalities in \Theorem{tropp} involve non-trivial matrix analysis,
such as operator concavity of $\log$ and Lieb's celebrated concavity theorem \cite{Lieb}.
It seems that even those results do not suffice to prove \Lemma{pipage}.
To prove it, we derive a new variant of Lieb's theorem (\Theorem{liebvariant}).
Lieb \cite{Lieb} proved several related concavity theorems;
for us, the most relevant form is:

\begin{theorem}[Lieb \cite{Lieb}]
\TheoremName{lieb}
Let $L, K \in \Sym$ and $C \in \Pd$.
Then $z \mapsto \trace \exp\big( L + \log(C+zK) \big)$ is concave in a neighborhood of\/ $0$.
\end{theorem}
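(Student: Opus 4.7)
The plan is first to reduce the one-parameter assertion to the global statement that $A \mapsto \trace \exp(L + \log A)$ is concave on $\Pd$, since this implies concavity along every affine line, and in particular along $A = C + zK$ for $z$ in a neighborhood of $0$ (on which $C + zK$ remains in $\Pd$). The remainder of the plan is to prove this global concavity.

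The core analytic input I would rely on is Lieb's three-matrix theorem: for any fixed $K$, the map $(A, B) \mapsto \trace\bigl(K^* A^p K B^q\bigr)$ is jointly concave on $\Pd \times \Pd$ whenever $p, q \geq 0$ satisfy $p + q \leq 1$. Assuming this result, I would obtain the exponential concavity by a Lie--Trotter approximation: starting from
\[
\exp\bigl(L + \log A\bigr) \;=\; \lim_{n \to \infty} \bigl(e^{L/n}\, A^{1/n}\bigr)^n,
\]
expand $\trace\bigl(e^{L/n} A^{1/n}\bigr)^n$ as a cyclic trace expression and iterate the three-matrix theorem (with $p = q = 1/n$ at each step) to see that each finite-$n$ approximant is a concave function of $A$. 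Since pointwise limits of concave functions are concave, the limit $A \mapsto \trace\exp(L + \log A)$ is concave too.

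The main obstacle is Lieb's three-matrix theorem itself, and I expect essentially all of the difficulty to reside there. My preferred route is the complex-interpolation approach of Epstein: extend $z \mapsto \trace\bigl(K^* A^z K B^{1-z}\bigr)$ to a bounded holomorphic function on the strip $0 < \Re z < 1$, control its boundary behaviour, and exploit that $\log$ is a Pick (Herglotz) function to transfer concavity from the imaginary axis to the real axis. An alternative would be an operator-theoretic argument using Ando's theorems on matrix convexity together with the integral representation $\log X = \int_0^\infty \bigl[(1+t)^{-1} - (X+t)^{-1}\bigr]\,dt$ and the operator concavity of $X \mapsto -X^{-1}$.

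If only the local concavity as stated is needed, a more direct strategy is to compute $\frac{d^2}{dz^2}\, \trace \exp\bigl(L + \log(C+zK)\bigr)\big|_{z=0}$ using Duhamel's identity for the derivative of $\exp$ and the integral representation above for the derivative of $\log$, and to manipulate the resulting double integral into the form $-\trace(Y^* Y)$ for some explicit operator $Y$. The hard part of this variant is the algebraic rearrangement needed to expose the hidden negative-semidefinite structure, since the non-commutativity of $C$ and $K$ obstructs the naive second-derivative calculation; this is essentially where the deep matrix-analytic content of Lieb's theorem is concentrated.
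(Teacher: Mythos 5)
Your reduction to global concavity of $A \mapsto \trace \exp(L + \log A)$ on $\Pd$, the Lie product formula $\exp(L+\log A)=\lim_n (e^{L/n}A^{1/n})^n$, and the passage to pointwise limits are all fine; the genuine gap is the claim that each approximant $A \mapsto \trace\bigl((e^{L/n}A^{1/n})^n\bigr)$ is concave ``by iterating the three-matrix theorem with $p=q=1/n$.'' The Wigner--Yanase--Dyson concavity controls expressions with exactly \emph{two} fractional powers of the concavity variables in the pattern $\trace(K^* A^p K B^q)$; the $n$-th Trotter approximant contains $n$ interleaved factors $A^{1/n}$, and no iteration of the two-variable statement produces it. Restricting a putative joint concavity of the decoupled function $(A_1,\ldots,A_n) \mapsto \trace\bigl(e^{L/n}A_1^{1/n}\cdots e^{L/n}A_n^{1/n}\bigr)$ to the diagonal would suffice, but that multivariable concavity is not delivered by the two-variable theorem and is not available in this generality. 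What actually makes the Trotter route work is Epstein's concavity theorem, that $A \mapsto \trace\bigl((K^* A^p K)^{1/p}\bigr)$ is concave for $0<p\leq 1$ (take $p=1/n$, $K=e^{L/(2n)}$, and use $\trace\bigl((e^{L/n}A^{1/n})^n\bigr)=\trace\bigl((e^{L/(2n)}A^{1/n}e^{L/(2n)})^n\bigr)$); but that is a result of the same depth as the theorem you are proving, established by Epstein with the same Herglotz machinery, not a corollary of the three-matrix theorem. The later Carlen--Lieb analysis of $\trace(K^*A^pK)^q$ shows how delicate such multi-factor trace powers are, so this step cannot be waved through: your reduction relocates the difficulty rather than resolving it.

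Your two fallback sketches are precisely the two classical proofs, but neither is carried out. The direct second-derivative computation is Lieb's original argument, and you yourself flag that the rearrangement exposing the negative-semidefinite structure---which is the entire content---is missing; this paper adapts exactly that computation in \Appendix{lieb}, and even there only under additional commutativity hypotheses. The complex-interpolation route is Epstein's, which the paper adapts in \Appendix{epstein} to prove the stronger two-sided variant \Theorem{liebvariant}; note the paper does not reprove \Theorem{lieb} itself but cites it. If you are willing to run Epstein's Herglotz argument anyway (your ``preferred route'' for the three-matrix theorem), it is both cleaner and sufficient to apply it directly to $f(z)=\trace\exp\bigl(L+\log(C+zK)\bigr)$: show that $g(z)=zf(1/z)$ maps the upper half-plane to itself via bounds of the form $0 \preceq \Im\log(Cz+K) \preceq \arg z$ together with the fact that $\trace\exp$ of a matrix whose imaginary part lies strictly between $0$ and $\pi$ has positive imaginary part, and then conclude concavity near $0$ from the Herglotz representation as in \Lemma{concave}. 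That bypasses the WYD-plus-Trotter detour entirely and is essentially the argument the paper generalizes.
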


The main technical result of this paper is:

\begin{theorem}
\TheoremName{liebvariant}
\label{LIEBVARIANT}
Let $L \in \Sym$, $C_1, C_2 \in \Pd$ and $K_1, K_2 \in \Psd$.
Then the univariate function
\begin{equation}
\EquationName{exploglog}
z ~\mapsto~ \trace \exp\Big( L + \log( C_1 + z K_1 ) + \log( C_2 - z K_2 ) \Big)
\end{equation}
is concave in a neighborhood of\/ $0$.
\end{theorem}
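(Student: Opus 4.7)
The plan is to verify local concavity by showing $f''(0) \le 0$, where $f(z) := \trace\exp\bigl(L + \log(C_1+zK_1) + \log(C_2-zK_2)\bigr)$. Writing $\phi(z)$ for the argument of the exponential and differentiating twice under the integral in the Loewner representation $\log A = \int_0^\infty \bigl((1+s)^{-1}I - (A+sI)^{-1}\bigr)\,ds$, one obtains $\phi'(0) = P_1 - P_2$ and $\phi''(0) = -2(R_1+R_2)$, where
\[
P_i := \int_0^\infty\! (C_i{+}sI)^{-1} K_i (C_i{+}sI)^{-1}\,ds, \quad R_i := \int_0^\infty\! (C_i{+}sI)^{-1} K_i (C_i{+}sI)^{-1} K_i (C_i{+}sI)^{-1}\,ds,
\]
and both $P_i, R_i \in \Psd$ because $K_i \in \Psd$. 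Applying Duhamel's formula $\frac{d}{dz}e^{\phi(z)} = \int_0^1 e^{s\phi(z)}\phi'(z)e^{(1-s)\phi(z)}\,ds$ to the outer exponential and using cyclicity of the trace then gives, with $H := L + \log C_1 + \log C_2 \in \Sym$,
\[
f''(0) \;=\; -2\trace\bigl((R_1+R_2)\,e^H\bigr) \;+\; \int_0^1 \trace\bigl((P_1-P_2)\,e^{sH}\,(P_1-P_2)\,e^{(1-s)H}\bigr)\,ds.
\]

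Setting $\langle A, B\rangle_* := \int_0^1 \trace(A\,e^{sH}\,B\,e^{(1-s)H})\,ds$, which is a positive-definite inner product on $\Sym$, the theorem reduces to proving the Lieb-type inequality $\|P_1 - P_2\|_*^2 \le 2\trace((R_1+R_2)\,e^H)$. The original Lieb theorem (\Theorem{lieb}), applied separately to each pair $(C_i, K_i)$ with $L + \log C_{3-i}$ absorbed into the symmetric parameter, gives $\|P_i\|_*^2 \le 2\trace(R_i\,e^H)$ for each $i$. Summing these yields only the weaker bound $\|P_1\|_*^2 + \|P_2\|_*^2 \le 2\trace((R_1+R_2)\,e^H)$, so the desired inequality additionally requires controlling the cross term $-2\langle P_1, P_2\rangle_*$. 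Because the Loewner matrix $\bigl[\tfrac{e^a - e^b}{a-b}\bigr]$ is not positive semidefinite, $\langle P_1, P_2\rangle_*$ can have either sign---small $2\times 2$ examples already produce $\langle P_1, P_2\rangle_* < 0$---so two invocations of Lieb do not suffice.

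To establish the key inequality, I would extend $f$ analytically to a complex strip around the real axis and exploit Pick/Herglotz structure. Because $K_1, K_2 \succeq 0$, for $\operatorname{Im} z > 0$ the matrix $C_1 + zK_1$ has positive imaginary part while $C_2 - zK_2$ has negative imaginary part, so each logarithm extends to a holomorphic matrix-valued Pick function in the upper half plane, with imaginary parts of opposite sign. The plan is then to expand each logarithm via its Loewner integral representation and, after exchanging the order of integration with the trace exponential, to apply Lieb's theorem to each single-parameter integrand; this reduces the target inequality to a scalar positivity statement for a bivariate kernel defined on the joint spectra of $C_1, C_2$ and $H$. The main obstacle---and the novel ``Lieb variant'' that the paper contributes---is proving this final kernel inequality; it is a Cauchy--Schwarz-type identity that crucially uses $K_1, K_2 \succeq 0$ together with the operator concavity of $\log$, and it is the step where the paper's complex-analytic techniques must meet its matrix-analytic ones.
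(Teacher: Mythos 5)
Your reduction to $f''(0)\le 0$, the computation of $f''(0)$ via Duhamel and the Loewner representation of $\log$, and the diagnosis that two invocations of \Theorem{lieb} fail because of the uncontrolled cross term are all correct; indeed the paper's own \Appendix{lieb} shows how delicate this direct route is, pushing the $f''(0)\le 0$ computation through only under the extra hypothesis that $C_i$ and $K_i$ commute (which happens to suffice for \Lemma{pipage} but not for \Theorem{liebvariant} in general). The problem is that your proposal stops exactly where the theorem has to be proved: the ``key inequality'' is named as the main obstacle and never established, and the complex-analytic plan you sketch for it does not work as stated. On the upper half-plane the two logarithms have imaginary parts of \emph{opposite} sign --- $\Im \log(C_1+zK_1)\succeq 0$ while $\Im\log(C_2-zK_2)\preceq 0$ --- so $f$ itself has no Pick/Herglotz structure, and merely continuing $f$ analytically to a strip gives no sign information on $f''$; ``expanding each logarithm and applying Lieb to each single-parameter integrand'' just reproduces the cross-term difficulty you already identified.

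The missing idea, which is how the paper (following Epstein) actually proves the theorem, is to apply the Herglotz machinery not to $f$ but to $g(z):=z\,f(1/z)$. Writing $g(z)=\trace\exp\bigl(L+\log(C_1z+K_1)+\log(C_2+(-1/z)K_2)\bigr)$ and using $\arg(-1/z)=\pi-\arg z$, one shows for $\Im z>0$ that $0\preceq\Im\log(C_1z+K_1)\prec(\arg z)I$ and $0\prec\Im\log(C_2+(-1/z)K_2)\preceq(\pi-\arg z)I$, so the argument of the exponential has imaginary part strictly between $0$ and $\pi I$; a spectral-mapping argument then gives $\Im g(z)>0$, i.e.\ $g$ is Herglotz. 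The Herglotz--Nevanlinna--Riesz representation of $g$, combined with the analyticity of $f$ in a complex neighborhood of $0$ (which forces the representing measure to have compact support via Stieltjes inversion), yields $f(x)=ax+b+\int x^2(xt-1)^{-1}\,d\mu(t)$ with $\mu\ge 0$, hence $f''(x)=\int 2(xt-1)^{-3}\,d\mu(t)<0$ for $x$ near $0$. Concavity follows directly, with no cross-term inequality needed at all; without this substitution (or some genuinely new replacement for it), your proposal has not proved \Theorem{liebvariant}.
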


There are several known approaches to proving Lieb's theorem.
The simplest is Tropp's approach \cite{Tropp12};
however, his proof is based on joint convexity of quantum entropy,
which is itself usually proven using Lieb's theorem.
We were unable to prove \Theorem{liebvariant} using Tropp's approach.
Lieb's original proof~\cite{Lieb},
which proves concavity by directly analyzing the second derivative,
involves numerous delicate steps of matrix analysis.
We were able to adapt this approach to prove a weaker form of \Theorem{liebvariant}
that requires some additional commutativity assumptions; details are in \Appendix{lieb}.
This weaker result suffices to prove \Lemma{pipage}.
Epstein~\cite{Epstein} gives an elegant approach to proving Lieb's theorem
using complex analysis, and in particular powerful results concerning \emph{Herglotz functions}.
Our proof of \Theorem{liebvariant}, which appears in \Appendix{epstein},
is an adaptation of Epstein's approach.

\medskip
\noindent \tcversion{\textsc{Remark.}}{\textbf{Remark.}}
Another well-known matrix concentration inequality is the Ahlswede-Winter \cite{AW} inequality,
for which pessimistic estimators were studied by Wigderson and Xiao \cite{WX}.
It is natural to wonder whether we could have used their pessimistic estimators instead.
Unfortunately they do not seem applicable for our scenario.
The issue is that the Ahlswede-Winter inequality is most effective
for analyzing sums of i.i.d.\ random matrices,
due to some inequalities that arise in their analysis.
In our scenario, due to the way that pipage rounding works,
we require non-i.i.d.\ product distributions,
so it is much more convenient to base our approach on \Theorem{tropp}.

\section{Applications}
\SectionName{applications}

\subsection{Rounding of semidefinite programs}

Let $\mat$ be a matroid and let $P \subset \bR^n$ be its base polytope.
Consider the spectrahedron
\begin{equation}
\EquationName{Qdef}
Q ~:=~ P \:\intersect\: \Bigl\{\,x \in \bR^m \;:\; \sum_{i=1}^m x_i A_i \preceq B \,\Bigr\},
\end{equation}
where each $A_1,\ldots,A_m, B \in \Psd$.
We think of $P$ as specifying ``hard'' constraints and the semidefinite constraint as being ``soft''.

\begin{theorem}
\TheoremName{main}
Suppose that $A_i \preceq B$ for all $i$.
If randomized pipage rounding starts at $x_0 \in Q$
and outputs the extreme point $\chi(S)$ of $P$, then 
$ \prob{ \smallsum{i \in S}{} A_i \preceq \alpha B } \geq 1-1/n $,
for some $\alpha = O( \log n / \log \log n )$.
Furthermore, if deterministic pipage rounding starts at $x_0 \in Q$,
then it outputs an extreme point $\chi(S)$ of $P$ with
$\smallsum{i \in S}{} A_i \preceq \alpha B$.
\end{theorem}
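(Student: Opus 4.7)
The plan is to deduce \Theorem{main} from \Corollary{matrixconcentration} by normalizing $B$ to the identity. When $B \in \Pd$, set $M_i := B^{-1/2} A_i B^{-1/2} \in \Psd$; the hypothesis $A_i \preceq B$ then translates directly to $M_i \preceq I$, so we may apply \Corollary{matrixconcentration} with $R = 1$. When $B$ is merely PSD, the same reduction is carried out inside $\image B$: the hypothesis $A_i \preceq B$ together with $A_i \in \Psd$ forces $\ker B \subseteq \ker A_i$, hence $\image A_i \subseteq \image B$, so using $M_i := B^{+/2} A_i B^{+/2}$ confines the whole problem to the subspace $\image B$, on which $B$ restricts to a positive definite operator.

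Next, I compute the quantity $\mu$ required by the corollary. Because $x_0 \in Q$, we have $\sum_i (x_0)_i A_i \preceq B$; conjugating by $B^{+/2}$ yields $\sum_i (x_0)_i M_i \preceq I_{\image B}$, so $\norm{\expectover{X \sim \cD(x_0)}{\sum_i X_i M_i}} \leq 1$ and we may take $\mu = 1$. Applying \Corollary{matrixconcentration} and using the order-preserving equivalence between $\sum_{i \in S} M_i \preceq (1+\delta) I_{\image B}$ and $\sum_{i \in S} A_i \preceq (1+\delta) B$ (obtained by conjugating back by $B^{1/2}$) gives, for every $\delta \geq 0$,
\[
\prob{\smallsum{i \in S}{} A_i \not\preceq (1+\delta) B} ~\leq~ n \cdot \Big(\frac{e^\delta}{(1+\delta)^{1+\delta}}\Big).
\]

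Finally, I choose $\delta$. The right-hand side is at most $1/n$ precisely when $(1+\delta) \log(1+\delta) - \delta \geq 2 \log n$, and a routine estimate shows this is satisfied for $\delta = C \log n / \log \log n$ with $C$ a sufficiently large absolute constant, since then $\delta \log \delta = \Theta(\log n)$. Setting $\alpha = 1 + \delta = O(\log n / \log \log n)$ establishes the randomized claim. The identical choice of $\alpha$ also makes the right-hand side strictly less than $1$, so the deterministic half of \Corollary{matrixconcentration} furnishes an extreme point $\chi(S)$ of $P$ with $\sum_{i \in S} A_i \preceq \alpha B$. All of the substantial matrix-analytic work (\Theorem{tropp} and \Lemma{pipage}) has already been done upstream, so no real obstacle remains beyond the mild bookkeeping with the pseudoinverse when $B$ is degenerate.
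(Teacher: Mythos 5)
Your proposal is correct and follows essentially the same route as the paper: reduce to $M_i = B^{+/2} A_i B^{+/2}$, note $x_0 \in Q$ gives $\mu = 1$ and $A_i \preceq B$ gives $R = 1$, then apply \Corollary{matrixconcentration} with $\delta = \Theta(\log n / \log\log n)$ so that the bound $n\,(e^\delta/(1+\delta)^{1+\delta})^{\mu/R}$ drops below $1/n$. The only difference is that you spell out the pseudoinverse bookkeeping and the choice of $\delta$ that the paper dismisses as ``standard,'' which is fine.
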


This theorem is optimal with respect to $\alpha$, as discussed below.
The hypothesis that $A_i \preceq B$ is a ``width'' condition
that commonly arises in optimization and rounding.

\begin{proof}
Recall the notation defined in \Section{prelim}.
Let $M_i = B^{+/2} A_i B^{+/2}$.
By standard arguments, 
\begin{align*}
    \smallsum{i=1}{m} x_i A_i \preceq B
    &\quad\iff\quad
    \smallsum{i=1}{m} x_i M_i \preceq I_{\image B} \\
\text{and}\quad
    \smallsum{i \in S}{} A_i \preceq \alpha B
    &\quad\iff\quad
    \smallsum{i \in S}{} M_i \preceq \alpha I_{\image B}.
\end{align*}
We assume that $A_i \preceq B$, so $\lmax(M_i) \leq 1$.
Apply \Corollary{matrixconcentration}
with $\delta = 4 \log n / \log \log n$, $\mu=1$ and $R=1$.
A standard calculation shows that the right-hand side of \eqref{eq:matrixPE} is less than $1/n$.
\end{proof}

Chekuri, Vondr\'ak and Zenklusen \cite{CVZArxiv,CVZFOCS} considered the problem of rounding a point
in a matroid polytope to an extreme point, subject to additional packing constraints.
Their result generalizes the low-congestion multi-path routing problem studied earlier by
Srinivasan et al.~\cite{Sri01,GandhiKPS06},
but it is itself a special case of \Theorem{main}
where the matrices $A_i$ and $B$ are diagonal.
The factor $\alpha = O(\log n / \log \log n)$ is optimal in \Theorem{main}
because it is optimal for rounding this low-congestion multi-path routing problem,
and even for the congestion minimization problem \cite{LRS}.

\subsection{Rounding an isotropic distribution to a \tcskip nearly orthonormal basis}
\SectionName{isotropic}

Let $w_1,\ldots,w_m \in \bR^n$ satisfy $\norm{w_i} = 1$ for all $i$.
Let $p_1,\ldots,p_m$ be a probability distribution on these vectors
such that the covariance matrix is $\sum_i p_i w_i w_i \transpose = I/n$.
A random vector drawn from that distribution is said to be in \newterm{isotropic position}.

\begin{theorem}
\TheoremName{isotropic}
There is a polynomial time algorithm (either randomized or deterministic)
to compute a subset $S \subseteq [m]$ such that $\setst{ w_i }{ i \in S }$ forms a
\emph{basis} of\/ $\bR^n$,
and for which $ \norm{ \sum_{i \in S} w_i w_i \transpose } \leq \alpha $,
where $\alpha =  O(\log n / \log \log n)$.
\end{theorem}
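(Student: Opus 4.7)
The plan is to apply \Corollary{matrixconcentration} to the linear matroid $\mathbf{M}$ on $[m]$ whose independent sets are the linearly independent subsets of $\set{w_1,\ldots,w_m}$. Since $\sum_i p_i w_i w_i\transpose = I/n$ has full rank, these vectors span $\bR^n$, so $\mathbf{M}$ has rank $n$ and its bases are precisely the subsets $S \subseteq [m]$ for which $\setst{w_i}{i \in S}$ is a basis of $\bR^n$. Let $P$ denote the base polytope of $\mathbf{M}$ and set $M_i := w_i w_i\transpose$; then each $M_i \in \Psd$ satisfies $M_i \preceq I$, so $R = 1$ is a valid choice in \Corollary{matrixconcentration}. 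An independence oracle for $\mathbf{M}$ is available by Gaussian elimination.

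The crux is to run pipage rounding on $P$ from the initial point $x_0 := n\, p$, which first requires verifying $x_0 \in P$. The isotropy relation $p_i w_i w_i\transpose \preceq I/n$, evaluated in the direction $w_i$, gives $p_i \leq 1/n$, so $0 \leq (x_0)_i \leq 1$; moreover $\sum_i (x_0)_i = n$ equals the matroid rank. For the remaining base polytope inequalities, fix $T \subseteq [m]$, let $V = \operatorname{span}\setst{w_i}{i \in T}$, and write $d := \dim V = r(T)$. Since every $w_i$ with $i \in T$ lies in $V$, the orthogonal projector $\Pi_V$ satisfies
\[
    \smallsum{i \in T}{} p_i w_i w_i\transpose
    \;=\; \Pi_V \Bigl(\smallsum{i \in T}{} p_i w_i w_i\transpose\Bigr) \Pi_V
    \;\preceq\; \Pi_V (I/n) \Pi_V \;=\; \Pi_V/n.
\]
Taking traces yields $\sum_{i \in T} p_i \leq d/n$, i.e.\ $x_0(T) \leq r(T)$, confirming $x_0 \in P$.

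Next I would apply \Corollary{matrixconcentration} with $\mu = 1$, which is valid because $\bigl\|\expectover{X \sim \cD(x_0)}{\sum_i X_i M_i}\bigr\| = \bigl\|\sum_i n p_i w_i w_i\transpose\bigr\| = \|I\| = 1$. Pipage rounding (in either its randomized or deterministic form) outputs an extreme point $\chi(S)$ of $P$, so $\setst{w_i}{i \in S}$ is automatically a basis of $\bR^n$, giving the first conclusion of the theorem for free. The corollary then gives
\[
    \prob{\, \bigl\|\smallsum{i \in S}{} w_i w_i\transpose\bigr\| \geq 1 + \delta \,}
    \;\leq\; n \cdot \Big(\frac{e^\delta}{(1+\delta)^{1+\delta}}\Big),
\]
and the same standard calculation used in the proof of \Theorem{main} shows that $\delta = \Theta(\log n / \log \log n)$ forces this bound strictly below $1/n$. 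This settles the randomized case, while the resulting right-hand side being less than $1$ also produces a witness $S$ with $\|\sum_{i \in S} w_i w_i\transpose\| < 1 + \delta$ via the deterministic clause of \Corollary{matrixconcentration}.

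The only step with genuine content is the verification that $x_0 = np$ lies in the base polytope; once that is in hand, the argument is a direct instantiation of \Corollary{matrixconcentration}, structurally identical to the proof of \Theorem{main}. I therefore do not expect any substantive obstacle: the isotropic condition is exactly what simultaneously certifies matroid polytope membership and pins the normalization $\mu = 1$.
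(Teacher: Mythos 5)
Your proposal is correct and follows essentially the same route as the paper: the paper invokes \Theorem{main} with $A_i = w_i w_i\transpose$, $B = I$ and $x_0 = np$, which with $B = I$ is exactly your direct application of \Corollary{matrixconcentration}, and your projector-plus-trace verification that $np$ lies in the base polytope is the same argument as the paper's \Claim{pinP} (proved there analogously to \Claim{pinPprime} by bounding $\trace$ against $\rank$). No substantive differences.
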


As is discussed in \Appendix{kadison}, the recent breakthrough on the Kadison-Singer problem~\cite{MSS} implies the following existential result:
\begin{theorem}
\TheoremName{strongIsotropic}
There exists $S \subseteq [m]$ such that $\setst{ w_i }{ i \in S }$
forms a basis of\/ $\bR^n$, and for which
$\norm{ \sum_{i \in S} w_i w_i \transpose } = O(1)$.
\end{theorem}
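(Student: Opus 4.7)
The plan is to derive \Theorem{strongIsotropic} as a corollary of the Marcus--Spielman--Srivastava (MSS) theorem on the Kadison--Singer problem, via an i.i.d.\ sampling reduction to the regime where MSS applies with constant spectral loss. First I would choose $N = Cn$ for a sufficiently large absolute constant $C$ (e.g.\ $C = 4$), and for $k = 1, \ldots, N$ define $\xi_k$ to be the independent random vector equal to $\sqrt{n/N}\,w_i$ with probability $p_i$. The isotropy hypothesis $\sum_i p_i w_i w_i^\top = I/n$ then gives $\sum_{k=1}^N \mathbb{E}[\xi_k \xi_k^\top] = I$ and $\mathbb{E}\|\xi_k\|^2 = n/N = 1/C$, which is precisely the setup for MSS with $\epsilon = 1/C$.

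The second step is to invoke MSS to obtain a realization of $(\xi_1, \ldots, \xi_N)$ satisfying \emph{both} spectral bounds simultaneously: $\lambda_{\min}\bigl(\sum_k \xi_k \xi_k^\top\bigr) \geq (1 - \sqrt{1/C})^2 > 0$ and $\lambda_{\max}\bigl(\sum_k \xi_k \xi_k^\top\bigr) \leq (1 + \sqrt{1/C})^2$. Establishing this is the main obstacle: the standard interlacing-families argument of MSS gives one-sided bounds, and the leaves of the interlacing tree witnessing the upper and lower bounds need not coincide. I would address this either by a two-sided refinement of the interlacing argument (noting that the expected mixed characteristic polynomial has \emph{all} roots in the target interval), or by combining the MSS upper bound with a matching lower bound from restricted invertibility in the style of Bourgain--Tzafriri with the optimal constants now available.

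For the conclusion, such a realization corresponds to a multiset of indices $i_1^*, \ldots, i_N^* \in [m]$; undoing the scale factor $\sqrt{n/N}$ gives $\bigl\|\sum_k w_{i_k^*} w_{i_k^*}^\top\bigr\| \leq (N/n)(1 + \sqrt{1/C})^2 = O(1)$, while the lower eigenvalue bound ensures that the multiset $\{w_{i_k^*}\}$ spans $\mathbb{R}^n$. Since the distinct vectors in the multiset already span, I can greedily extract from among them a subset $S$ of size $n$ such that $\{w_i\}_{i \in S}$ is a basis of $\mathbb{R}^n$. By monotonicity of the L\"owner order, $\sum_{i \in S} w_i w_i^\top \preceq \sum_k w_{i_k^*} w_{i_k^*}^\top$, and hence $\bigl\|\sum_{i \in S} w_i w_i^\top\bigr\| = O(1)$, which is exactly the conclusion of \Theorem{strongIsotropic}.
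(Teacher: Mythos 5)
Your reduction is genuinely different from the paper's: you draw $N=\Theta(n)$ i.i.d.\ samples from the distribution $p$ and invoke the random-vector form of Marcus--Spielman--Srivastava, whereas the paper uses the partition form (Weaver's conjecture $\text{KS}_2$, stated as \Theorem{MSS}) and repeatedly halves the full collection $\set{\sqrt{n/m}\,w_i}$ until only $\Theta(n)$ vectors remain, maintaining two-sided spectral control throughout. Unfortunately your version has a genuine gap exactly where you flag ``the main obstacle,'' and neither proposed fix closes it. The interlacing-families machinery produces, for an interlacing family $\{f_s\}$, a leaf $s$ whose \emph{largest} root is at most the largest root of the averaged polynomial; by symmetry there is also a leaf $s'$ whose smallest root is at least the smallest root of the average, but nothing forces $s=s'$. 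So the observation that the expected mixed characteristic polynomial has all its roots in $[(1-\sqrt{1/C})^2,(1+\sqrt{1/C})^2]$ does not, by the published argument, yield a single realization satisfying both bounds simultaneously. Restricted invertibility does not substitute either: it extracts a \emph{sub}-collection of roughly $(1-\eps)^2 n$ of your sampled vectors on which $\lmin$ is bounded below, which neither certifies that your full sample spans $\bR^n$ nor combines with the MSS upper bound on the same index set.

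The paper sidesteps this entirely by exploiting complementation: in the two-way partition of \Theorem{MSS} one has $\sum_{i\in S_1}u_iu_i\transpose = I - \sum_{i\in S_2}u_iu_i\transpose$, so the upper bound \eqref{eq:MSS} applied to the discarded part is automatically a lower bound on the retained part. This is why Corollary~\ref{cor:strongKS} delivers the two-sided estimate \eqref{eq:strongKS} for free, and why the proof of \Claim{KSimplIsotropic} can afford to iterate $T\approx\log(m/n)$ halvings, paying only a geometric series $C\sum_t\sqrt{2^t n/m}=O(\eps)$ of total spectral error. To keep your one-shot sampling scheme you would need a genuinely two-sided random-vector MSS theorem, which is a real strengthening and cannot simply be asserted; the cleanest repair is to adopt the paper's halving scheme, in which your i.i.d.\ preprocessing is unnecessary anyway since the non-uniform weights $p_i$ are handled by duplicating vectors. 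Your final step --- extracting a basis from a spanning set and invoking monotonicity of the L\"owner order --- is correct and coincides with the paper's.
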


We now prove \Theorem{isotropic} using \Theorem{main}.
Let $\mat$ be the linear matroid corresponding to the vectors $\set{w_1,\ldots,w_m}$.
Let $P$ be the base polytope of that linear matroid.
Let $r : 2^{[m]} \rightarrow \bZnneg$ be the rank function of that matroid,
i.e., 
$r(S) = \operatorname{dim} \big(\operatorname{span} \setst{ w_i }{ i \in S }\big)$.
Then 
\[
P ~:=~ \setst{ x \in \bRnneg^n }{ x(J) \leq r(J) ~\forall J \subseteq [m] ~~\text{and}~~ x([m]) = r([m]) }.
\]

Define $A_i = w_i w_i \transpose$, $B = I$ and
\[ 
Q ~=~ P \:\intersect\: \Bigl\{\, x \in \bR^m \,:\, \sum_i x_i A_i \preceq B \,\Bigr\}.
\]
Let $x = n \cdot p$.
Then the following claim and the hypothesis that $\sum_i p_i w_i w_i \transpose = I/n$
show that $x \in Q$.

\begin{claim}
\ClaimName{pinP}
$x \in P$.
\end{claim}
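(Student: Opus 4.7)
The plan is to verify the two defining inequalities of the base polytope $P$ directly: the equality $x([m]) = r([m])$, and the rank inequalities $x(J) \le r(J)$ for every $J \subseteq [m]$.

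First I would handle the equality constraint. Since $\sum_i p_i w_i w_i\transpose = I/n$ has full rank $n$, the vectors $w_1,\ldots,w_m$ must span $\bR^n$, so $r([m]) = n$. On the other hand, since $p$ is a probability distribution, $x([m]) = n \sum_i p_i = n$. This gives the equality.

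The main content is the family of rank inequalities. Fix $J \subseteq [m]$ and let $V_J = \operatorname{span}\setst{w_i}{i \in J}$, with $\Pi_J$ the orthogonal projection onto $V_J$; then $\trace(\Pi_J) = \dim V_J = r(J)$. Consider the partial covariance $N_J := \sum_{i \in J} p_i w_i w_i\transpose$. Two simple observations drive the proof. First, because $N_J = \sum_i p_i w_i w_i\transpose - \sum_{i \notin J} p_i w_i w_i\transpose$ and the subtracted term is psd, we have $N_J \preceq I/n$. Second, each $w_i$ with $i \in J$ lies in $V_J$, so $\Pi_J N_J \Pi_J = N_J$. Conjugating $N_J \preceq I/n$ by $\Pi_J$ therefore yields $N_J \preceq \Pi_J/n$. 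Taking traces on both sides and using $\trace(w_i w_i\transpose) = \norm{w_i}^2 = 1$ gives
\[
\sum_{i \in J} p_i \;=\; \trace(N_J) \;\leq\; \trace(\Pi_J)/n \;=\; r(J)/n,
\]
which on multiplying by $n$ is exactly $x(J) \le r(J)$.

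There is no serious obstacle here: the argument is essentially a one-line trace computation, once one notices that $N_J$ is supported on $V_J$ and is dominated by $I/n$. The only place to be slightly careful is in justifying the step $N_J \preceq I/n \Rightarrow N_J \preceq \Pi_J/n$, which uses that $N_J$ is unchanged under conjugation by $\Pi_J$ while $I/n$ becomes $\Pi_J/n$. Combining both parts shows $x \in P$, completing the claim.
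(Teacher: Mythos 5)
Your proof is correct and is essentially the paper's argument: both hinge on $\sum_{i\in J}p_i w_iw_i\transpose \preceq I/n$ together with the fact that this matrix is supported on a subspace of dimension $r(J)$, giving $\sum_{i\in J}p_i \leq r(J)/n$ by a one-line trace computation (plus the trivial equality $x([m]) = n = r([m])$). The only cosmetic difference is packaging: the paper bounds $\trace(\cdot) \leq \rank(\cdot)\cdot\norm{\cdot}$ (average of nonzero eigenvalues at most the largest), whereas you conjugate by the projection $\Pi_J$ and compare traces --- the same estimate in different clothing.
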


Since $\norm{w_i}=1$, we have $A_i = w_i w_i \transpose \preceq I = B$.
\Theorem{main} gives an algorithm to construct an
extreme point $\chi(S)$ of $P$ 
for which $\sum_{i \in S} A_i \preceq \alpha \cdot B$,
with $\alpha = O(\log n / \log \log n)$.
Since $P$ is the base polytope of $\mat$,
$\setst{ w_i }{ i \in S }$ forms a basis of $\bR^n$.
Finally, $ \sum_{i \in S} w_i w_i \transpose \preceq \alpha \cdot I $.
This completes the proof of \Theorem{isotropic}, modulo the proof of \Claim{pinP}.

In \Appendix{Psubmodular}, we show that \Theorem{isotropic} can be generalized from a decomposition
of the identity into rank-one matrices $w_i w_i \transpose$ to a decomposition into matrices of
arbitrary rank.
The proof of \Claim{pinP} is analogous to the proof of \Claim{pinPprime}.
We remark that \Theorem{strongIsotropic} is not known to have a generalization
to matrices of arbitrary rank.

\subsection{Thin trees}
\SectionName{thin}
\label{THIN}

Let $G = (V,E)$ be a graph.
For convenience we assume that $V = [n]$.
The \newterm{cut} defined by $U \subseteq V$ is 
\[
\delta_G(U) \:=\: \delta(U) \:=\: \setst{ uv \in E }{ \text{ exactly one of $u$ and $v$ is in $U$}}.
\]
For a subgraph $T$ of $G$, 
let $\delta_T(U)$ denote all edges of $T$ with exactly one endpoint in $U$.

\begin{definition}
A subgraph $T$ of $G$ is called \newterm{$\epsilon$-thin} if
$\card{ \delta_T(U) } \leq \epsilon \cdot \card{ \delta_G(U) }$
for all $U \subseteq V$.
\end{definition}

\begin{conjecture}[Goddyn \cite{Goddyn}]
\ConjectureName{goddyn}
Every graph with connectivity at least $\connec$ has an $f(\connec)$-thin spanning subtree,
for some function $f$ that vanishes as $\connec$ tends to infinity.
\end{conjecture}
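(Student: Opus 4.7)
The plan is to combine the Chernoff bound under pipage rounding (the corollary following Claim~\ref{clm:chernoffconcave}) with a union bound over cuts, following the strategy of Asadpour et al.~\cite{AGMOS}. This approach will recover the known $O(\log n/(\connec \log \log n))$-thin spanning tree and so gives a nontrivial partial resolution of the conjecture; removing the $\log n/\log\log n$ factor, as the conjecture demands, appears to require ideas beyond pipage rounding with scalar Chernoff, and I will indicate at the end where the gap lies.

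First I would produce a fractional spanning tree $x$ in the spanning tree base polytope $P$ of $G$ with all marginals $x_e = O(1/\connec)$. By Nash--Williams/Tutte, a graph of edge-connectivity $\connec$ contains $\lfloor \connec/2 \rfloor$ edge-disjoint spanning trees $T_1,\dots,T_{\lfloor \connec/2 \rfloor}$, and averaging their characteristic vectors gives such an $x$ with $x_e \leq 2/\connec$. (Equivalently one can solve $\min_{x \in P}\max_e x_e$ directly via the LP and read off the bound $O(1/\connec)$ from cut LP duality.) Running randomized pipage rounding from $x$ via Theorem~\ref{thm:randpip} then produces a random spanning tree $T$ whose indicator vector $\hat x = \chi(T)$ is an extreme point of $P$ with $\expect{\hat x_e} = x_e$.

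Next I would bound, for each fixed cut $\delta(U)$, the number of tree edges $|\delta_T(U)|$. Applying the pipage Chernoff bound \eqref{eq:pipageChernoffRHS} with $w := \chi(\delta(U))$ and mean $\mu = x(\delta(U)) \leq 2|\delta_G(U)|/\connec$ yields $\prob{|\delta_T(U)| \geq (1+\delta_0)\mu} \leq \bigl( e^{\delta_0}/(1+\delta_0)^{1+\delta_0} \bigr)^{\mu}$. To union-bound over all (exponentially many) cuts I would stratify by their size relative to the minimum cut and use Karger's count: the number of cuts of size at most $\alpha\connec$ is $n^{O(\alpha)}$. Plugging in and optimizing the tradeoff between $\alpha$ and $\delta_0$ in the standard way gives, simultaneously for all $U$, the bound $|\delta_T(U)| \leq O\!\bigl( \log n/(\connec \log \log n) \bigr) \cdot |\delta_G(U)|$, matching the Asadpour et al.\ guarantee.

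The main obstacle to obtaining the pure $f(\connec)$ bound of the conjecture is that the $\log n/\log \log n$ deficit is essentially forced by this line of reasoning: the Chernoff tail for a single cut requires $1+\delta_0 \gtrsim \log n/\log\log n$ before it beats the $n^{O(\alpha)}$ cut count, and under only negative correlation (the strongest structural property of the rounded distribution currently available in the scalar setting) this threshold is tight for the worst-case cut taken in isolation. A true $f(\connec)$-thinness result appears to demand either a structural argument tightly coupling the fluctuations across all cuts simultaneously, or a rounding procedure that exploits the global matroid/graph structure more deeply, in the spirit of the interlacing-family method of Marcus--Spielman--Srivastava~\cite{MSS}. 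Indeed, as the paper notes in Appendix~\ref{app:kadison}, such methods already give existentially an $O(1/\connec)$-\emph{spectrally}-thin spanning tree; I would view an algorithmic or genuinely combinatorial analogue of that coupling as the real technical barrier to settling Conjecture~\ref{conj:goddyn}.
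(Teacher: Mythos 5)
The statement you were asked to address is Goddyn's conjecture, which the paper does not prove and does not claim to prove: it is recorded as an open problem attributed to \cite{Goddyn}, with the paper explicitly noting that ``the crucial detail \ldots is that the function $f$ should not depend on the size of the graph'' and that the best known progress is \Theorem{asadpour}. Your proposal is candid that it does not close this gap, and it cannot be read as a proof of \Conjecture{goddyn}: the entire content of the conjecture is the removal of any dependence on $n$, and your argument necessarily produces a bound with a $\log n/\log\log n$ factor. So the gap is not a repairable step --- it is the whole statement, which remains open.

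Setting that aside, the partial result you sketch is essentially correct and is precisely the known route that the paper summarizes as \Theorem{asadpour}: obtain a point $x$ in the spanning-tree base polytope with $x_e = O(1/\connec)$ (via Nash--Williams/Tutte tree packing or LP duality), run randomized pipage rounding (\Theorem{randpip}), invoke the Chernoff tail that survives rounding (\Equation{pipageChernoffRHS}) for each cut, and union-bound using Karger's $n^{O(\alpha)}$ count of cuts of size at most $\alpha$ times the minimum cut. This is the argument of Asadpour et al.\ \cite{AGMOS} and of Chekuri et al.\ \cite{CVZFOCS}; one minor caution is that Karger's counting bound is stated relative to the minimum cut, which coincides with the edge connectivity $\connec$ only in the unweighted setting, so you should fix that setting explicitly. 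Your diagnosis of the barrier --- that the union bound over $n^{O(\alpha)}$ cuts forces $1+\delta_0 = \Omega(\log n/\log\log n)$, and that an interlacing-families style coupling as in \cite{MSS} appears necessary to go further --- is consistent with the paper's own discussion in \Section{thin} and \Appendix{kadison}, where even the $O(1/\conduc)$-\emph{spectrally}-thin tree (\Theorem{verythintree}) is obtained only existentially and under a conductance rather than a connectivity hypothesis.
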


The crucial detail in this conjecture is that the function $f$ should not
depend on the size of the graph.
The best progress on this conjecture for general graphs is as follows.

\begin{theorem}[Asadpour et al.~\cite{AGMOS}]
\TheoremName{asadpour}
Let $G$ be a graph with $n$ vertices and connectivity $\connec$.
Then $G$ has a $O\big( \frac{\log n}{\connec \log \log n} \big)$-thin spanning subtree.
Moreover, there is a randomized, polynomial time algorithm to construct such a tree.
\end{theorem}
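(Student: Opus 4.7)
The plan is to adapt the argument of Asadpour et al., substituting the Chernoff concentration under pipage rounding established earlier in this section for their use of maximum-entropy sampling.

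\textbf{Step 1 (Small-marginal fractional spanning tree).} I would first exhibit an $x$ in the graphic matroid base polytope $P$ of $G$ with $x_e \leq 2/\connec$ for every edge $e$, so that $x(\delta(U)) \leq 2|\delta(U)|/\connec$ for every cut $U$. By Nash-Williams/Tutte, a $\connec$-edge-connected graph contains at least $\lfloor \connec/2 \rfloor$ edge-disjoint spanning trees $T_1,\ldots,T_{\lfloor \connec/2 \rfloor}$, and averaging their characteristic vectors yields such an $x \in P$.

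\textbf{Step 2 (Per-cut concentration via pipage rounding).} Next, run randomized pipage rounding starting from $x$ to obtain a random spanning tree $T$ as an extreme point $\chi(E(T)) \in P$. For each cut $U$, the Chernoff bound corollary (applied to $w = \chi(\delta(U))$) gives
$$ \prob{|\delta_T(U)| \geq (1+\delta)\mu_U} ~\leq~ \Bigl(\tfrac{e^\delta}{(1+\delta)^{1+\delta}}\Bigr)^{\mu_U}, $$
where $\mu_U := x(\delta(U))$.

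\textbf{Step 3 (Union bound via Karger's cut-counting).} Because there are $2^{\Theta(n)}$ cuts, a naive union bound is far too lossy. The resolution, as in Asadpour et al., is Karger's theorem: for every $\alpha \geq 1$, the number of cuts of size at most $\alpha c$ (where $c$ is the mincut, so $c \geq \connec$) is at most $n^{2\alpha}$. Stratify cuts by size into geometric scales $[2^{j-1}c, 2^j c)$ for $j = 1,2,\ldots$, aiming in every scale for $|\delta_T(U)| \leq \epsilon|\delta(U)|$ with $\epsilon = C\log n/(\connec \log\log n)$ for a large constant $C$. Setting $1+\delta = \epsilon|\delta(U)|/\mu_U \geq \epsilon \connec/2 = \Theta(\log n/\log\log n)$, the Chernoff factor in stratum $j$ is at most $n^{-\Theta(C\cdot 2^j)}$, while the number of cuts in that stratum is at most $n^{O(2^j)}$. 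For $C$ sufficiently large the per-stratum failure probabilities form a geometric series summing to $o(1)$, so with probability $1-o(1)$ the tree $T$ is $\epsilon$-thin.

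The main obstacle is the calibration in Step 3: the Chernoff tail must be strong enough to absorb the $n^{O(\alpha)}$ cost from Karger's cut-counting while still yielding the $\log n / \log\log n$ improvement over the naive $\log n$ bound. This is precisely the trade-off that drives the exponent in the original Asadpour et al.\ argument; the only change here is that the randomized pipage rounding Chernoff bound from the previous subsection takes the role formerly played by maximum-entropy sampling.
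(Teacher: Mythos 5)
Your proposal is sound, but note that the paper does not prove this statement at all: it is quoted as a known result of Asadpour et al.\ \cite{AGMOS}, whose original argument samples a maximum-entropy ($\lambda$-random) spanning tree with the prescribed marginals and uses negative correlation of edges in random spanning trees to get Chernoff tails, before applying Karger-style cut counting. What you describe --- start from a fractional point in the graphic matroid base polytope with small per-edge marginals (via Tutte/Nash-Williams tree packing), round by randomized pipage rounding, invoke the Chernoff-under-pipage corollary for each cut, and union-bound over cuts stratified by Karger's $n^{2\alpha}$ bound on cuts of value at most $\alpha$ times the mincut --- is essentially the simplification of Chekuri, Vondr\'ak and Zenklusen \cite{CVZArxiv,CVZFOCS} that this paper explicitly cites as the ``simpler algorithm''; the calibration in your Step 3 ($1+\delta = \Theta(\log n/\log\log n)$, so the tail is $n^{-\Theta(C2^j)}$ against $n^{O(2^j)}$ cuts per stratum) is the correct trade-off and goes through. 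The differences between the two routes are real but modest: maximum-entropy sampling gives a specific distribution over spanning trees with additional structure that AGMOS exploit elsewhere in their ATSP application, while the pipage-rounding route needs only the base-polytope machinery and the concentration corollary already available in this paper, which is exactly why the present paper can extend it to the spectral setting. Two small points to tidy: Tutte/Nash-Williams gives $\lfloor\connec/2\rfloor$ disjoint trees, so the marginal bound is $x_e \leq 1/\lfloor\connec/2\rfloor = O(1/\connec)$ rather than exactly $2/\connec$ (harmless), and the disjoint trees (or an equivalent fractional point) must be computed by matroid union to keep the algorithm polynomial time, which is standard.
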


Now we define spectrally-thin trees and prove an analog of this theorem.
The Laplacian of $G$ is the symmetric matrix $L_G$ with rows and columns indexed by $V$ 
defined by
\[ L_G ~:=~ \sum_{uv \in E} (e_u - e_v) (e_u - e_v) \transpose. \]

\begin{definition}
Let $T$ be a spanning subtree of $G$ and let $L_T$ be the Laplacian of $T$.
The tree $T$ is \newterm{$\epsilon$-spectrally-thin} if
$L_T \preceq \epsilon L_G $.
\end{definition}

Any tree that is $\epsilon$-spectrally-thin is also $\epsilon$-thin,
because
\[
    \card{\delta_T(U)}
    ~=~ \chi(U) \transpose \, L_T \, \chi(U)
    ~\leq~ \epsilon \cdot \chi(U) \transpose \, L_G \, \chi(U)
    ~=~ \epsilon \cdot \card{\delta_G(U)}.
\]
The converse is not true.
Moreover, the connectivity hypothesis in \Theorem{asadpour} does not suffice\footnote{
    This result was independently observed by M.~de Carli Silva, N.~Harvey and C.~Sato,
    and by M. Goemans~\cite{GoemansTalk}, using slightly different examples.
}
to obtain a good spectrally-thin tree.
The proof is in \Appendix{nospectrallythintree}.

\begin{theorem}
\TheoremName{nospectrallythintree}
For every $n, k \geq 1$, there exists a weighted graph with $n$ vertices and connectivity $\connec$
that does not have an $o( \sqrt{n}/\connec )$-spectrally-thin spanning subtree.
\end{theorem}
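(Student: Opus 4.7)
I would prove the theorem by exhibiting, for each $n$ and $\connec$, an explicit weighted graph on $n$ vertices whose every spanning tree is forced, by the presence of a single ``bottleneck'' edge, to have large spectral thinness.

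\textbf{Key PSD inequality.} The first step is a standard observation: for any spanning tree $T$ of $G$ and any tree edge $e = uv$ of weight $w_e$,
\[
  \eps(T) \;\geq\; w_e \cdot R_{\mathrm{eff}}^G(u,v),
\]
where $R_{\mathrm{eff}}^G(u,v) := (e_u - e_v)\transpose L_G^+ (e_u - e_v)$ is the effective resistance between $u$ and $v$ in $G$. Indeed, $w_e (e_u - e_v)(e_u - e_v)\transpose$ appears as a rank-one summand of $L_T$, so
\[
  w_e (e_u - e_v)(e_u - e_v)\transpose \;\preceq\; L_T \;\preceq\; \eps(T) \cdot L_G,
\]
and conjugating by $L_G^{+/2}$ and restricting to $\image(L_G)$ yields the claim.

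\textbf{Construction.} Take $G$ to be a weighted ``dumbbell'': two cliques $A, B$ of sizes $\lfloor n/2 \rfloor$ and $\lceil n/2 \rceil$ with uniform intra-clique edge weights $w_c = 2\connec/(n-2)$, joined by a single edge $e^* = ab$ of weight $\connec$ (with $a \in A$, $b \in B$). Every non-bridge vertex has weighted degree exactly $\connec$, while $a$ and $b$ have weighted degree $2\connec$; every cut of $G$ has total weight at least $\connec$, so the weighted edge-connectivity is $\connec$. The edge $e^*$ is the unique edge crossing the $A$--$B$ cut, so it must lie in every spanning tree $T$ of $G$. Because removing $e^*$ disconnects $G$, we have $R_{\mathrm{eff}}^G(a,b) = 1/\connec$, and thus
\[
  \eps(T) \;\geq\; w_{e^*} \cdot R_{\mathrm{eff}}^G(e^*) \;=\; \connec \cdot \tfrac{1}{\connec} \;=\; 1
\]
for every spanning tree $T$. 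In the regime $\connec \geq \sqrt{n}$ (where $\sqrt{n}/\connec \leq 1$) this gives $\eps(T) \geq \sqrt{n}/\connec$, so $G$ has no $o(\sqrt{n}/\connec)$-spectrally-thin spanning subtree.

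\textbf{Main obstacle.} The argument above treats ``connectivity $\connec$'' as weighted min-cut equal to $\connec$, which is natural since $G$ is weighted. The delicate point is if the intended notion is instead combinatorial edge-connectivity (at least $\connec$ edge-disjoint paths between every pair): then the dumbbell fails, because $e^*$ is a single-edge cut. In that case I would use a path of $\Theta(\sqrt{n})$ weighted cliques of size $\Theta(\sqrt{n})$ joined by $\connec$ parallel unit-weight bridges, with all $\connec$ bridges between adjacent clusters attached at a single ``bridge vertex'' on each side. The bound then comes from intra-cluster tree edges: because the external structure is attached to each cluster only at one vertex, it acts as a pendant subgraph and does not lower the within-cluster effective resistance, so each intra-cluster edge $e = uv$ of weight $w_c$ satisfies $R_{\mathrm{eff}}^G(u,v) = R_{\mathrm{eff}}^{C_i}(u,v)$ and a direct computation gives $w_e R_{\mathrm{eff}}^G(e) = \Omega(\sqrt{n}/\connec)$. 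The main technical step is the careful tuning of cluster size, cluster count, and edge weights so that the connectivity is exactly $\connec$ and the effective-resistance estimate matches the claimed $\sqrt{n}/\connec$ rate.
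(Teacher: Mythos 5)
Your argument doesn't prove the theorem, and the main problem is the weight factor in your ``key PSD inequality,'' which presumes the spanning tree keeps the edge weights of $G$. In the paper's statement (made explicit in its appendix) the tree is unweighted: $L_T = \sum_{uv \in T} (e_u - e_v)(e_u - e_v)\transpose$, while $L_G$ carries the weights; under that convention the single-edge bound is $\epsilon \geq R_{\mathrm{eff}}^G(u,v)$ with no factor $w_e$. This is not cosmetic: since $R_{\mathrm{eff}}^G(u,v) \leq 1/w_e$ for every edge, your quantity $w_e R_{\mathrm{eff}}^G(u,v)$ is always at most $1$, so your inequality can never certify thinness $\omega(1)$, whereas the theorem demands $\Omega(\sqrt{n}/\connec)$, which is unbounded when $\connec = o(\sqrt{n})$. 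Accordingly your dumbbell covers only $\connec \geq \sqrt{n}$ even on your own terms, and under the paper's convention it is not a counterexample at all: every effective resistance in it is $O(1/\connec)$ (bridge $1/\connec$, clique edges $\Theta(1/\connec)$), and a tree consisting of Hamiltonian paths inside the two cliques plus the bridge is $O(1/\connec)$-spectrally-thin, i.e.\ $o(\sqrt{n}/\connec)$-thin. The fallback path-of-cliques has the same defect: making each cluster of size $\Theta(\sqrt{n})$ internally $\connec$-connected forces intra-cluster weights of order $\connec/\sqrt{n}$, so intra-cluster effective resistances are $\Theta(1/\connec)$ and $w_e R_e \approx 1/\sqrt{n}$; the asserted ``direct computation'' giving $w_e R_{\mathrm{eff}}^G(e) = \Omega(\sqrt{n}/\connec)$ cannot hold, as it would contradict $w_e R_e \leq 1$ whenever $\connec < \sqrt{n}$.

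The missing idea is to force every spanning tree to use an edge that is very \emph{light}, so that its unweighted contribution is large relative to the weighted graph --- you need a forced edge with $R_{\mathrm{eff}}^G(u,v) = \Theta(\sqrt{n}/\connec) \gg 1$, not a heavy bridge of negligible resistance. The paper takes two disjoint cycles of length $n/2$ with edge weight $\connec/2$, joined by a perfect matching of weight $2\connec/n$; the weighted connectivity is $\connec$, every spanning tree must use a matching edge, and the graph behaves like a resistor ladder whose rails have resistance $\Theta(1/\connec)$ per step and whose rungs have resistance $\Theta(n/\connec)$, so the effective resistance across a matching edge is $\Theta(\sqrt{n}/\connec)$. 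The paper certifies the lower bound with an explicit test vector decaying geometrically at rate $1 - n^{-1/2}$ away from an endpoint of the used matching edge: the unweighted tree edge contributes at least $1$ to $z\transpose L_T z$, while $z\transpose L_G z = O(\connec/\sqrt{n})$, giving thinness $\Omega(\sqrt{n}/\connec)$ for all $n$ and $\connec$ (with a second, similar vector handling trees that use several matching edges). Your overall template --- force an edge and bound thinness by an effective resistance --- is the right one, but it only works with the unweighted-tree form of the inequality and with a construction that makes the forced edge's effective resistance of order $\sqrt{n}/\connec$.
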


Nevertheless, if we strengthen the connectivity lower bound to a lower bound
on the effective conductances, then we have the following construction of spectrally-thin trees.
For an edge $e = uv \in E$, the \newterm{effective resistance} in $G$ between $u$ and $v$ is
$R_e := (e_u - e_v)\transpose L_G^+ (e_u - e_v)$.
The \newterm{effective conductance} in $G$ between $u$ and $v$ is $C_e := 1/R_e$.

\begin{theorem}
\TheoremName{conductanceweightedtree}
Let $G$ be a graph with $n$ vertices such that $\conduc \leq C_e$ for every edge $e$.
Then there is a polynomial time algorithm (either randomized or deterministic)
 to construct a $O\big( \frac{\log n}{\conduc \log \log n} \big)$-spectrally-thin
spanning subtree of $G$.
\end{theorem}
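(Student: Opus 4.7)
The plan is to derive \Theorem{conductanceweightedtree} directly from \Theorem{main} applied to the graphic matroid of $G$. Orient each edge $e = uv$ arbitrarily and set $b_e := e_u - e_v$, so that $L_G = \sum_{e \in E} b_e b_e \transpose$ and, for any spanning subtree $T$, $L_T = \sum_{e \in T} b_e b_e \transpose$. I take $A_e := b_e b_e \transpose$ and $B := L_G/\conduc$, and let $P$ denote the spanning-tree polytope. Then an extreme point $\chi(S)$ of $P$ satisfying $\sum_{e \in S} A_e \preceq \alpha B$ is precisely a spanning tree $T$ with $L_T \preceq (\alpha/\conduc) L_G$, i.e., an $(\alpha/\conduc)$-spectrally-thin tree. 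With $\alpha = O(\log n/\log\log n)$ from \Theorem{main}, this gives exactly the claimed thinness.

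For the width hypothesis $A_e \preceq B$, observe that $A_e \preceq B$ is equivalent to $\conduc \cdot b_e \transpose L_G^+ b_e = \conduc R_e \leq 1$, which is exactly the hypothesis $C_e \geq \conduc$. For the starting point I propose $x_{0,e} := R_e$. Two facts must then be checked. First, $x_0 \in P$: by the classical Foster identity, $\sum_e R_e = \trace(L_G^+ L_G) = n-1$, and for any $J \subseteq E$, writing $L_J := \sum_{e \in J} b_e b_e \transpose$,
\[
\sum_{e \in J} R_e \;=\; \trace(L_G^+ L_J) \;=\; \trace(L_G^{+/2} L_J L_G^{+/2}) \;\leq\; \rank(L_J) \;=\; r(J),
\]
where we used $L_J \preceq L_G$ (so the eigenvalues of $L_G^{+/2} L_J L_G^{+/2}$ lie in $[0,1]$) and $r$ is the rank function of the graphic matroid. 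Second,
\[
\sum_e x_{0,e} A_e \;=\; \sum_e R_e \, b_e b_e \transpose \;\preceq\; (\max_e R_e) L_G \;\preceq\; L_G/\conduc \;=\; B,
\]
again using $R_e \leq 1/\conduc$. Hence $x_0 \in Q$, and \Theorem{main} applies directly, producing the tree either randomly (with probability $\geq 1 - 1/n$) or deterministically.

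The only genuinely ``nontrivial'' ingredient is the fact that $(R_e)_{e \in E}$ lies in the spanning tree polytope---a classical identity underlying the spectral sparsification literature, expressing the compatibility of effective resistances with the matroid rank function. This is what makes the choice $x_{0,e} = R_e$ so convenient: a single vector simultaneously certifies membership in $P$ and satisfies the soft matrix inequality defining $Q$, and moreover the slack in the soft inequality matches the width $1/\conduc$ coming from the hypothesis. Everything else either unpacks $C_e \geq \conduc$ into an eigenvalue bound or is a direct invocation of \Theorem{main}.
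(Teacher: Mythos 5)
Your proof is correct, and it is essentially a detailed write-up of the route the paper only mentions in passing (``\Theorem{conductanceweightedtree} follows directly from \Theorem{main}, letting $\mat$ be the graphic matroid''), whereas the paper's written-out proof in \Appendix{transitive} goes through \Theorem{isotropic} instead. Concretely, the paper whitens the edge vectors, $x_e = L_G^{+/2}(e_u-e_v)$, normalizes them to unit vectors $w_e$, observes that $p_e = R_e/(n-1)$ puts them in isotropic position, applies \Theorem{isotropic} (membership of the marginals in the base polytope being \Claim{pinP}, itself proved by the trace-versus-rank computation of \Claim{pinPprime}, or by citing that effective resistances are the uniform-spanning-tree marginals), and only reintroduces the conductance bound $R_e \leq 1/\conduc$ at the very end when converting $\sum_{e\in T} x_e x_e\transpose/R_e \preceq O(\log n/\log\log n)\, I_{\image L_G}$ into $L_T \preceq O\big(\tfrac{\log n}{\conduc\log\log n}\big) L_G$. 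You instead keep the raw rank-one matrices $A_e = b_e b_e\transpose$, absorb $\conduc$ into the target $B = L_G/\conduc$ so that the width condition $A_e \preceq B$ becomes literally $\conduc R_e \leq 1$, and take $x_0 = (R_e)_e$, whose membership in the spanning-tree polytope you verify by Foster's identity plus the same trace-versus-rank argument; this makes \Theorem{main} apply verbatim with $\mu = R = 1$. The mathematical content is identical (resistances as a fractional spanning tree; conductance lower bound as the width condition), so the two routes differ mainly in packaging: yours avoids the unit-norm/isotropic reformulation, while the paper's factoring through \Theorem{isotropic} is what lets it immediately upgrade, via \Theorem{strongIsotropic}, to the existential $O(1/\conduc)$ bound of \Theorem{verythintree} — a payoff your direct route does not provide. (One small implicit hypothesis in both arguments: $G$ must be connected, so that $\sum_e R_e = n-1$ and the base polytope of the graphic matroid consists of spanning trees.)
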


\Theorem{conductanceweightedtree} follows directly from \Theorem{main},
letting $\mat$ be the graphic matroid corresponding to $G$.
It also follows from \Theorem{isotropic}, as we show in \Appendix{transitive}.
That viewpoint is advantageous, since \Theorem{strongIsotropic} then immediately implies 
\begin{theorem}
\TheoremName{verythintree}
Let $G$ be a graph with $n$ vertices such that $\conduc \leq C_e$ for every edge $e$.
Then $G$ has a $O( 1/\conduc )$-spectrally-thin spanning subtree.
\end{theorem}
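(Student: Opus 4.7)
The plan is to mirror the reduction from \Theorem{conductanceweightedtree} to \Theorem{isotropic} (sketched in \Appendix{transitive}), but to invoke the existential \Theorem{strongIsotropic} in place of the algorithmic \Theorem{isotropic}. For each edge $e = uv \in E$, define $v_e := L_G^{+/2}(e_u - e_v)$ and $w_e := \sqrt{C_e}\, v_e$. First I would record two elementary identities: each $w_e$ is a unit vector, since $\norm{w_e}^2 = C_e \cdot (e_u-e_v)\transpose L_G^+ (e_u-e_v) = C_e R_e = 1$; and the $w_e$'s are in isotropic position inside the $(n-1)$-dimensional subspace $\image L_G$, since $\sum_e v_e v_e\transpose = L_G^{+/2} L_G L_G^{+/2} = I_{\image L_G}$ and Kirchhoff's identity $\sum_e R_e = \trace(L_G^+ L_G) = n-1$ make $p_e := R_e/(n-1)$ a valid probability distribution with $\sum_e p_e\, w_e w_e\transpose = \tfrac{1}{n-1} I_{\image L_G}$. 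Because $L_G^{+/2}$ is a bijection on $\image L_G$, a subfamily $\{w_e\}_{e \in S}$ is a basis of $\image L_G$ if and only if $S$ is a spanning tree of $G$.

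After this setup I would apply \Theorem{strongIsotropic}, viewed inside $\image L_G$ identified with $\bR^{n-1}$, to obtain a spanning tree $T$ satisfying $\sum_{e \in T} w_e w_e\transpose \preceq \alpha'\, I_{\image L_G}$ for an absolute constant $\alpha'$. Using $w_e w_e\transpose = C_e v_e v_e\transpose$ together with $C_e \geq \conduc$, this upgrades to $\sum_{e \in T} v_e v_e\transpose \preceq (\alpha'/\conduc)\, I_{\image L_G}$. Conjugating by $L_G^{1/2}$ and noting that $L_G^{1/2} v_e = e_u - e_v$ (since $e_u - e_v \in \image L_G$) then gives
\[
L_T \;=\; L_G^{1/2} \Big(\sum_{e \in T} v_e v_e\transpose \Big) L_G^{1/2} \;\preceq\; \frac{\alpha'}{\conduc}\, L_G,
\]
so $T$ is $O(1/\conduc)$-spectrally-thin.

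There is no real obstacle beyond bookkeeping. The key point is simply that since $L_G$ has a one-dimensional kernel (spanned by $\vec{1}$), everything must be set up consistently on the $(n-1)$-dimensional subspace $\image L_G$; the hypothesis $C_e \geq \conduc > 0$ forces $G$ to be connected, so $\dim \image L_G = n-1$ as needed. Once that is handled, the statement is a one-line consequence of \Theorem{strongIsotropic}, with the $O(1)$ guarantee coming for free from the Marcus--Spielman--Srivastava breakthrough.
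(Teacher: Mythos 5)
Your proposal is correct and is essentially the paper's own argument: the paper proves \Theorem{verythintree} by exactly this route, namely running the reduction of \Theorem{conductanceweightedtree} in \Appendix{transitive} (same vectors $w_e = \sqrt{C_e}\,L_G^{+/2}(e_u-e_v)$, same marginals $p_e = R_e/(n-1)$, same identification of bases of $\image L_G$ with spanning trees) with the existential \Theorem{strongIsotropic} substituted for the algorithmic \Theorem{isotropic}. One trivial quibble: the hypothesis $\conduc \leq C_e$ does \emph{not} force $G$ to be connected (two disjoint triangles have all edge conductances $3/2$), but connectivity is an implicit standing assumption here since a spanning tree must exist and $\sum_e R_e = n-1$ requires it, so this does not affect the proof.
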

We are not aware of any formal connection between \Theorem{verythintree}
and \Conjecture{goddyn} or the traveling salesman problem.

Although \Theorem{asadpour} and \Theorem{conductanceweightedtree} are formally incomparable,
it is worth understanding their similarities and differences.
Both results have a seemingly suboptimal factor of $\log n / \log \log n$.
\Theorem{asadpour} requires only a connectivity lower bound,
which is important in applications \cite{Goddyn,AGMOS},
but the resulting tree is thin, not spectrally-thin; also, their algorithm is randomized.
\Theorem{conductanceweightedtree} requires a conductance lower bound
(which is stronger than a connectivity lower bound),
but the resulting tree is spectrally-thin (which is stronger than being thin);
also, our algorithm can be made deterministic.
The use of randomization seems quite inherent in the algorithms \cite{AGMOS,CVZFOCS}
for \Theorem{asadpour}, as the thinness condition involves controlling exponentially many cuts,
which seems difficult to accomplish by a deterministic, polynomial-time algorithm.

The quantities $\connec$ and $\conduc$ can be related in certain classes of graphs.
We say that a family of graphs has \newterm{nearly equal resistances} if 
there is a constant $c$ (independent of the number of vertices)
such that $R_e \leq c R_f$ for all edges $e,f$.
For example, any Ramanujan graph has nearly equal resistances.
Edge-transitive graphs, such as hypercubes,
have nearly equal (in fact, exactly equal) resistances.

\begin{corollary}
\CorollaryName{transitive}
Let $G$ be a graph with $n$ vertices, nearly equal resistances,
and connectivity $\connec$.
Then there is a deterministic, polynomial time algorithm to construct a $O\big( \frac{\log n}{\connec \log \log n} \big)$-spectrally-thin tree of $G$.
\end{corollary}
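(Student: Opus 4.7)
The plan is to reduce the corollary to \Theorem{conductanceweightedtree} by showing that under the two hypotheses (nearly equal resistances and connectivity at least $\connec$), every edge must have effective conductance $C_e = \Omega(\connec)$. Once that lower bound is in hand, \Theorem{conductanceweightedtree} is applied as a black box and gives a deterministic polynomial-time construction of a $O(\log n/(\connec \log\log n))$-spectrally-thin tree, as claimed.

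The key tool for the lower bound on $C_e$ is Foster's resistance identity, which states that $\sum_{e \in E} R_e = n-1$ for any connected graph on $n$ vertices. First I would invoke the nearly-equal-resistances hypothesis: there is a constant $c$ such that $R_e \le c R_f$ for every pair of edges $e, f$. In particular, fixing any edge $f$ and summing over all $e$ gives
\[
  n - 1 ~=~ \sum_{e \in E} R_e ~\ge~ \frac{m}{c} \cdot R_f,
\]
so $R_f \le c(n-1)/m$ for every edge $f$.

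Next I would translate the connectivity hypothesis into a lower bound on $m$. Since the minimum degree is at least $\connec$, we have $2m \ge \connec n$, so $m \ge \connec n / 2$. Combined with the previous bound, this yields
\[
  R_f ~\le~ \frac{c(n-1)}{m} ~\le~ \frac{2c(n-1)}{\connec n} ~\le~ \frac{2c}{\connec},
\]
and hence $C_f = 1/R_f \ge \connec/(2c) = \Omega(\connec)$, uniformly over all edges $f$.

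With this uniform conductance lower bound, applying \Theorem{conductanceweightedtree} with $\conduc = \Omega(\connec)$ produces, via either randomized or deterministic pipage rounding, a spanning subtree of $G$ that is $O(\log n/(\conduc \log\log n)) = O(\log n/(\connec \log\log n))$-spectrally-thin. Since \Theorem{conductanceweightedtree} has a deterministic variant, the whole pipeline is deterministic and runs in polynomial time. There is no real obstacle here; the only step requiring care is the application of Foster's identity, and the combinatorial link $m = \Omega(\connec n)$ is standard. The content of the corollary is essentially that ``nearly equal resistances'' is precisely the extra structural condition that upgrades the combinatorial notion of connectivity to the analytic notion of conductance required by \Theorem{conductanceweightedtree}.
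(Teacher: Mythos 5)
Your argument is correct and matches the paper's proof: both derive $R_e = O(1/\connec)$ by combining Foster's identity $\sum_e R_e = n-1$ with the nearly-equal-resistances assumption and the bound $\card{E} \geq \connec n/2$ coming from connectivity (via minimum/average degree), and then invoke \Theorem{conductanceweightedtree}. No issues.
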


The proof is in \Appendix{transitive}.

\subsection{Column-subset selection}
\SectionName{CSS}

\todo{Consequences of MSS here?}

Column-subset selection is an important topic in
numerical linear algebra \cite{BMD,Tropp09,DR,BDM}.
Similar questions are considered in operator theory \cite{BT87,BT91,SSRI,Tropp09,Youssef}.
In this section we prove a non-isotropic analog of \Theorem{isotropic},
which gives a new result on column-subset selection.
For a real matrix $A$, let $\norm{A}_F = \sqrt{ \trace A \transpose A }$ denote its Frobenius norm.
The \newterm{stable rank} of $A$ is $\stable(A) := \norm{A}_F^2/\norm{A}^2$.

\begin{theorem}
\TheoremName{kashin}
Let $A$ be a real matrix of size $n \times m$
whose columns are denoted $a_1,\ldots,a_m$.
Suppose that $\norm{a_i} = 1 ~\,\forall i$.
Then there is a deterministic, polynomial time algorithm to compute $S \subseteq [m]$
of size $\card{S} \geq \floor{ \stable(A) }$
such that $\setst{ a_i }{ i \in S }$ is linearly independent,
and $\norm{ \sum_{i \in S} a_i a_i \transpose } \leq O(\log n / \log \log n)$.
\end{theorem}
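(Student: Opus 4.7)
The plan is to reduce \Theorem{kashin} to a direct application of \Theorem{main}, where the hard constraint is membership in the base polytope of a truncated linear matroid and the soft spectrahedral constraint captures the desired norm bound. Let $k := \floor{\stable(A)}$ and let $\mat_k$ denote the rank-$k$ truncation of the linear matroid on $\set{a_1, \ldots, a_m}$, whose bases are precisely the linearly independent $k$-subsets of the columns (this is possible because $k \leq \stable(A) \leq \rank(A)$). Let $P_k \subset \bR^m$ be its base polytope. I will invoke \Theorem{main} with this matroid, the matrices $A_i := a_i a_i\transpose$, $B := I$ (so $A_i \preceq B$ holds since $\norm{a_i} = 1$), and the fractional starting point $x_0 := (k/m) \vec{1}$.

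The key step is verifying that $x_0$ lies in $Q = P_k \cap \set{x : \sum_i x_i a_i a_i\transpose \preceq I}$. For the SDP constraint, $\sum_i x_{0,i} a_i a_i\transpose = (k/m) A A\transpose$ has spectral norm $(k/m) \norm{A}^2 = k/\stable(A) \leq 1$. For membership in $P_k$, we have $x_0([m]) = k$ immediately, and we must show $x_0(J) \leq \min(k, r(J))$ for every $J \subseteq [m]$, where $r$ denotes the rank in the (untruncated) linear matroid. The bound $x_0(J) = k|J|/m \leq k$ follows from $|J| \leq m$. For the more delicate bound $x_0(J) \leq r(J)$, let $A_J$ denote the submatrix of $A$ with columns indexed by $J$. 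The unit-norm hypothesis gives $\norm{A_J}_F^2 = |J|$, and combining the standard inequality $\norm{A_J}_F^2 \leq r(J) \norm{A_J}^2$ with $\norm{A_J} \leq \norm{A}$ yields the generalized stable-rank estimate $|J| \leq r(J) \norm{A}^2$. Hence $x_0(J) = (k/m)|J| \leq (k/\stable(A)) r(J) \leq r(J)$.

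Applying \Theorem{main} with deterministic pipage rounding then produces an extreme point $\chi(S)$ of $P_k$ satisfying $\sum_{i \in S} a_i a_i\transpose \preceq \alpha I$ for $\alpha = O(\log n / \log \log n)$. Since $\chi(S)$ is an extreme point of the base polytope of $\mat_k$, the set $\setst{a_i}{i \in S}$ is linearly independent and $\card{S} = k = \floor{\stable(A)}$, and the spectral bound gives $\norm{\sum_{i \in S} a_i a_i\transpose} \leq \alpha$, as required. The only obstacle beyond directly invoking \Theorem{main} is checking that the natural uniform point lies in the base polytope of the truncated matroid, which reduces to the generalized stable-rank inequality above and is proved by routine linear algebra. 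The essential difficulty is already packaged inside \Theorem{main} and the matrix-concentration machinery (\Lemma{pipage}) behind it.
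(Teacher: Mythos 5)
Your proposal is correct and follows essentially the same route as the paper: since $\norm{A}_F^2 = m$, your starting point $x_0 = (k/m)\vec{1}$ with $k = \floor{\stable(A)}$ coincides with the paper's fractional point $p_i = \floor{m/\norm{A}^2}/m$, your verification of $x_0 \in P$ via $|J| = \norm{A_J}_F^2 \leq r(J)\norm{A}^2$ is exactly the paper's proof of \Claim{CSpinP}, and the remaining steps (the width condition $a_i a_i\transpose \preceq I$, the SDP feasibility check, and invoking \Theorem{main} on the rank-$k$ truncated linear matroid) match the paper's argument. No gaps.
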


This is optimal with respect to $\card{S}$ as it can happen that $\stable(A)=\rank(A)$,
in which case $\setst{ a_i }{ i \in S }$ is linearly dependent whenever $\card{S} > \stable(A)$.

We now prove \Theorem{kashin} using \Theorem{main}.
Note that $\norm{A}_F^2=m$.
Let $p \in \bR^m$ be the vector with $p_i = \floor{m/\norm{A}^2}/m$ for all $i$.
Note that $\sum_i p_i = \floor{\stable(A)}$.
We claim that $p$ can be viewed as a ``fractional set'' of linearly independent vectors
of size $\floor{\stable(A)}$.
Formally, for any set $T \subseteq [m]$, let $A_T$ denote the submatrix of $A$
consisting of the columns in $T$.
Define the following family of sets
\[
    \cB
    ~=~ \setst{ I \subseteq [m] }{ \rank A_I=\card{I}=\floor{\stable(A)} }.
\]
Then $\cB$ is the base family of the linear matroid corresponding to $A$,
truncated to rank $\floor{\stable(A)}$.
Let $\mat$ denote that matroid and let $P$ denote its base polytope.

\begin{claim}
\ClaimName{CSpinP}
$p \in P$.
\end{claim}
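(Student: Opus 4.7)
The plan is to verify the inequalities defining the base polytope of the truncated linear matroid. Since $\mat$ is the linear matroid of $A$ truncated to rank $k := \floor{\stable(A)}$, its rank function is $r(J) = \min(\rank A_J,\, k)$ and
\[
P \;=\; \setst{x \in \bRnneg^m}{x(J) \leq r(J)~\forall J \subseteq [m],~ x([m]) = k}.
\]
Nonnegativity of $p$ is immediate, and the equality $p([m]) = m \cdot \floor{m/\norm{A}^2}/m = \floor{\stable(A)} = k$ is built into the definition. The only nontrivial task is therefore to check $p(J) \leq r(J)$ for every $J \subseteq [m]$. Since $p(J) \leq p([m]) = k$ always holds, it suffices to prove $p(J) \leq \rank A_J$.

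The first step is to bound $p(J)$ from above purely in terms of $\card{J}$ and $\norm{A}$. From the definition, $p(J) = \card{J} \cdot \floor{m/\norm{A}^2}/m \leq \card{J}/\norm{A}^2$. The second step is to bound $\rank A_J$ from below by the same quantity. For this I would invoke two elementary matrix facts: (i) passing to a column submatrix cannot increase spectral norm, so $\norm{A_J} \leq \norm{A}$; and (ii) the stable rank is always at most the rank, i.e.\ $\stable(B) \leq \rank(B)$ for any matrix $B$, since $\norm{B}_F^2$ equals the sum of squared singular values, each bounded by $\norm{B}^2$. Applying these with $B = A_J$, and using that the columns of $A_J$ are unit vectors so $\norm{A_J}_F^2 = \card{J}$, I get
\[
\rank A_J \;\geq\; \stable(A_J) \;=\; \frac{\card{J}}{\norm{A_J}^2} \;\geq\; \frac{\card{J}}{\norm{A}^2}.
\]

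Chaining the two bounds, $p(J) \leq \card{J}/\norm{A}^2 \leq \rank A_J$, which combined with $p(J) \leq k$ gives $p(J) \leq r(J)$ as required. There is no real obstacle here: the proof is a short computation, and the only thing to be careful about is remembering that the floor in the definition of $p_i$ only weakens the bound $p(J) \leq \card{J}/\norm{A}^2$, so there is no loss. The heart of the argument is simply the inequality $\stable(B) \leq \rank(B)$, applied to the column submatrix, which is precisely what makes the stable rank the right quantity to target in \Theorem{kashin}.
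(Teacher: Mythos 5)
Your proof is correct and follows essentially the same route as the paper: the paper's key step is $\rank A_J A_J\transpose \geq \trace(A_J A_J\transpose)/\norm{A_J A_J\transpose} \geq \card{J}/\norm{A}^2 \geq p(J)$, which is exactly your ``$\rank A_J \geq \stable(A_J) \geq \card{J}/\norm{A}^2$'' argument in different notation. Your extra observation $p(J) \leq k$ just handles the truncation explicitly, whereas the paper folds it into its (equivalent) description of $P$ via the untruncated rank function.
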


The proof is in \Appendix{CSS}.
Given this claim, all that remains is an easy application of \Theorem{main}.
Define $A_i = a_i a_i \transpose$, $B = I$ and
\[
Q ~=~ P \:\intersect\: \Bigl\{\: x \in \bR^m \;:\; \sum_i x_i A_i \preceq B \:\Bigr\}.
\]
We have $p \in Q$ by \Claim{CSpinP} and the fact that
\[
\sum_i p_i A_i 
~\preceq~ \sum_i \frac{A_i}{\norm{A}^2}
~=~ \frac{A A \transpose}{\norm{A}^2}
~\preceq~ I ~=~ B.
\]
Note that $A_i = a_i a_i \transpose \preceq I = B$.
\Theorem{main} gives a deterministic algorithm 
to construct an extreme point $\chi(S)$ of $P$
for which $\sum_{i \in S} A_i \preceq \alpha \cdot B$,
with $\alpha = O(\log n / \log \log n)$.
Since $S$ is a base of $\mat$,
the set $\setst{ a_i }{ i \in S }$ has rank equal to $\card{S} = \floor{\stable(A)}$.
This completes the proof of \Theorem{kashin}.

\paragraph*{Acknowledgements.}
N.~Harvey thanks Joel Friedman and Mohit Singh for numerous enlightening discussions.
We thank Isaac Fung for collaborating at a preliminary stage of this work.
We also thank Christos Boutsidis,
Joseph Cheriyan, Satoru Fujishige, Michel Goemans, Mary Beth Ruskai, Nikhil Srivastava,
Joel Tropp, Roman Vershynin and Jan Vondr\'ak for helpful discussions and suggestions.

\clearpage
\appendix
\section{Pipage Rounding}
\AppendixName{pip}

\onote{Changed $f$ to $g$.}

Let $p$ be a point in the matroid polytope $P$ and assume that $g$ satisfies \eqref{eq:fconcave}.
Delete all coordinates of $p$ that are equal to zero and consider the residual problem.
It is well-known that, for any such point $p$, there exists a chain of sets
$\emptyset = C_0 \subseteq C_1 \subseteq \cdots C_k \subseteq [m]$
whose corresponding constraints of $P$ span the constraints that are tight at $p$.
If $\card{C_i \setminus C_{i-1}}=1$ for every $i$ then these give $m$ linearly independent tight
constraints, so the point $p$ is an extreme point.
Otherwise there is some set $C_i$, $i \geq 1$, for which $\card{C_i \setminus C_{i-1}} > 1$.
In this case $p$ is not an extreme point.
To see this, let $a$ and $b$ be distinct elements of $C_i \setminus C_{i-1}$.
Note that the point $p + z(e_a - e_b)$ satisfies all the constraints that are tight at $p$.
So, for all $z$ in some open neighborhood of $0$,
the point $p + z(e_a - e_b)$ is still feasible for $P$.

Define
\begin{align*}
    \ell &= \min \setst{ z \in \bR }{ p + z(e_a - e_b) \in P }\\
    \text{and}\qquad u &= \max \setst{ z \in \bR }{ p + z(e_a - e_b) \in P }.
\end{align*}
Define
\[ 
p^\ell = p + \ell(e_a - e_b)
\qquad\text{and}\qquad
p^u = p + u(e_a - e_b).
\]
Since $g\big(p+z(e_a-e_b)\big)$ is concave, we must have either
\[
g\big(p^\ell) \leq g(p)
\qquad\text{or}\qquad
g\big(p^u\big) \leq g(p).
\]
Furthermore, both $p^\ell$ and $p^u$ lie on a lower-dimensional face than $p$ does.
So starting from some initial $p^0 \in P$, $m$ iterations suffice to find an extreme point $\hat{p}$ of $P$
with $g(\hat{p}) \leq g(p^0)$. 

The randomized version of pipage rounding does not even need access to the function $g$.
Instead, it simply chooses the next point $p'$ to be $p^{\ell}$ with probability $\frac{u}{u-\ell}$, or $p^{u}$ with probability $\frac{-\ell}{u-\ell}$.
This ensures that $\expect{p'} = p$, and the concavity of $g$ yields $\expect{g(p')} \leq g(p)$.
Thus applying this procedure to some initial point $p_0 \in P$ until an extreme point $\hat{p}$ is obtained, $\hat{p}$ satisfies $\expect{\hat{p}} = p_0$ and $\expect{g(\hat{p})} \leq g(p^0)$.

\section{Proofs of concavity under swaps}
\AppendixName{concavity}

\begin{proofof}{\Claim{chernoffconcave}}
We can rewrite
$$
g(x) ~=~ e^{-\theta t} \cdot \smallprod{i}{} \big( 1 + x_i (e^{\theta w_i}\!-\!1) \big).
$$
Rewriting $g\big(x+z(e_a-e_b)\big)$ in this way, all factors are non-negative and
only two of them depend on $z$, so for some $c \geq 0$ 
\begin{align*}
&\frac{d^2}{dz^2} g\big(x+z(e_a-e_b)\big) \\
&\:=\: c \cdot \frac{d^2}{dz^2} \Big(
        \big( 1 + (x_a\!+\!z) (e^{\theta w_a}\!-\!1) \big)
        \big( 1 + (x_b\!-\!z) (e^{\theta w_b}\!-\!1) \big)
    \Big) \\
&\:=\: c \cdot \Big( -2 \big(e^{\theta w_a}-1 \big)\big(e^{\theta w_b}-1 \big) \Big).
\end{align*}
This is non-positive so $g$ is concave under swaps.
\end{proofof}

\begin{proofof}{\Claim{submodconcave}}
Recall that $\theta < 0$.
Define $h : \set{0,1}^m \rightarrow \bR$ by $h(X) = e^{\theta f(X)}$.
By \Claim{supermodular}, $h$ is a supermodular function.
Its multilinear extension is
$$
H(x) ~=~ \expectover{X \sim \cD(x)}{h(X)}.
$$
\onote{Modified; this is fine right, don't need $e^{\theta f(X)}$?}
Since $-h$ is submodular, it follows from results of Calinescu et al.~\cite{CCPV} that
$\frac{\partial^2 H}{\partial x_i \partial x_j} \geq 0$ for any $i,j \in [m]$. \onote{The referee indicated that this sign was wrong, but it does seem to be correct.}
Since $g(x) = e^{-\theta t} \cdot H(x)$, 
the second derivative of
$$ z ~\mapsto~ g\big(x + z(e_i-e_j)\big) $$ is non-positive.
Thus $g$ is concave under swaps.
\end{proofof}

\begin{claim}
\ClaimName{supermodular}
Let $f : 2^{[m]} \rightarrow \bR$ be non-decreasing and submodular.
Let $g : \bR \rightarrow \bR$ be non-increasing and convex.
Then $g \circ f$ is supermodular. 
\end{claim}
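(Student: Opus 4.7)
The plan is to establish supermodularity of $h := g \circ f$ in its increasing-marginal-returns form: for every $S \subseteq T \subseteq [m]$ and $i \notin T$, we want
\[
h(T \cup \{i\}) - h(T) \;\geq\; h(S \cup \{i\}) - h(S).
\]
Fixing such $S, T, i$, I would introduce the four scalars $a = f(S)$, $b = f(S \cup \{i\})$, $c = f(T)$, $d = f(T \cup \{i\})$. The hypotheses on $f$ translate cleanly: monotonicity gives $a \leq b$, $c \leq d$, $a \leq c$, $b \leq d$, while submodularity gives $b - a \geq d - c$, equivalently $a + d \leq b + c$. The claim then reduces to the purely scalar statement: under these inequalities on $a,b,c,d$ and with $g$ non-increasing and convex, one has $g(a) + g(d) \geq g(b) + g(c)$.

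To prove this scalar statement, I would write the ``inner'' points $b$ and $c$ as convex combinations of the ``outer'' points $a$ and $d$ (the degenerate case $a = d$ forces $a = b = c = d$, which is trivial). Specifically, set $\lambda_b = (d-b)/(d-a)$ and $\lambda_c = (d-c)/(d-a)$, both in $[0,1]$, so that $b = \lambda_b a + (1-\lambda_b) d$ and similarly for $c$. Convexity of $g$ then yields
\[
g(b) + g(c) \;\leq\; (\lambda_b + \lambda_c)\, g(a) + (2 - \lambda_b - \lambda_c)\, g(d).
\]
A one-line computation gives $\lambda_b + \lambda_c - 1 = \bigl((a+d) - (b+c)\bigr)/(d-a)$, which is $\leq 0$ by the submodular inequality. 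Rewriting the bound as
\[
g(b) + g(c) \;\leq\; g(a) + g(d) + (\lambda_b + \lambda_c - 1)\,\bigl(g(a) - g(d)\bigr),
\]
the correction term is the product of a non-positive factor and a non-negative factor (since $g$ is non-increasing and $a \leq d$), hence is $\leq 0$. This gives the desired inequality.

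The only subtle point, though not really an obstacle, is pinning down where each hypothesis is used: convexity alone only bounds $g(b) + g(c)$ by a combination of $g(a)$ and $g(d)$ whose coefficients may sum to strictly more than $1$, with the excess governed precisely by the submodular gap $(b+c)-(a+d)$. Monotonicity of $g$ is exactly what makes this excess work in our favor via the sign of $g(a) - g(d)$. I don't expect any other difficulties; the argument is a short Karamata-style computation once the problem is reduced to the four values $a, b, c, d$.
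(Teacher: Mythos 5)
Your proof is correct. It differs from the paper's in its mechanics, though both boil down to a four-point scalar inequality for a convex, non-increasing $g$. You keep the four actual values $a=f(S)$, $b=f(S\cup\{i\})$, $c=f(T)$, $d=f(T\cup\{i\})$, encode submodularity as $a+d\leq b+c$, and run a Karamata-style estimate: write $b,c$ as convex combinations of the endpoints $a,d$, apply convexity, and use monotonicity of $g$ to show the coefficient excess $(\lambda_b+\lambda_c-1)\bigl(g(a)-g(d)\bigr)$ is nonpositive. The paper instead introduces the auxiliary value $d':=f(T)+f(S\cup\{i\})-f(S)$, chosen so that the two increments match ($d'-c=b-a$), applies the secant-slope inequality for convex functions to get $g(d')-g(c)\geq g(b)-g(a)$, and only then invokes submodularity (which gives $f(T\cup\{i\})\leq d'$) together with monotonicity of $g$ to replace $d'$ by the true value. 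So the two arguments use the same hypotheses in the same roles but package them differently: the paper's auxiliary point makes the convexity step a one-line slope comparison at the cost of an extra synthetic quantity, while your version avoids any auxiliary point and instead tracks the submodularity gap explicitly through the convex-combination coefficients, which arguably makes it more transparent exactly where monotonicity of $g$ is needed.
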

\begin{proof}
We require the following property of convex functions. Suppose $a,b,c,d$ satisfy
\begin{equation}
\EquationName{abcd}
   a ~\leq~ \min \set{b,c} ~\leq~ \max \set{b,c} ~\leq~ d.
\end{equation}
Then any function $g$ that is convex on $[a,d]$ satisfies
\begin{equation}
\EquationName{abcd2}
   \frac{ g(d)-g(c) }{ d-c } ~\geq~ \frac{ g(b)-g(a) }{ b-a }.
\end{equation}
Fix any $A \subseteq B \subseteq [m]$, and an element $x \in [m] \setminus B$.
Define
\begin{align*}
   a &:= f(A), \quad b := f(A+x), \quad c := f(B), \\
   d &:= f(B) + f(A+x) - f(A), \quad e := f(B+x).
\end{align*}
Since $f$ is non-decreasing, \eqref{eq:abcd} holds.
Since $f$ is submodular, $e \leq d$ holds.
Since $g$ is non-increasing, $g(e) \geq g(d)$.
Combining that with \eqref{eq:abcd2} and the observation that $d-c = b-a$,
we obtain
$$
   g(e)-g(c) ~\geq~ g(d)-g(c) ~\geq~ g(b)-g(a)
$$
That is,
$$
    g\big(f(B+x)\big) - g\big(f(B)\big) ~\geq~ g\big(f(A+x)\big) - g\big(f(A)\big),
$$
so $g \circ f$ is supermodular.
\end{proof}

\begin{proofof}{\Lemma{pipage}}
We will show that
$$
\forall x \in (0,1)^m, ~\forall a,b \in [m],
~~~
z \mapsto g_{t,\theta}\big(x + z (e_a\! - e_b)\!\big)
~~\text{is concave}.
$$
The boundary of $[0,1]^m$ is handled by continuity.
Note that
\[
\expectover{X \sim \cD(x)}{ e^{\theta X_i M_i} }
~=~ x_i \cdot e^{\theta M_i} \,+\, (1-x_i) \cdot I
~=:~ C_i.
\]
Adding $z$ (sufficiently small) to the sampling probability of coordinate $i$,
the expectation becomes
\begin{align*}
\expectover{X \sim \cD(x + z e_i)}{ e^{\theta X_i M_i} }
~&=~ (x_i+z) \cdot e^{\theta M_i} \,+\, (1-x_i-z) \cdot I \\
~&=~ C_i + z \underbrace{\Big(\: e^{\theta M_i} \:-\: I \:\Big)}_{=:~ K_i}.
\end{align*}
Note that $C_i \succeq I$ and $K_i \succeq 0$ because $M_i \succeq 0$ and $\theta > 0$.
Furthermore, the matrices $C_i$ and $K_i$ commute 
since any eigenbasis for $M_i$ is also an eigenbasis of $C_i$ and $K_i$.

To finish the proof we must show that, for distinct $a, b \in [m]$,
\tcversion{\begin{multline}}{\begin{equation*}}
    z \:\mapsto\:
    \trace \exp\Bigl(
      \log\big( C_a + z K_a \big)
    + \log\big( C_b - z K_b \big)
    \tcskip
    + \underbrace{ \sum_{i \not\in \set{a,b}}
        \log \expectover{X \sim \cD(x)}{ e^{\theta X_i M_i} } }_{=:~ L} \Bigr)
\tcversion{\end{multline}}{\end{equation*}}
is concave in a neighborhood of $0$. This follows from \Theorem{liebvariant}.
\end{proofof}

\section{Proofs of Applications}

\subsection{Rounding decompositions of the identity}
\AppendixName{Psubmodular}

Here, we give a generalization of \Theorem{isotropic} to a decomposition of the identity into matrices of arbitrary rank.
\begin{theorem}
\TheoremName{largerank}
Let $X_1,\ldots,X_m \in \Psd$ satisfy $\sum_{i=1}^m X_i = I$. 
Then there exists a subset $S \subseteq [m]$ with $\card{S} \leq n$
such that $ \sum_{i \in S} X_i/\trace X_i $
has full rank and maximum eigenvalue at most $\alpha = O(\log n / \log \log n)$.
\end{theorem}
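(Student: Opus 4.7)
The plan is to reduce the general-rank case to the rank-one setting handled by \Theorem{main}, by spectrally decomposing each $X_i$ and working on a linear matroid whose ground set is indexed by eigenvectors rather than by $[m]$ itself.

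I will first spectrally decompose each $X_i = \sum_{j=1}^{k_i} \lambda_{ij} u_{ij} u_{ij}\transpose$, where $k_i = \rank(X_i)$, the $\lambda_{ij} > 0$, and $\{u_{ij}\}_j$ is an orthonormal basis of $\image(X_i)$. Since $\sum_{i,j} \lambda_{ij} u_{ij} u_{ij}\transpose = \sum_i X_i = I$, the collection $\{u_{ij}\}$ spans $\bR^n$. Let $\mat$ be the linear matroid on the ground set $E = \{(i,j): 1 \leq i \leq m,\, 1 \leq j \leq k_i\}$ given by linear dependence of the $u_{ij}$'s; it has rank $n$, so every base has size exactly $n$. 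Let $P$ be its base polytope. I will apply \Theorem{main} with matrices $A_{(i,j)} := X_i/\trace(X_i)$, $B := I$, and initial point $\tilde{x}_0 \in \bR^E$ defined by $\tilde{x}_{0,(i,j)} := \lambda_{ij}$. The width hypothesis $A_{(i,j)} \preceq B$ follows from $\lmax(X_i) \leq \trace(X_i)$, and $\sum_{(i,j)} \tilde{x}_{0,(i,j)} A_{(i,j)} = \sum_i \trace(X_i) \cdot X_i/\trace(X_i) = \sum_i X_i = I$, so $\tilde{x}_0 \in Q$ provided $\tilde{x}_0 \in P$.

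Granting $\tilde{x}_0 \in P$, \Theorem{main} produces a base $S' \subseteq E$ of $\mat$ (of size $n$) with $\sum_{(i,j) \in S'} X_i/\trace(X_i) \preceq \alpha I$ for some $\alpha = O(\log n/\log\log n)$. I then project: set $S := \{i : (i,j) \in S' \text{ for some } j\}$, so $|S| \leq |S'| = n$. Because each $X_i/\trace(X_i)$ with $i \in S$ appears at least once in the sum over $S'$ and all summands are PSD,
\[
\sum_{i \in S} X_i/\trace(X_i) \;\preceq\; \sum_{(i,j) \in S'} X_i/\trace(X_i) \;\preceq\; \alpha I.
\]
Moreover $\{u_{ij} : (i,j) \in S'\}$ is a basis of $\bR^n$ contained in $\sum_{i \in S} \image(X_i) = \image\bigl(\sum_{i \in S} X_i\bigr)$, so $\sum_{i \in S} X_i$, and hence $\sum_{i \in S} X_i/\trace(X_i)$, has full rank.

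The main obstacle is the analog of \Claim{pinP}: verifying $\tilde{x}_0 \in P$. The total-mass condition is immediate, since $\tilde{x}_0(E) = \sum_{i,j} \lambda_{ij} = \trace(I) = n = r_{\mat}(E)$. For the upper-bound constraints, for each $J \subseteq E$ I consider the partial sum $N_J := \sum_{(i,j) \in J} \lambda_{ij} u_{ij} u_{ij}\transpose$, which is PSD and, because the omitted terms are themselves PSD, satisfies $N_J \preceq \sum_{(i,j) \in E} \lambda_{ij} u_{ij} u_{ij}\transpose = I$. Hence $\|N_J\| \leq 1$, which gives $\tilde{x}_0(J) = \trace(N_J) \leq \rank(N_J) \leq \dim \operatorname{span}\{u_{ij} : (i,j) \in J\} = r_{\mat}(J)$. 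This trace-vs.-rank inequality, powered by $N_J \preceq I$, is the key structural observation that makes the reduction to \Theorem{main} work.
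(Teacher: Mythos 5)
Your proposal is correct, and it takes a genuinely different route from the paper. The paper stays on the original ground set $[m]$: it treats $p_i = \trace X_i$ as a fractional multiset, works with the polymatroid base polytope $B(r)$ for $r(J) = \rank\big(\sum_{j \in J} X_j\big)$ intersected with the box $\floor{p} \leq x \leq \ceil{p}$, verifies $p$ lies in this polytope (\Claim{pinPprime}), and then invokes the folklore fact (\Claim{Psubmodular}, via reductions and contractions of submodular functions) that the shifted polytope is a matroid base polytope before applying \Theorem{main}. You instead blow each $X_i$ up into its rank-one spectral pieces $\lambda_{ij} u_{ij} u_{ij}\transpose$, take the linear matroid on the eigenvectors (the same device used for \Theorem{isotropic} and \Claim{pinP}), attach the \emph{full} matrix $X_i/\trace X_i$ to every copy $(i,j)$, and project the selected base back to $[m]$. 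The analytic core is shared: your bound $N_J \preceq I \Rightarrow \trace N_J \leq \rank N_J$ is exactly the trace-versus-rank inequality powering \Claim{pinPprime}. What your route buys: it bypasses the box/floor–ceiling reduction and the submodular-polytope folklore entirely, and it makes the full-rank conclusion explicit — the chosen eigenvectors form a basis of $\bR^n$ contained in $\image\big(\sum_{i \in S} X_i\big)$ — a point the paper's write-up leaves implicit (its $S = \supp \hat{x}$ argument only directly yields $|S| \leq n$ and the eigenvalue bound). The cost is an enlarged ground set of size $\sum_i \rank X_i \leq mn$ with repeated matrices, which is harmless: it is still polynomial, nothing in \Theorem{main} requires the $A_i$ to be distinct, and $\alpha$ depends only on the matrix dimension $n$, not on the ground-set size. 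The paper's formulation, in exchange, keeps the ground set at $[m]$ and exhibits the output directly as an integral rounding of the fractional vector $p$.
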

\begin{proof}
Let $V_i$ be a matrix such that $X_i = V_i V_i \transpose$.
Define the function $r : 2^{[m]} \rightarrow \bZnneg$ by
\[
r(J)
 ~=~ \rank \Big( \sum_{j \in J} V_j V_j \transpose \Big)
 ~=~ \rank V_J,
\]
where $V_J$ is the matrix obtained by concatenating in any order all columns
from the matrices $\setst{ V_j }{ j \in J }$.
It is well-known that such a function $r$ is:
\begin{itemize}
\item \textit{Normalized:} $r(\emptyset) = 0$,
\item \textit{Monotone:} $r(I) \leq r(J)$ whenever $I \subseteq J$, and
\item \textit{Submodular:} $r(I) + r(J) \geq r(I \union J) + r(I \intersect J)$
for all $I, J \subseteq [m]$.
\end{itemize}
For any normalized, monotone, submodular function $f : 2^{[m]} \rightarrow \bR$,
its \textit{base polytope} is defined to be
\tcversion{
\begin{multline*}
    B(f) ~:=~ \bigl\{\: x \in \bRnneg^m \::\: x(J) \leq f(J) ~\:\forall J \subseteq [m],\\
    ~~\text{and}~~ x([m]) = f([m]) \:\bigr\}.
\end{multline*}
}{\[  B(f) ~:=~ \bigl\{\: x \in \bRnneg^m \::\: x(J) \leq f(J) ~\:\forall J \subseteq [m], ~~\text{and}~~ x([m]) = f([m]) \:\bigr\}. \]}

Define the vector $p \in \bR^m$ by $p_i = \trace X_i$.
Note that $p \geq 0$ and $\sum_i p_i = \trace( \sum_i X_i ) = n$,
so we can think of $p$ as defining a ``fractional multiset'' of $n$ matrices.
Intuitively, we want to ``round'' the coordinates of $p$ to integers.
To that end, define the polytope
\[
    P' ~:=~ B(r) \:\intersect\: \setst{ x }{ \floor{p} \leq x \leq \ceil{p} },
\]
where $\floor{p}$ and $\ceil{p}$ respectively denote the component-wise
floor and ceiling of the vector $p \in \bR^m$.
The polytope $P'$ is not necessarily a matroid polytope;
for example, a vector in $P'$ could have a coordinate strictly greater than $1$.

\begin{claim}
\ClaimName{pinPprime}
$p \in P'$.
\end{claim}

\begin{claim}
\ClaimName{Psubmodular}
$P := \setst{ x-\floor{p} }{ x \in P' }$ is a matroid base polytope.
\end{claim}

\Claim{pinPprime} is proven below.
\Claim{Psubmodular} is a folklore result that can be derived using reductions and contractions
of submodular functions \cite[\S 3.1(b)]{Fujishige}; see also Fujishige's
remarks on crossing submodular functions \cite[Eq.~(3.97)]{Fujishige}.

Define $A_i = X_i / \trace X_i$, $B = I$ and
\[ 
Q ~:=~ P \:\intersect\: \Bigl\{ \, x \in \bR^m \,:\, \sum_i x_i A_i \preceq B \,\Bigr\}.
\]
Setting $x = p - \floor{p}$, we have $x \in P$ by \Claim{pinPprime} and
\[
\sum_i x_i A_i
~\preceq~ \sum_i p_i \frac{X_i}{\trace X_i}
~=~ \sum_i X_i
~=~ B,
\]
so $x \in Q$.

Since $\trace A_i = 1$, we have $A_i \preceq B$.
Applying \Theorem{main}, we obtain a vector $\hat{x} \in \set{0,1}^n$ that is an extreme point of $P$,
and for which $\sum_i \hat{x}_i A_i \preceq \alpha B$.
Let $S$ be the support of $\hat{x}$.
Note that $\hat{x} + \floor{p} \in P'$.
So
\[
\card{S}
~=~ \sum_{i=1}^m \hat{x}_i
~\leq~ \sum_{i=1}^m (\hat{x}_i + \floor{p_i})
~\leq~ r([m])
~=~ n
\]
and $\sum_{i \in S} X_i / \trace X_i \preceq \alpha B$ as required.
\end{proof}

\begin{proofof}{\Claim{pinPprime}}
The box constraint $\floor{p} \leq p \leq \ceil{p}$ is trivially satisfied.
We have noted above that $\sum_i p_i = n$,
so the constraint $p([m]) \leq r([m]) = n$ is also satisfied.

It remains to show that $\sum_{i \in I} p_i \leq r(I)$ for all $I$.
For any positive semidefinite matrix,
the average of the non-zero eigenvalues is a lower bound on the maximum eigenvalue, so
\[
\frac{ \trace( \sum_{i \in I} X_i ) }
     { \rank( \sum_{i \in I} X_i ) }
     ~\leq~ \Bigl\| \sum_{i \in I} X_i \Bigr\|
     ~\leq~ \Bigl\| \sum_{i=1}^m X_i \Bigr\|
     ~=~ 1.
 \]
Thus
$
\sum_{i \in I} p_i 
 = \trace( \smallsum{i \in I}{} X_i )
 \leq \rank( \smallsum{i \in I}{} X_i )
 = r(I).
$
This proves that $p \in P$.
\end{proofof}

\subsection{Thin trees}
\AppendixName{transitive}

\begin{proofof}{\Theorem{conductanceweightedtree}}
Recall the notation defined in \Section{prelim}.
For $e = uv \in E$, define vectors $x_e = L_G^{+/2} (e_u - e_v)$
and $w_e = x_e / \norm{x_e}$. 
Then $R_e = \norm{x_e}^2$; let $p_e = R_e / (n-1)$.
It is well-known \cite{Bollobas} that the vector of effective resistances describes the edge marginals of the uniform spanning tree, and hence that $\sum_e p_e = 1$.
Then, following the argument of Spielman and Srivastava \cite{SS08},
\begin{align*}
\sum_{e \in E} p_e w_e w_e \transpose
&~=~
\frac{1}{n-1} \sum_{e \in E} x_e x_e \transpose \\
&~=~
\frac{1}{n-1} L_G^{+/2} \Big( \sum_{e \in E} (e_u \!-\! e_v) (e_u \!-\! e_v) \transpose \Big) L_G^{+/2} \\
&~=~
\frac{1}{n-1} I_{\image L_G}.
\end{align*}
We view the vectors $\setst{ w_e }{ e \in E }$ as $(n-1)$-dimensional vectors
in their linear span and apply \Theorem{isotropic}.
This gives a set $T \subseteq E$ of size $n-1$ such that $\setst{ w_e }{ e \in T }$ 
is linearly independent and
\[
    \sum_{e \in T} w_e w_e \transpose
    ~\preceq~   O(\log n / \log \log n) \cdot I_{\image L_G}.
\]
The first two conditions imply that the edges in $T$ form a spanning tree
on the vertex set $V$.
Then since $R_e = \norm{x_e}^2$, we have
\[
    \sum_{e \in T} \frac{x_e x_e \transpose}{ R_e }
    ~\preceq~   O(\log n / \log \log n) \cdot I_{\image L_G}.
\]
Equivalently,
\[
    \sum_{uv \in T} \frac{ (e_u \!-\! e_v) (e_u \!-\! e_v) \transpose }{ R_{uv} }
    ~\preceq~ O(\log n / \log \log n) \cdot L_G.
\]
Since we assume that $\conduc \leq C_e = 1/R_e$ for every edge $e$, we obtain
\[
    L_T ~=~ \sum_{uv \in T} (e_u \!-\! e_v) (e_u \!-\! e_v) \transpose
    ~\preceq~ O\Big( \frac{ \log n }{ \conduc \log \log n } \Big) \cdot L_G.
\]
So $T$ is $O\big( \frac{\log n}{\conduc \log \log n} \big)$-spectrally-thin.
\end{proofof}

\begin{proofof}{\Corollary{transitive}}
By the nearly equal resistances assumption, $R_e = O(\frac{n-1}{\card{E}})$ for every edge $e$.
On the other hand, the connectivity $\connec$ is at most the average degree, which is $2 \card{E}/n$.
Thus $R_e = O(1/\connec)$ for every edge $e$.
The result now follows from \Theorem{conductanceweightedtree}.
\end{proofof}

\subsubsection{Proof of {\protect \Theorem{nospectrallythintree}}}
\AppendixName{nospectrallythintree}

Assume $n$ is a multiple of $4$.
We define a graph that is related to an example of Boyd and Pulleyblank \cite[p.~180]{BP}.
There are two disjoint cycles, each of length $n/2$.
Let us number the vertices in the first cycle as $1,\ldots,n/2$
and the vertices in the second cycle as $n/2+1,\ldots,n$.
Add a matching where the $i\th$ edge connects the $i\th$ vertex 
in the first cycle and the $i\th$ vertex in the second cycle.
The edges in the cycles each have weight $w_c := k/2$
and the edges in the matching each have weight $w_m := 2k/n$.
Obviously this weighted graph has connectivity at least $k$.

Let $T$ be any subtree of $G$, without any weights on the edges of $T$.

\begin{claim}
Suppose that $T$ uses only a single matching edge.
There exists a vector $z$ such that
$$
\frac{z \transpose L_T z}{z \transpose L_G z} ~=~ \Omega\Big(\frac{\sqrt{n}}{k}\Big).
$$
\end{claim}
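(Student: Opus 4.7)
My plan is to exploit the ``bottleneck'' structure created by $T$'s unique matching edge by testing against a smoothly-varying vector $z$ concentrated in a cyclic window of radius $\Theta(\sqrt{n})$ around that edge. The intuition is that the ladder-like graph $G$ has an effective length scale of $\sqrt{n}$---the distance over which current injected across a matching edge spreads along the cycles before being re-balanced by neighboring rungs---and this is precisely the order of the effective resistance $R_G(e_m) = \Theta(\sqrt{n}/k)$ of any matching edge in $G$. A combinatorial cut test only yields ratio $O(1/k)$ (every cut isolating a cycle has weight $\Omega(k)$ while crossing only a bounded number of tree edges), so the test vector must be smooth rather than a $\{0,1\}$ indicator.

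By the symmetry of $G$ under cyclic rotation of each cycle, I may assume without loss of generality that the matching edge contained in $T$ is $(1,\,n/2+1)$. Define
\[
z_i ~=~ \tfrac{1}{2} \max\bigl\{\, 0,\; 1 - d(i)/\sqrt{n} \,\bigr\} \quad\text{for } i \in \{1, \ldots, n/2\}, \qquad z_{n/2+i} ~=~ -z_i,
\]
where $d(i)$ denotes the cyclic distance in cycle 1 from vertex $i$ to vertex 1. The lower bound $z \transpose L_T z \geq 1$ is immediate: since $(1, n/2+1) \in T$ and every summand of $\sum_{(a,b) \in T} (z_a - z_b)^2$ is non-negative, the single matching edge alone contributes $(z_1 - z_{n/2+1})^2 = 1$.

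The crux of the argument is the upper bound $z \transpose L_G z = O(k/\sqrt{n})$, which I would establish by splitting the Laplacian quadratic form into cycle-edge and matching-edge contributions. Since $z$ is piecewise-linear on each cycle with slope $1/(2\sqrt{n})$, each of the $O(\sqrt{n})$ cycle edges in the tent's support contributes $w_c/(4n)$, giving a total of $O(w_c/\sqrt{n}) = O(k/\sqrt{n})$ across both cycles. Each matching edge $(i, n/2+i)$ with $i$ in the tent's support (again $O(\sqrt{n})$ such edges) contributes $(2 z_i)^2 w_m \leq w_m$, for a total of $O(\sqrt{n}\, w_m) = O(k/\sqrt{n})$. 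The main conceptual obstacle is recognizing the correct tent radius: a tent of radius $w$ gives cycle and matching contributions of orders $k/w$ and $kw/n$ respectively, and $w = \sqrt{n}$ is the unique balance point. This scale is invisible to any cut-based test; once it is identified, combining the two bounds above immediately yields the desired ratio $\Omega(\sqrt{n}/k)$.
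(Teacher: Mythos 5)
Your proposal is correct and follows essentially the same route as the paper: an explicit test vector localized at scale $\sqrt{n}$ around the tree's matching edge, with the Rayleigh quotient bounded by splitting the denominator into cycle-edge and matching-edge contributions, each of order $k/\sqrt{n}$. The only difference is cosmetic---the paper uses an exponentially decaying profile $c^{|n/4-i|}$ with $c = 1 - n^{-1/2}$ supported on a single cycle, whereas you use a truncated linear tent placed antisymmetrically on both cycles; both yield numerator $\geq 1$ and denominator $O(k/\sqrt{n})$, hence the ratio $\Omega(\sqrt{n}/k)$.
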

\begin{proof}
Without loss of generality, $\set{n/4,3n/4}$ be the matching edge used by $T$.
Let $\alpha=n^{-0.5}$ and $c=1-\alpha$.
Define the vector $z$ where
$$
z_i ~=~ 
\begin{cases}
c^{\abs{n/4-i}} &\quad (i \leq n/2) \\
0 &\quad (i>n/2).
\end{cases}
$$

\noindent
\textit{Numerator:}
The numerator is
$z \transpose L_T z = \sum_{uv \in E} (z_u-z_v)^2 \geq (z_{n/4}-z_{3n/4})^2 = 1$.

\vspace{3pt}
\noindent
\textit{Denominator:}
To evaluate $z \transpose L_G z$, we separately consider the cycle edges and matching edges.
The contribution from the matching edges is
\begin{align*}
C_m~:=~ w_m \cdot \sum_{i=1}^{n/2} (z_i - z_{n/2+i})^2 
    ~<~ 2 w_m \cdot \sum_{i \geq 0} c^{2i} 
    ~<~ \frac{2 w_m}{1-c} 
    ~=~ \frac{2 w_m}{\alpha}.
\end{align*}
The contribution from the cycle edges is
\begin{align*}
C_c~:=~& w_c \sum_{i=2}^{n/2} (z_{i-1} - z_i)^2 ~+~ w_c (z_1-z_{n/2})^2 
    ~<~  2 w_c \sum_{i \geq 1} (c^{i-1} - c^i)^2  \\
    ~=~& 2 w_c (1-c)^2 \sum_{i \geq 0} c^{2i} 
    ~=~  2 w_c \frac{(1-c)^2}{1-c^2} 
    ~<~  2 w_c \frac{(1-c)^2}{1-c} 
    ~=~  2 w_c \alpha.
\end{align*}
Since $\alpha = n^{-0.5}$, we get $C_m = O(k/\sqrt{n})$ and $C_c = O(k/\sqrt{n})$,
so $z \transpose L_G z = O(k/\sqrt{n})$.
\end{proof}

\begin{claim}
Suppose that $T$ uses $m > 1$ matching edges.
There exists a vector $z$ such that
$$
\frac{z \transpose L_T z}{z \transpose L_G z} ~=~ \Omega\Big(\frac{\sqrt{n}}{k}\Big).
$$
\end{claim}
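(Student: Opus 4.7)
The plan is to mirror the proof of the preceding claim almost verbatim; the only structural change is the choice of the centre of the exponentially decaying test vector. Since $m > 1 \geq 1$, the tree $T$ uses at least one matching edge $\{p, p+n/2\}$. By the cyclic rotational symmetry of $G$, I may relabel the vertices on the two cycles (and the matching between them) so that $p = n/4$, placing a used matching edge at $\{n/4, 3n/4\}$.

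With this relabelling in place, I would take the identical test vector as in the preceding proof,
\[
    z_i = c^{|n/4 - i|} \quad (i \leq n/2), \qquad z_i = 0 \quad (i > n/2),
\]
with $c = 1 - n^{-1/2}$. The denominator estimate $z^{\mathsf{T}} L_G z = O(k/\sqrt{n})$ carries over without modification: the bounds derived for the matching contribution $C_m$ and the cycle contribution $C_c$ are properties of $G$ alone and do not use anything about $T$.

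For the numerator, $z^{\mathsf{T}} L_T z = \sum_{uv \in T} (z_u - z_v)^2$ is a sum of non-negative terms, and contains the term $(z_{n/4} - z_{3n/4})^2 = (1-0)^2 = 1$ because $\{n/4, 3n/4\} \in T$. Hence $z^{\mathsf{T}} L_T z \geq 1$. Any additional matching edges of $T$ (there are $m-1$ of them), as well as whatever cycle edges $T$ uses, contribute only further non-negative terms and cannot harm the lower bound. Combining the two estimates yields the ratio $\Omega(\sqrt{n}/k)$.

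There is no real obstacle here. The hypothesis $m > 1$ is used only to guarantee that some matching edge of $T$ exists, which is already ensured by $m \geq 1$. The two claims appear split for expository reasons rather than technical ones: the first illustrates the method in the cleanest setting, and for the second it suffices to note that the same test vector works provided it is centred at one of the matching edges $T$ actually uses.
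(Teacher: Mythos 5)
Your proof is correct, but it takes a different (and simpler) route than the paper. The paper does not recycle the single-center test vector: it defines a new vector $z_i = c^{\min_j d_1(i,a_j)}$ on the first cycle, decaying from the \emph{nearest} endpoint among all $m$ matching edges $\{a_j,b_j\}$ used by $T$, and then argues that the numerator is at least $m$ (each used matching edge contributes at least $1$) while the denominator is at most $m$ times the single-edge bound, i.e.\ $O(mk/\sqrt{n})$, so the two factors of $m$ cancel. Your version centers the exponentially decaying vector at just one matching edge actually used by $T$, observes that the denominator estimate is a property of $G$ and $z$ alone (so the $O(k/\sqrt{n})$ bound from the first claim is unchanged), and gets numerator at least $1$ from that single edge --- yielding the same $\Omega(\sqrt{n}/k)$ ratio with no bookkeeping of the other matching edges. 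The by-product you point out is real: with your argument the hypothesis $m>1$ is never used, only $m\ge 1$, which holds for every spanning tree since $T$ must cross between the two cycles; so the two claims could indeed be merged into one. What the paper's multi-center vector buys is only that the numerator grows linearly in $m$, but since its denominator bound grows at the same rate, there is no quantitative advantage; both approaches prove exactly the stated bound.
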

\begin{proof}
Let the matching edges used by $T$ be $\set{a_1,b_1}, \set{a_2,b_2}, \ldots, \set{a_m,b_m}$.
Define the vector $z$ by
$$
z_i ~=~ 
\begin{cases}
c^{\min_j d_1(i,j)} &\quad (i \leq n/2) \\
0 &\quad (i>n/2)
\end{cases}
$$
where $d_1$ denotes distance in the first cycle.

\vspace{3pt}\noindent\textit{Numerator:}
As before, every matching edge used by $T$ contributes at least $1$,
so $z \transpose L_T z \geq m$.

\vspace{3pt}\noindent\textit{Denominator:}
Obviously $z \transpose L_G z$ is no more than $m$ times what it would be if $T$ used
only a single matching edge.
That is, $z \transpose L_G z \leq O(m k/\sqrt{n})$.
\end{proof}

\subsection{Column-subset selection}
\AppendixName{CSS}

\begin{proofof}{\Claim{CSpinP}}
The proof is analogous to \Claim{pinP}.
As before, let $r : 2^{[m]} \rightarrow \bZnneg$ be defined by
\[
    r(S)
    ~:=~ \operatorname{dim} \operatorname{span} \setst{ a_i }{ i \in S }
    ~=~ \rank A_S A_S \transpose.
\]
Then 
\tcversion{
\begin{align*}
P ~:=~ \Bigl\{\: x \in \bRnneg^m ~:~ &x(J) \leq r(J) ~~\forall J \subseteq [m],\\
&\text{and}~~ x([m]) = \floor{\stable(A)} \:\Bigr\}.
\end{align*}
}{
\[ P ~:=~ \Bigl\{\: x \in \bRnneg^m ~:~ x(J) \leq r(J) ~~\forall J \subseteq [m], ~\text{and}~ x([m]) = \floor{\stable(A)} \:\Bigr\}. \]
}

For any set $J \subseteq [m]$, we have
\begin{align*}
    r(J) ~&=~ \rank A_J A_J \transpose
         ~\geq~ \frac{ \trace A_J A_J \transpose }{ \norm{A_J A_J \transpose} }
         ~\geq~ \frac{\card{J}}{ \norm{A}^2 }\\
         ~&\geq~ \floor{\frac{m}{ \norm{A}^2 }} \frac{\card{J}}{m}
         ~=~ p(J).
\end{align*}
Since $\sum_i p_i = \floor{\stable(A)}$, we have $p \in P$.
\end{proofof}

\section{Proof of Theorem~\ref{LIEBVARIANT}}
\AppendixName{epstein}

\newcommand{\Cp}{\mathbb{C}_{++}}
\newcommand{\Cnneg}{\mathbb{C}_{+}}
\newcommand{\impos}{\mathcal{I}_{++}}
\newcommand{\imnneg}{\mathcal{I}_{+}} 
\newcommand{\imneg}{\mathcal{I}_{--}}
\newcommand{\herm}[1]{\mathbb{H}^{#1}}
\newcommand{\complexmat}[1]{M_{#1}(\mathbb{C})}
\newcommand{\lambdamin}{\lambda_{\textit{min}}}

The outline of this proof follows a proof of Lieb's theorem presented by Epstein~\cite{Epstein}.
Epstein's proof proceeds via complex analytic techniques, and in particular makes use of some powerful results involving \emph{Herglotz functions}
(see, e.g., \cite{Bhatia, GT}).
While an effort has been made to make the treatment here accessible, a modicum of complex analysis will be assumed; a standard reference is~\cite{Rudin}.

For a complex number $z$, let $\Re z$ and $\Im z$ respectively denote the real
and imaginary parts of $z$.
Let $\Cp = \{ z \in \bC \mid \Im z > 0 \}$ denote the open upper half-plane, and $\Cnneg$ the closed upper half-plane. Define $\bC_{--}$ and $\bC_-$ in the obvious corresponding way.

\begin{definition}
A function $g : \Cp \rightarrow \bC$ is called a \emph{Herglotz function} 
(or \emph{Pick function}) if it is analytic on $\Cp$ and $g(\Cp) \subseteq \Cp$.
\end{definition}

For example the map $z \mapsto az+b$ is Herglotz if $a \in \bRnneg$ and $b \in \Cnneg$.
The maps $z \mapsto -1/z$ and $z \mapsto \log z$ are also Herglotz.

A key reason that Herglotz functions will be useful is the following classical theorem (see, e.g.,
\cite[Eq.~V.42]{Bhatia} or \cite[p.~542]{HornJohnsonTopics}).

\begin{theorem}[Herglotz-Nevanlinna-Riesz representation theorem]
    \label{thm:hnr}
    For any Herglotz function $g$, there exists $a \in \bR$, $b \in \bRnneg$
    and a positive Borel measure $\mu$ on $\bR$, with $\int_{\mathbb{R}} \frac{1}{t^2+1}d\mu(t) < \infty$, s.t.
\begin{equation}\label{eq:hnr} g(z) ~=~ a + bz + \int_{\mathbb{R}} \Big( \frac{1}{t - z} - \frac{t}{1+t^2}\Big) \, d\mu(t) \quad \forall z \in \Cp. 
\end{equation}
\end{theorem}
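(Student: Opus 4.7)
\textbf{Proof plan for Theorem~\ref{thm:hnr} (Herglotz--Nevanlinna--Riesz).}

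The plan is to reduce the representation of the Herglotz function $g$ to the Poisson representation of the positive harmonic function $u := \Im g$ on $\Cp$. Since $g$ is analytic on $\Cp$ and maps $\Cp$ into $\Cp$, the imaginary part $u$ is a nonnegative harmonic function on $\Cp$, and it is strictly positive unless $g$ is constant (by the minimum principle). The strategy has three stages: (i) obtain a Poisson-type representation of $u$; (ii) produce an analytic function $\Psi$ on $\Cp$ whose imaginary part agrees with $u$ and which has the integral form claimed in \eqref{eq:hnr}; (iii) conclude that $g-\Psi$ is real analytic, hence a real constant.

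For stage (i), I would invoke the representation theorem for nonnegative harmonic functions on the upper half-plane: every such $u$ has the form
\begin{equation*}
u(x+iy) \;=\; b\,y \;+\; \int_{\bR} \frac{y}{(x-t)^{2}+y^{2}}\,\frac{d\mu(t)}{\pi}
\end{equation*}
for some $b\ge 0$ and a positive Borel measure $\mu$ on $\bR$ with $\int (1+t^{2})^{-1}\,d\mu(t) < \infty$ (after absorbing the factor $1/\pi$). The standard route to this is the Cayley transform $\psi(w) = i(1+w)/(1-w)$, which sends the unit disk $\mathbb{D}$ conformally onto $\Cp$. Pulling $u$ back to $\mathbb{D}$ yields a nonnegative harmonic function to which the classical Herglotz theorem on the disk (via Poisson integration against a positive Borel measure on the unit circle) applies. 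Transporting the disk-formula back through $\psi$, the point mass at $w=1$ becomes the linear term $by$ (coming from the boundary point ``at infinity''), while the remaining absolutely continuous part transforms into the Poisson-integral displayed above; the change-of-variables factor $(1+t^{2})^{-1}$ appearing in the finiteness condition for $\mu$ is exactly the Jacobian of $\psi$.

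For stage (ii), I would set
\begin{equation*}
\Psi(z) \;:=\; bz \;+\; \int_{\bR} \Bigl( \frac{1}{t-z} - \frac{t}{1+t^{2}} \Bigr)\,\frac{d\mu(t)}{\pi},
\end{equation*}
observing that for every compact $K\subset\Cp$ the integrand is bounded by $C_{K}\cdot(1+t^{2})^{-1}$, which is $\mu$-integrable by the finiteness hypothesis on $\mu$; a standard Morera/Fubini argument then makes $\Psi$ analytic on $\Cp$. A direct computation using $\tfrac{1}{t-z} = \tfrac{(t-x)+iy}{(t-x)^{2}+y^{2}}$ shows $\Im\Psi(x+iy) = u(x+iy)$, since the subtracted real term $t/(1+t^{2})$ contributes nothing to the imaginary part. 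The only role of that subtraction is to absolutely regularize the integral at $|t|\to\infty$ (note that at $z=i$ the bracket reduces to $i/(t^{2}+1)$, which pins down why this particular normalization is needed).

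For stage (iii), the function $g-\Psi$ is analytic on $\Cp$ with identically vanishing imaginary part, so by the Cauchy--Riemann equations it is a real constant $a\in\bR$, giving $g = a + \Psi$ as required. The main obstacle is justifying stage (i): translating the disk Herglotz theorem into the correct half-plane form, carefully identifying the linear term $by$ with the boundary mass at the point $\psi^{-1}(\infty)$ and verifying the growth condition $\int (1+t^{2})^{-1}\,d\mu(t)<\infty$. Once that representation of $u$ is in hand, the remaining steps are bookkeeping: dominated convergence to legitimize differentiation under the integral sign, and the elementary observation that a bounded-imaginary-part harmonic conjugate is unique up to a real constant.
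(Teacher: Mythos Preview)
The paper does not prove this theorem at all; it merely states it as a classical result and cites standard references (Bhatia, Horn--Johnson) for the proof. Your plan is the standard textbook argument---Cayley-transform to the disk, invoke the Herglotz/Riesz representation of nonnegative harmonic functions there, and transport back, identifying the mass at the preimage of $\infty$ with the linear term $bz$---and it is correct. So there is nothing to compare: you are supplying a proof where the paper simply appeals to the literature.
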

Roughly speaking, this provides a description of a Herglotz function through its boundary (the real
line); since the function may diverge as it approaches the real line, the generality of a measure (which may have atoms) is needed.

The relevance of this theorem to our purposes comes from the following:
\begin{lemma}[Implicit in~\cite{Epstein}]\label{lem:concave}
    Let $D$ be a domain\footnote{Recall that a domain is an open, connected set.} in $\bC$ containing $\mathbb{C}_{--} \cup \{0\}$. 
    Suppose $f: D \to \bC$ is analytic, its restriction to $D \cap \bR$ is real-valued, and moreover the function $g$ on $\Cp$ defined by $g(z) = zf(1/z)$ is a Herglotz function.
 Then the restriction of $f$ to $D \cap \bR$ is concave in some neighborhood of the origin.
\end{lemma}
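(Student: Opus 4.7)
The plan is to apply the Herglotz-Nevanlinna-Riesz representation theorem to $g$ and then read off the sign of $f''(0)$ by comparing the resulting integral with the Taylor expansion of $f$ at $0$. By Theorem~\ref{thm:hnr}, there exist $a \in \bR$, $b \in \bRnneg$, and a positive Borel measure $\mu$ on $\bR$ with $\int (1+t^2)^{-1}\,d\mu(t) < \infty$ such that $g$ admits the stated integral representation on $\Cp$. For $w \in \bC_{--}$ (so that $1/w \in \Cp$), I would substitute $z = 1/w$, multiply by $w$, and consolidate the terms linear in $w$ into a single real coefficient $c := a - \int t(1+t^2)^{-1}\,d\mu(t)$ to obtain
\[
f(w) \;=\; b + cw - \int_{\bR} \frac{w^2}{1-tw}\,d\mu(t), \qquad w \in \bC_{--}.
\]

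Next I would recover the first two Taylor coefficients of $f$ at $0$ by evaluating $g$ along the positive imaginary axis. Since $f$ is analytic at $0$, expanding $f(-i/y)$ in its Taylor series gives
\[
\Im g(iy) \;=\; \Im\bigl(iy\,f(-i/y)\bigr) \;=\; f(0)\,y \;-\; \frac{f''(0)}{2y} \;+\; O(1/y^3) \qquad (y \to \infty).
\]
The Herglotz representation, on the other hand, gives $\Im g(iy) = by + \int_{\bR} y/(t^2+y^2)\,d\mu(t)$. Dividing by $y$ and sending $y \to \infty$ (dominated convergence applies since $1/(t^2+y^2) \leq 1/(t^2+1)$ for $y \geq 1$) yields $b = f(0)$. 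Then multiplying $\Im g(iy) - by$ by $y$ and applying monotone convergence to the integrand $y^2/(t^2+y^2) \uparrow 1$ yields $\mu(\bR) = -f''(0)/2$. Since $\mu$ is a positive measure, this forces $f''(0) \leq 0$.

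Finally, I would deduce concavity on a neighborhood by splitting into two cases. If $f''(0) < 0$, then continuity of $f''$ (from analyticity of $f$) gives $f''(w) < 0$ on some real neighborhood of $0$, hence strict concavity there. If $f''(0) = 0$, then $\mu(\bR) = 0$, so $\mu$ is the zero measure; the Herglotz representation collapses to $g(z) = a + bz$, and $f(w) = b + aw$ is affine on $\bC_{--}$. By uniqueness of analytic continuation, $f$ is affine on all of $D$, hence affine (and in particular concave) on $D \cap \bR$.

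The main obstacle, in my view, is the middle step: extracting $f(0)$ and $f''(0)$ from the Herglotz integral by asymptotic analysis along the imaginary axis, and in particular identifying $\mu(\bR)$ with $-f''(0)/2$ rather than merely with a limit of integrals. This rests on dominated and monotone convergence together with $\int (1+t^2)^{-1}\,d\mu < \infty$, and is the one step where the measure-theoretic content of Theorem~\ref{thm:hnr} is actually used; the rest of the argument is then nearly automatic.
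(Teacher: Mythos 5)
Your proposal is correct, but it takes a genuinely different route from the paper's. The paper (following Epstein) first uses the Stieltjes inversion formula to show that the measure $\mu$ in the Herglotz--Nevanlinna--Riesz representation of $g$ is supported in a compact interval $[-\tau^{-1},\tau^{-1}]$ (because $g$ extends analytically, with real values, across the real rays $\abs{w}>\tau^{-1}$), then transports the representation back to $f$ to get $f(x)=ax+b+\int_{-\tau^{-1}}^{\tau^{-1}} \frac{x^2}{xt-1}\,d\mu(t)$ near the origin, and differentiates under the integral to conclude $f''(x)=\int \frac{2}{(xt-1)^3}\,d\mu(t)\le 0$ on the whole interval $(-\tau,\tau)$. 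You avoid the boundary-value/inversion machinery entirely: from the exact identity $\Im g(iy)=by+\int \frac{y}{t^2+y^2}\,d\mu(t)$ together with the Taylor expansion of $f$ at $0$ evaluated at $-i/y$ (whose coefficients are real since $f$ is real on $D\cap\bR$), you extract $b=f(0)$ and $\mu(\bR)=-f''(0)/2$, hence $f''(0)\le 0$ at the single point $0$; the neighborhood statement is then recovered by your dichotomy (either $f''(0)<0$ and continuity of $f''$ gives local concavity, or $\mu=0$, so $g$ is affine and the identity theorem forces $f$ to be affine on $D$). Both arguments are sound; yours is more elementary, needing only the representation theorem plus standard moment-type asymptotics along the imaginary axis, while the paper's yields the stronger pointwise conclusion $f''\le 0$ on an explicit interval determined by $D$ and stays closer to Epstein's original argument.

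One small ordering blemish: the consolidation $c=a-\int \frac{t}{1+t^2}\,d\mu(t)$ in your first display presupposes $\int \frac{\abs{t}}{1+t^2}\,d\mu(t)<\infty$, which the representation theorem alone does not guarantee (it fails for general Herglotz functions, e.g.\ $g(z)=\log z$). It is harmless here because nothing downstream uses that display --- your asymptotic argument works directly with the representation of $g$ --- and it becomes legitimate a posteriori once you know $\mu(\bR)=-f''(0)/2<\infty$; but you should either drop that display or defer it until after finiteness of $\mu$ is established, which is precisely why the paper folds that term into the constant only after proving $\mu$ has bounded support.
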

\begin{proof}
    Since $0 \in D$ and $D$ is open, there exists some $\tau > 0$ so that the interval $[-\tau, \tau] \subset D$.
    Let $D'$ be the image of $D$ under the map $z \mapsto 1/z$; so $D'$ contains $\Cp \cup [\tau^{-1}, \infty) \cup (-\infty, -\tau^{-1}]$. We may think of $g$ as being defined on all of $D'$.
    Let $\mu$ be the positive Borel measure associated with $g$ by Theorem~\ref{thm:hnr}.
    This measure can be thought of as the limit of $\Im g(z)$ as $z$ approaches the real line, in the appropriate distributional sense: this is known as the Stieltjes inversion formula; see, e.g.,~\cite[Thm.~V.4.12]{Bhatia}, \cite[Thm.~2.2]{GT}.
We will use only the following consequence: 
\begin{quote}
    If for some open interval $I \subseteq \mathbb{R}$, 
    \[ \lim_{\epsilon \downarrow 0} \Im g(w + i\epsilon) = 0 \quad \text{for all } w \in I, \] then $\mu(I) = 0$.
\end{quote}
 We deduce that $\mu$ is supported on $[-\tau^{-1}, \tau^{-1}]$, since $\lim_{\epsilon \downarrow 0} \Im g(w + i\epsilon) = \Im g(w) = 0$ for all $w \in D' \cap \bR$.

 Expressing $f$ in terms of the Herglotz-Nevanlinna-Riesz representation of $g$, we have that
\[ f(z) ~=~ az + b + \int_{-\tau^{-1}}^{\tau^{-1}} \frac{z^2}{zt-1} \,d\mu(t). \]
(Note that the final term of \eqref{eq:hnr} can be folded into the constant $a$ --- since $\mu$ is Borel and has bounded support, it is finite.)
Now calculate the second derivative of $f$, considered as a real-valued function on $D \cap \bR$:
\begin{align*}
 f''(x) &~=~ \int_{-\tau^{-1}}^{\tau^{-1}} \frac{d^2}{dx^2}\left(\frac{x^2}{xt-1}\right) \,d\mu(t)
 \iftc{\\&}
 ~=~ \int_{-\tau^{-1}}^{\tau^{-1}} \frac{2}{(xt-1)^3} \,d\mu(t).
\end{align*}
So for all $x \in (-\tau, \tau)$, $f''(x) < 0$, and so $f$ (as a real-valued function on $D \cap \bR$) is concave in the neighborhood of $0$.
\end{proof}

We will apply \Lemma{concave} with $f$ as in the statement of \Theorem{liebvariant}:
\[ f(z) = \trace \exp\Big( L + \log( C_1 + z K_1 ) + \log( C_2 - z K_2 ) \Big).\]
In order to extend our definition of $\log$ beyond symmetric matrices, we use (again following~\cite{Epstein}) the Cauchy integral description
\[ \log C = \int_{0}^\infty \frac{1}{t+1} - (t + C)^{-1}\,dt;\]
this is well-defined as long as $C$ has no nonpositive eigenvalues. 
As our domain $D$, we take $\bC_{--} \cup B_\epsilon$, where $B_\epsilon$ is an open ball around the origin of radius $\epsilon := \tfrac12\min\{\lambdamin(C_1)\|K_1\|^{-1}, \lambdamin(C_2)\|K_2\|^{-1}\}$. 
This ensures that 
\begin{lemma}\label{lem:welldefined}
  The function $f$ is well-defined and analytic on $D$.
\end{lemma}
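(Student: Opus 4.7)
The plan is to verify two things: first, that for every $z \in D$ the matrices $C_1 + zK_1$ and $C_2 - zK_2$ have no eigenvalues on the nonpositive real axis $(-\infty,0]$, so that the Cauchy integral defining $\log$ is well-posed; and second, that $f$ inherits analyticity from the resolvents $(t + C_j \pm zK_j)^{-1}$ through standard differentiation-under-the-integral and composition arguments.

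The main step is the spectral statement, which I would split according to whether $z$ lies in $B_\epsilon$ or in $\bC_{--}$. For $z \in B_\epsilon$, I would use that $C_1$ is Hermitian (hence normal) and apply Bauer--Fike: every eigenvalue of $C_1 + zK_1$ lies within distance $\|zK_1\| \le |z|\,\|K_1\| < \tfrac12\lambdamin(C_1)$ of some eigenvalue $\mu \ge \lambdamin(C_1)$ of $C_1$. This forces $\Re\lambda > \tfrac12\mu > 0$, so $\lambda \notin (-\infty,0]$. For $z \in \bC_{--}$, I would argue via the numerical range: for any unit $v \in \bC^n$,
\[
v^*(C_1 + zK_1)v ~=~ v^*C_1 v + z \, v^*K_1 v,
\]
where $v^*C_1v > 0$ and $v^*K_1v \ge 0$. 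If $v^*K_1v = 0$, the value is a positive real; otherwise its imaginary part equals $(\Im z)(v^*K_1v) < 0$. Either way the numerical range avoids $(-\infty,0]$, and so does the spectrum. The same reasoning, with $-z$ in place of $z$, handles $C_2 - zK_2$. Shifting by any $t\ge 0$ preserves these properties, so $(t + C_j \pm zK_j)^{-1}$ exists throughout $D$ for every $t\ge 0$.

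Next I would verify convergence of the Cauchy integral uniformly on compact subsets of $D$. For large $t$, a Neumann expansion gives $(t + C_j \pm zK_j)^{-1} = t^{-1} I - t^{-2}(C_j \pm zK_j) + O(t^{-3})$, so the integrand $\tfrac{1}{t+1}I - (t+C_j\pm zK_j)^{-1}$ is $O(t^{-2})$, which is locally uniform in $z$ on compacta of $D$. Near $t=0$ the integrand is continuous and locally bounded since the resolvent is bounded on any compact subset of $D$ (by continuity of the norm of the inverse, using Step 1). Hence the integral converges absolutely and locally uniformly.

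Finally, I would package analyticity. For each fixed $t$, $z \mapsto (t + C_j \pm zK_j)^{-1}$ is a composition of the entire map $z \mapsto t + C_j \pm zK_j$ with matrix inversion on $GL_n(\bC)$, and so is holomorphic on $D$ as a matrix-valued function. Local uniform convergence of the integral (Step 3) together with Morera's theorem (or equivalently differentiation under the integral sign) upgrades this pointwise analyticity to analyticity of $z \mapsto \log(C_j \pm zK_j)$ on $D$. The matrix exponential is entire on $\complexmat{n}$, so the composition $z \mapsto \exp(L + \log(C_1+zK_1) + \log(C_2-zK_2))$ is analytic on $D$, and taking trace gives an analytic complex-valued $f$. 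The main obstacle is really Step 1, the spectral containment argument, since that is what ensures the whole Cauchy-integral formalism for $\log$ can even be applied off the positive-definite cone.
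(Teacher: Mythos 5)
Your proof is correct, but it reaches the key facts by a somewhat different route than the paper. The paper disposes of well-definedness in two lines: for nonreal $z$ it appeals to the already-established Lemma~\ref{lem:stronglog} (equivalently, to the sign of the matrix imaginary part $(\Im z)K_1$ and the closure properties of matrices with definite imaginary part), and for real $z \in (-\epsilon,\epsilon)$ it uses the L\"owner bound $C_1 + zK_1 \succeq C_1 - \epsilon\|K_1\| \succ 0$; analyticity is left implicit in the Cauchy-integral representation of $\log$. You instead prove the spectral containment from scratch: a Bauer--Fike perturbation bound handles every (complex) $z \in B_\epsilon$ at once, and for $z \in \bC_{--}$ you observe that the numerical range of $C_1 + zK_1$ (and of $C_2 - zK_2$) avoids $(-\infty,0]$ and contains the spectrum, with the shift by $t \ge 0$ preserving invertibility of $(t + C_j \pm zK_j)$; both steps are sound, including the edge case where $v^* K_j v = 0$. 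You also spell out the analyticity half --- $O(t^{-2})$ decay of the integrand, local uniform convergence of the Cauchy integral, and Morera/differentiation under the integral, followed by composition with the entire maps $\exp$ and $\trace$ --- which the paper takes for granted. In short, the paper's proof is shorter because it reuses Lemma~\ref{lem:stronglog} and only needs the real segment of $B_\epsilon$ separately, while yours is self-contained, treats the whole ball uniformly, and makes the analyticity claim explicit; both are valid.
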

For convenience, we withhold the proof until the end of this section.

To deduce that $f$ is concave by Lemma~\ref{lem:concave}, we must show that $g$ defined by $g(z) = zf(1/z)$ is Herglotz.
We have
\begin{align*}
    &g(z) = z \cdot f(1/z)\\
         &~= z \trace \exp \Big( L + \log(C_1 + K_1/z) + \log(C_2 - K_2/z) \Big)\\
         &~= \trace \exp \Big( \log(zI) + L + \log(C_1 \!+\! K_1/z) + \log(C_2 \!-\! K_2/z) \Big)\\
         &~= \trace \exp \Big( L + \log(C_1z + K_1) + \log(C_2 + ( -1/z)K_2)\Big).
\end{align*}

We will work with complex matrices for the remainder of this section, so let $\complexmat{n}$ denote the space of $n\times n$ complex matrices, and $\herm{n}$ the space of $n \times n$ Hermitian matrices.
We will make use of operator formalism on occasion; in particular, the identity $I$ will generally be omitted, and so for a scalar $w \in \mathbb{C}$, $wI$ will be written as simply $w$.

An arbitrary matrix $C \in \complexmat{n}$ has a unique decomposition $C = P + iQ$, where $P, Q \in \herm{n}$.
This is obtained by taking $P = \tfrac12(C + C^*)$ and $Q = \tfrac{1}{2i}(C - C^*)$, where $C^*$ denotes the adjoint (conjugate transpose) of $C$.
By analogy with the scalar ($n\!=\!1$) case, we say that $P$ is the ``real part'' of $C$, denoted by
$\Re C$, and that $Q$ is the ``imaginary part'' of $C$, denoted by $\Im C$. (Note that this has nothing to do with the entry-wise real and imaginary parts of the matrix.)

This analogy to the scalar case provides a lot of helpful intuition, and so at this point we will sketch a version of the proof for $n=1$. 
The full argument will follow the same essential steps, though the generalization is not completely straightforward.
The scalar analog of a Hermitian matrix is a real number,
and the scalar analog of a positive definite matrix is a positive number;
so we consider the function $h : \bC \rightarrow \bC$ defined by
\[ h(z) ~=~ \exp\big(l + \log(c_1 z + k_1) + \log(c_2 + (-1/z)k_2)\big), \]
with real parameters $l \in \bR$, $k_1, k_2 \geq 0$ and $c_1, c_2 > 0$. 
Then 
\[ \Im \log(c_1z + k_1) ~=~ \arg(c_1z + k_1) ~\in~ (0, \arg z]. \]
Similarly, 
\[ \Im \log(c_2 + (-1/z)k_2) \:=\: \arg(c_2 + (-1/z)k_2) \:\in\: [0, \arg (-1/z)). \]
Since $\arg (-1/z) = \pi - \arg z$, we obtain
\[ \Im(l +  \log(c_1z + k_1) + \log(c_2 + (-1/z)k_2)) ~\in~ (0, \pi). \]
Since $\Im e^{a+ib} = e^a \sin b$ for $a,b \in \bR$, we deduce that $\Im h(z) > 0$, as required.

\medskip
We now resume the argument for the case $n>1$. Define
\begin{align*}
\impos  &~=~ \setst{ C \in \complexmat{n} }{ \Im C \succ 0 } \\
\imnneg &~=~ \setst{ C \in \complexmat{n} }{ \Im C \succeq 0 }.
\end{align*}
Much of the argument revolves around noting that $\impos$ is closed under various operations.
For example, if $C, A \in \impos$ then clearly $A+C \in \impos$.
The following is less straightforward:
\begin{lemma}[{\cite[{\upshape pp.~318--319}]{Epstein}}]\label{lem:closure}
    For any $C \in \impos$,
    \begin{enumerate}[(i)]
        \item\label{item:inv} $-C^{-1} \in \impos$, and
        \item\label{item:log} $0 \prec \Im \log C \prec \pi$.
    \end{enumerate}
\end{lemma}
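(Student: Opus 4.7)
The plan is to prove (i) by a direct computation that expresses $\Im(-C^{-1})$ as a positive congruence of $\Im C$, and then to derive both bounds in (ii) from (i) using the Cauchy integral representation of $\log$ already introduced in this appendix, together with a matrix identity $\log C + \log(-C^{-1}) = i\pi\,I$ lifted from the principal-branch scalar identity $\log z + \log(-1/z) = i\pi$ on $\Cp$.

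For (i), write $C = P + iQ$ with $Q := \Im C \succ 0$. A short manipulation gives
\[ \Im(-C^{-1}) \;=\; \tfrac{1}{2i}\bigl(C^{-*} - C^{-1}\bigr) \;=\; \tfrac{1}{2i}\,C^{-*}(C - C^*)\,C^{-1} \;=\; C^{-*} Q\, C^{-1}, \]
where the middle identity is verified by left- and right-multiplying by $C^*$ and $C$. Since $Q \succ 0$ and $C^{-1}$ is invertible, this congruence is strictly positive definite. For the lower bound in (ii), I would take the Cauchy representation $\log C = \int_0^\infty \tfrac{1}{t+1}I - (t+C)^{-1}\,dt$ quoted in the appendix: for every $t \geq 0$ we have $t+C \in \impos$ (its imaginary part is still $Q$), so part (i) applied to $t+C$ gives $\Im(-(t+C)^{-1}) = (t+C)^{-*}Q(t+C)^{-1} \succ 0$. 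Taking imaginary parts under the integral yields
\[ \Im \log C \;=\; \int_0^\infty (t+C)^{-*} Q (t+C)^{-1}\,dt, \]
and for any $v \neq 0$ the integrand tested at $v$ equals $\norm{Q^{1/2}(t+C)^{-1}v}^2 > 0$ for every $t$, so $\Im \log C \succ 0$.

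For the upper bound, the key step is the matrix identity $\log C + \log(-C^{-1}) = i\pi\,I$. This comes from the scalar fact that for the principal branch on $\Cp$, $\log z + \log(-1/z) = i\pi$ (write $z = re^{i\theta}$ with $\theta \in (0,\pi)$, so that $-1/z = r^{-1}e^{i(\pi-\theta)}$ and the imaginary parts add to $\pi$). Any $C \in \impos$ satisfies $\sigma(C) \subset \Cp$ (from an eigenvector computation $\Im \lambda \cdot \norm{v}^2 = v^* Q v > 0$), and since $z \mapsto -1/z$ preserves $\Cp$, both matrix logarithms are well-defined via the Cauchy integral, and the scalar identity transfers to the matrix identity above. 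Taking imaginary parts gives $\Im \log C = \pi I - \Im \log(-C^{-1})$; by (i), $-C^{-1} \in \impos$, so the lower bound already proved applied to $-C^{-1}$ yields $\Im \log(-C^{-1}) \succ 0$, hence $\Im \log C \prec \pi I$.

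The main obstacle will be the rigorous justification of the identity $\log C + \log(-C^{-1}) = i\pi\,I$ inside the Cauchy integral framework used throughout this appendix. The cleanest route is to invoke the holomorphic (Riesz) functional calculus, under which the Cauchy integral $\int_0^\infty \tfrac{1}{t+1}I - (t+A)^{-1}\,dt$ agrees with the principal branch of $\log$ whenever $\sigma(A)$ avoids $(-\infty,0]$, so that linearity and composition lift the scalar identity. Alternatively, since the diagonalizable matrices are dense in $\impos$, one can verify the identity eigenvalue-by-eigenvalue on that dense set and extend by continuity of both sides on $\impos$. Once this identity is in hand, the rest of (ii) follows from (i) and the lower bound with no additional work.
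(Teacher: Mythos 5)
Your proposal is correct, and it is worth noting that the paper itself does not prove this lemma at all: it cites Epstein (pp.~318--319) and only records the scalar ($n=1$) intuition $C = re^{i\theta}$, $-C^{-1} = r^{-1}e^{i(\pi-\theta)}$, $\log C = \log r + i\theta$. What you have done is lift exactly that intuition to a self-contained matrix proof, which the paper omits. Your part (i) is the standard congruence computation $\Im(-C^{-1}) = C^{-*}(\Im C)C^{-1} \succ 0$ (you should add the one-line remark that $C$ is invertible in the first place: $Cv=0$ with $v\neq 0$ would force $v^*(\Im C)v = \Im(v^*Cv) = 0$). Your lower bound in (ii) cleverly reuses (i) inside the Cauchy integral $\log C = \int_0^\infty\bigl(\tfrac{1}{t+1}I - (t+C)^{-1}\bigr)dt$ that the appendix adopts as its definition of $\log$; passing $\Im$ under the integral is legitimate since the integrand's imaginary part is $(t+C)^{-*}(\Im C)(t+C)^{-1} = O(t^{-2})$ in norm, and pointwise strict positivity of $t\mapsto \lVert (\Im C)^{1/2}(t+C)^{-1}v\rVert^2$ gives $\Im\log C \succ 0$. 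Your upper bound via the reflection identity $\log C + \log(-C^{-1}) = i\pi I$ is also sound: since $\sigma(C)\subset \Cp$ (your eigenvector computation) and the Cauchy-integral $\log$ agrees with the principal-branch holomorphic functional calculus whenever the spectrum avoids $(-\infty,0]$, the scalar identity $\log z + \log(-1/z) = i\pi$ on $\Cp$ transfers either by the composition rule of the functional calculus or by your density-of-diagonalizables argument; combined with (i) and the lower bound applied to $-C^{-1}$, this gives $\Im\log C \prec \pi I$. This is a legitimate proof of the cited lemma, arguably more elementary than deferring to Epstein, at the modest cost of having to justify the compatibility of the Cauchy-integral definition of $\log$ with the principal branch, which you correctly identify as the only delicate point.
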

We refer to~\cite{Epstein} for the proofs, but we again note the intuition by analogy with the $n=1$ case, where $C$ is just an element of $\Cp$. 
Then $C=re^{i\theta}$ for some $r > 0$ and $0 < \theta < \pi$; so $-C^{-1} = r^{-1}e^{i(\pi - \theta)} \in \Cp$ and $\log C = \log r + i\theta$.

A crucial lemma will be the following: 
\begin{lemma}\label{lem:stronglog}
     Let $A, B \in \herm{n}$ satisfy $A, B \succeq 0$, where in addition at least one of $A$ and $B$ are strictly positive definite.
     Then for any $z \in \Cp$, $\log (A+Bz)$ is defined and
\[ 0 ~\preceq~ \Im \log (A+Bz) ~\preceq~ \arg z. \]
Moreover, if $A \succ 0$, then the left inequality is strict, and if $B \succ 0$, the right inequality is strict.

\end{lemma}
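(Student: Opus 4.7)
First, I would verify that $\log(A+Bz)$ is well-defined by showing $A+Bz$ has no nonpositive real eigenvalues on $\Cp$. For any unit eigenvector $v$ of $A+Bz$ with eigenvalue $\lambda$, we have $\lambda = v^{*}(A+Bz)v = v^{*}Av + (v^{*}Bv)z$, so $\Im\lambda = (v^{*}Bv)\Im z \geq 0$. If $\Im\lambda = 0$, then $v^{*}Bv = 0$, hence $Bv = 0$; then $\lambda = v^{*}Av \geq 0$, and $\lambda = 0$ would additionally force $Av = 0$, contradicting the hypothesis that at least one of $A$ and $B$ is strictly positive definite. Hence $\lambda \notin \bR_{\leq 0}$.

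For the lower bound $\Im\log(A+Bz) \succeq 0$, I would differentiate the Cauchy integral formula $\log C = \int_0^\infty \bigl((t+1)^{-1}I - (t+C)^{-1}\bigr)\,dt$ using the elementary identity $\Im M^{-1} = -M^{-1}(\Im M)M^{-*}$. This yields
\[
\Im\log(A+Bz) ~=~ \int_0^\infty (t+A+Bz)^{-1}(B\,\Im z)(t+A+B\bar z)^{-1}\,dt,
\]
and each integrand is Hermitian positive semidefinite because $B\,\Im z \succeq 0$. Integrating gives the lower bound.

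For the upper bound $\Im\log(A+Bz) \preceq \arg z \cdot I$, the plan is to establish the factorization
\[
\log(A+Bz) ~=~ (\log z)\, I + \log(B + A/z) \qquad \text{for all } z \in \Cp,
\]
and then reduce to the lower bound with the roles of $A$ and $B$ exchanged. This identity holds tautologically for real $w > 0$, because $A+Bw$ and $B+A/w$ are then positive definite Hermitian and commute with $wI$, so the scalar identity $\log(w\mu) = \log w + \log\mu$ applies in the joint eigenbasis. Both sides are analytic on a common open domain containing $\Cp \cup \bR_{>0}$: the eigenvalue argument of the first paragraph (and its analog for $B+A/z$) shows neither matrix acquires a nonpositive eigenvalue on $\Cp$, and one enlarges to a neighborhood of $\bR_{>0}$ by continuity. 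The identity principle then extends the formula to all of $\Cp$. Given the factorization, it suffices to prove $\Im\log(B + A/z) \preceq 0$. Write $w = -1/z \in \Cp$, so that $B + A/z = B - Aw$; using $(\log M)^{*} = \log M^{*}$ together with $(B - Aw)^{*} = B - A\bar w$ (since $A, B$ are Hermitian), we obtain
\[
-\Im\log(B - Aw) ~=~ \Im\log(B - A\bar w) ~=~ \Im\log\bigl(B + A(-\bar w)\bigr).
\]
Since $-\bar w \in \Cp$, the lower bound from the previous paragraph, applied with $(A, B, z)$ replaced by $(B, A, -\bar w)$, gives $\Im\log(B + A(-\bar w)) \succeq 0$, hence $\Im\log(B + A/z) \preceq 0$. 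The strictness claims follow by tracking which integrands in the lower-bound step are strictly positive definite.

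The main technical obstacle is establishing the factorization identity, since the principal branch of the matrix logarithm is sensitive to where eigenvalues sit in $\bC$. The identity-principle argument above sidesteps explicit branch bookkeeping: it suffices to confirm analyticity on a connected open set in which both sides already agree on the positive real ray, whereupon the formula propagates throughout $\Cp$ for free.
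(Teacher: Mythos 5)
Your proof is correct, but it takes a route whose key tools differ from the paper's. The paper leans on Epstein's closure lemma, cited without proof (namely that $C \in \impos$ implies $-C^{-1} \in \impos$ and $0 \prec \Im \log C \prec \pi$), proves invertibility of $A+Bz$ via factorizations such as $A+Bz = B^{1/2}(B^{-1/2}AB^{-1/2}+z)B^{1/2}$, handles the merely semidefinite cases by perturbing to $B+\epsilon$ or $A+\epsilon$ and passing to the limit, and for the upper bound rewrites the claim as $\Im \log\big((A/z+B)^{-1}\big) \succ 0$ --- a step whose branch-of-logarithm bookkeeping is left implicit. You instead derive the explicit resolvent representation $\Im \log(A+Bz) = \int_0^\infty (t+A+Bz)^{-1}(B\,\Im z)(t+A+B\bar z)^{-1}\,dt$ from the Cauchy integral, which yields the lower bound (and its strict form when $B \succ 0$) in one stroke with no limiting argument, and you reduce the upper bound to the lower one via the identity $\log(A+Bz) = (\log z)I + \log(B+A/z)$, justified by the identity principle off the positive real axis, combined with the symmetry $\Im \log(M^{*}) = -\Im \log M$. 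What each approach buys: yours is self-contained and makes explicit exactly the branch issue that the paper's ``or equivalently'' step glosses over; the paper's is shorter once Epstein's lemmas are granted and avoids the analytic-continuation detour. One remark on strictness: tracking your integrands gives ``left inequality strict when $B \succ 0$, right inequality strict when $A \succ 0$,'' which is precisely what the paper's own proof establishes (and suffices for the Herglotz application), even though the printed ``moreover'' sentence swaps the roles of $A$ and $B$; since $B=0$, $A \succ 0$ gives $\Im \log A = 0$, the printed version cannot be intended, so your argument proves the correct variant rather than the literal statement.
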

\begin{proof}
    We first observe that the conditions imply that $A+Bz$ has no nonpositive real eigenvalues, and hence that the logarithm is well defined.
    It suffices to show that $A+Bz$ is nonsingular, since we can apply the same argument to $A' + Bz$, where $A' = A + t$ for any $t \geq 0$.

    If $B \succ 0$, then $B^{1/2}$ exists and is positive definite. 
    Thus
    \[ A + Bz = B^{1/2}(\underbrace{B^{-1/2}AB^{-1/2}}_{=:\, Q} + z)B^{1/2}. \]
    But $Q$ is Hermitian (as can be seen since $B^{-1/2}$ and $A$ are Hermitian) and so it has real spectrum; thus since $\Im z > 0$, $0$ is not in the spectrum of $Q+z$.
    Hence $Q + z$ and so also $A+Bz$ are invertible.

    If instead $A \succ 0$, then
    \[ A + Bz = zA^{1/2}(1/z + A^{-1/2}BA^{-1/2})A^{1/2}, \]
    and similar reasoning applies.

    Suppose first that $B \succ 0$.
    Then $A+Bz \in \impos$, and so by \Lemma{closure}~(\ref{item:log}) we immediately have that $\Im\log(A+Bz) \succ 0$.
    Now if $B \succeq 0$ but is not positive definite, then $B+\epsilon \succ 0$ for any $\epsilon > 0$, and so $\Im\log(A+(B+\epsilon)z) \succ 0$.
    Since $\log (A+Bz)$ is well defined, we have by continuity that 
    \[ \Im \log(A+Bz) = \lim_{\epsilon \downarrow 0} \Im \log(A+(B+\epsilon)z) \succeq 0. \]
    This completes the proof of the left inequality.

    For the right inequality, suppose first that $A \succ 0$.
Since $\arg z = \Im \log z$, our goal is to show that
\[ \Im (\log z - \log(A + Bz) ) \succ 0, \]
or equivalently (using that $A+Bz$ is nonsingular)
\[ \Im \log ((A/z+B)^{-1}) \succ 0. \]
Now since $-1/z \in \Cp$, it follows that $\Im(-A/z) \succ 0$.
Since $\Im B = 0$, we obtain that $-A/z - B \in \impos$.
Thus $(A/z+B)^{-1} \in \impos$ by Lemma~\ref{lem:closure}~(\ref{item:inv}), and so the result follows by Lemma~\ref{lem:closure}~(\ref{item:log}).
If $A \succeq 0$ but $A$ is not positive definite, we apply a limiting argument as before to deduce that $\Im \log(A+Bz) \succeq 0$.
\end{proof}

We will also need the following result:
\begin{lemma}[\cite{Epstein}]\label{lem:exp}
    If\/ $0 \prec \Im C \prec \pi$, then $\trace \exp C \in \Cp$.
\end{lemma}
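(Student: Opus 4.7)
The plan is to reduce the matrix claim to a scalar trigonometric inequality by analyzing the spectrum of $C$. Write $C = P + iQ$ with $P = \tfrac12(C+C^*)$ and $Q = \tfrac{1}{2i}(C-C^*)$ Hermitian, so the hypothesis becomes $0 \prec Q \prec \pi I$. The first step is to show that every eigenvalue $\lambda$ of $C$ satisfies $\Im \lambda \in (0,\pi)$. Pick a unit eigenvector $v$ with $Cv = \lambda v$; then $\lambda = v^*Cv$, so $\Im \lambda = v^*Qv$, and the operator inequality $0 \prec Q \prec \pi I$ immediately yields $0 < \Im \lambda < \pi$.

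Next I would invoke the standard fact from Jordan theory that if $C$ has eigenvalues $\lambda_1, \ldots, \lambda_n$ listed with algebraic multiplicity, then $\exp C$ has eigenvalues $e^{\lambda_1}, \ldots, e^{\lambda_n}$ with the same algebraic multiplicities (on each Jordan block of $C$ with eigenvalue $\lambda$, the exponential is upper-triangular with $e^\lambda$ on the diagonal). Hence
\[
\trace \exp C \;=\; \sum_{j=1}^n e^{\lambda_j},
\qquad
\Im \trace \exp C \;=\; \sum_{j=1}^n e^{\Re \lambda_j} \sin(\Im \lambda_j).
\]
By the first step, each $\Im \lambda_j$ lies in $(0,\pi)$, so every summand $e^{\Re \lambda_j} \sin(\Im \lambda_j)$ is strictly positive, giving $\Im \trace \exp C > 0$.

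The one subtlety to handle carefully is that $C$ need not be normal or even diagonalizable, so the Rayleigh-quotient identity $v^*Cv = \lambda$ only produces one eigenvector per \emph{distinct} eigenvalue. This is enough: the bound $\Im \lambda \in (0,\pi)$ is a statement about each distinct eigenvalue, and the algebraic multiplicities enter only through the trace formula, where they simply weight each $e^{\lambda_j}$ (with the same $\Im \lambda_j$) by its multiplicity — each weighted term is still strictly positive. No complex-analytic machinery (Herglotz representations, Lie-Trotter, etc.) is needed; the argument is entirely spectral.
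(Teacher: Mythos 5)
Your proof is correct and follows essentially the same route as the paper's (omitted) argument: first locate the spectrum of $C$ in the strip $\{z : 0 < \Im z < \pi\}$, then pass to $\exp C$ via the spectral mapping theorem and sum the eigenvalues. Your Rayleigh-quotient step $\Im\lambda = v^*Qv$ is a clean way to carry out the first part, and the handling of non-diagonalizable $C$ through algebraic multiplicities is exactly what is needed.
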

We omit the proof, which proceeds by first showing that the spectrum of $C$ is contained in the strip
$\setst{ z \in \mathbb{C} }{ 0 < \Im z < \pi }$,
and then using the spectral mapping theorem to deduce that the spectrum of $\exp C$ lies in $\Cp$.

\begin{lemma}
    The function $g$ is Herglotz.
\end{lemma}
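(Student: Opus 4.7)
The plan is to verify the two defining properties of a Herglotz function: analyticity on $\Cp$ and $g(\Cp) \subseteq \Cp$. Analyticity on $\Cp$ follows essentially by inspection: for $z \in \Cp$, both $C_1 z + K_1$ and $C_2 + (-1/z)K_2$ have logarithms that are well-defined and analytic (by the same reasoning used in \Lemma{welldefined}, since these matrices have no nonpositive real eigenvalues when $z \in \Cp$, as guaranteed by \Lemma{stronglog}). The exponential and trace are entire, so composition gives analyticity.

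The substantive claim is that $\Im g(z) > 0$ for every $z \in \Cp$. Set
\[ M(z) \;:=\; L + \log(C_1 z + K_1) + \log(C_2 + (-1/z) K_2). \]
Since $L \in \Sym$ contributes $\Im L = 0$, I would bound the imaginary parts of the two logarithm terms separately by directly invoking \Lemma{stronglog}. For the first term, taking $A = K_1 \succeq 0$ and $B = C_1 \succ 0$ in that lemma yields
\[ 0 \;\preceq\; \Im \log(C_1 z + K_1) \;\prec\; \arg z, \]
where the right inequality is strict because $C_1 \succ 0$. For the second term, note that $w := -1/z \in \Cp$ by \Lemma{closure}(i), and $\arg w = \pi - \arg z$. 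Applying \Lemma{stronglog} with $A = C_2 \succ 0$ and $B = K_2 \succeq 0$ gives
\[ 0 \;\prec\; \Im \log(C_2 + K_2 w) \;\preceq\; \pi - \arg z, \]
where the left inequality is now strict because $C_2 \succ 0$.

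Summing the two chains of inequalities, the strict lower bound on the second term and the strict upper bound on the first term ensure
\[ 0 \;\prec\; \Im M(z) \;\prec\; \pi. \]
\Lemma{exp} then gives $\trace \exp M(z) \in \Cp$, which is precisely $g(z) \in \Cp$.

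The main delicacy will be the bookkeeping of strict versus non-strict inequalities: since $K_1$ and $K_2$ are only assumed positive semidefinite (possibly singular), neither log term alone supplies both a strict lower and strict upper bound, and one must check that the strictness provided by the strictly positive definite matrices $C_1$ and $C_2$ lines up correctly after summing. Once that verification is made, everything else is a straightforward assembly of the preceding lemmas.
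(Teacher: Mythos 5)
Your proposal is correct and follows essentially the same route as the paper: apply \Lemma{stronglog} to each logarithm with the roles of the positive definite and positive semidefinite matrices identified exactly as you do (so that $C_1 \succ 0$ gives the strict upper bound $\prec \arg z$ on the first term and $C_2 \succ 0$ gives the strict lower bound $\succ 0$ on the second), use $\arg(-1/z) = \pi - \arg z$ to sum the bounds to $0 \prec \Im(\cdot) \prec \pi$, and conclude via \Lemma{exp}. Your explicit strictness bookkeeping matches the paper's chain of inequalities, so no gap remains.
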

\begin{proof}
    Take any $z \in \Cp$.
    By \Lemma{stronglog}, we have that
\begin{align*}
0 &\preceq \Im \log (C_1z + K_1) \prec \arg z\\
\text{and} \qquad 0 &\prec \Im \log (C_2 + (-1/z)K_2) \preceq \arg(-1/z).
\end{align*}
    Since $\arg(-1/z) = \pi - \arg z$, we obtain that
    \[ 0 \prec \Im \bigl(L + \log(C_1z + K_1) + \log(C_2 + (-1/z)K_2)\bigr) \prec \pi. \]
    Thus by \Lemma{exp}, $g(z) \in \Cp$.
    Hence $g$ is indeed Herglotz.
\end{proof}

Applying \Lemma{concave}, and observing the proof of Lemma~\ref{lem:welldefined} below, \Theorem{liebvariant} has been proved.

\begin{proofof}{Lemma~\ref{lem:welldefined}}
    Firstly, if $z \notin \bR$, then either $C_1 + zK_1 \in \impos$, or $-(C_1 + zK_1) \in \impos$. 
   Thus, as observed by Epstein, $\log(C_1 + zK_1)$ is defined; indeed, we already proved more in Lemma~\ref{lem:stronglog}.
   The same is true for $\log(C_2 - zK_2)$.

    Now suppose $z \in (-\epsilon, \epsilon)$.
Then 
\[ C_1 + zK_1 \succeq C_1 - \epsilon\|K_1\| \succeq C_1 - \tfrac12\lambdamin(C_1) \succ 0. \]
Similarly $C_2 - zK_2 \succ 0$. 
\end{proofof}

\section{Weaker Proof of Theorem~\ref{LIEBVARIANT}}
\AppendixName{lieb}

In this appendix we prove \Theorem{liebvariant}, under the additional
hypothesis that $C_i$ \& $K_i$ commute.
This suffices to prove \Lemma{pipage}. 
The argument builds on Lieb's original proof~\cite{Lieb} of \Theorem{lieb}.

\begin{theorem}
\TheoremName{specialcase}
Let $L \in \Sym$, $C_1, C_2 \in \Pd$ and $K_1, K_2 \in \Psd$ be such that
$C_1$ \& $K_1$ commute, and that $C_2$ \& $K_2$ commute.
Then
\begin{equation}
\label{eq:g}
f(z) \::=\: \trace \exp\Big( L + \log( C_1 + z K_1 ) + \log( C_2 - z K_2 ) \Big)~
\end{equation}
is concave in a neighborhood of $0$.
\end{theorem}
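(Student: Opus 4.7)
My approach is to follow Lieb's original proof of \Theorem{lieb}~\cite{Lieb}: compute $f''(0)$ directly, show it is non-positive, and then conclude concavity in a neighborhood of $0$ by continuity of $f''$. (Both arguments of the logarithms remain positive definite for $|z|$ small, so $f$ is smooth there.) The commutativity hypotheses will be used to make this direct computation tractable; without them, the derivatives of $\log(C_i \pm zK_i)$ at $z = 0$ involve integrals over the resolvents $(C_i + t)^{-1}$ and have no clean expression in any single basis.

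Writing $H(z) := L + \log(C_1 + zK_1) + \log(C_2 - zK_2)$, so that $f(z) = \trace\, e^{H(z)}$, and setting $A = H(0)$, the Duhamel--Dyson formula for differentiating the matrix exponential gives
\[
    f''(0) \;=\; \trace\bigl(e^A H''(0)\bigr) \;+\; \int_0^1 \trace\bigl(e^{sA}\, H'(0)\, e^{(1-s)A}\, H'(0)\bigr)\, ds.
\]
Under the commutativity hypothesis, $D_i := C_i^{-1} K_i \in \Psd$ is well-defined and a Taylor expansion of $\log(I + zD_i)$ yields $H'(0) = D_1 - D_2$ and $H''(0) = -(D_1^2 + D_2^2)$. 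Diagonalize $A$ with eigenvalues $\alpha_1, \dots, \alpha_n$; in this eigenbasis, a standard computation evaluates each trace as a weighted sum with arithmetic-mean or logarithmic-mean weights,
\[
    \trace(e^A D^2) \;=\; \sum_{p,q} \tfrac{e^{\alpha_p} + e^{\alpha_q}}{2}\, |D_{pq}|^2,
    \qquad
    \int_0^1 \trace(e^{sA} D\, e^{(1-s)A} D)\, ds \;=\; \sum_{p,q} \tfrac{e^{\alpha_p} - e^{\alpha_q}}{\alpha_p - \alpha_q}\, |D_{pq}|^2.
\]
Substituting and using the Hermiticity of the $D_i$, the claim $f''(0) \leq 0$ reduces to
\[
    \sum_{p,q} \operatorname{AM}_{pq} \bigl(|(D_1)_{pq}|^2 + |(D_2)_{pq}|^2\bigr) \;\geq\; \sum_{p,q} \operatorname{LM}_{pq}\, |(D_1 - D_2)_{pq}|^2,
\]
where $\operatorname{AM}_{pq}$ and $\operatorname{LM}_{pq}$ abbreviate the arithmetic and logarithmic means of $e^{\alpha_p}$ and $e^{\alpha_q}$. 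For a single matrix ($D_2 = 0$) this is Lieb's reduction, recovered from the pointwise scalar inequality $\operatorname{LM} \leq \operatorname{AM}$.

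The main obstacle is proving this two-matrix inequality. Pointwise in $(p, q)$ it can fail: expanding $|(D_1)_{pq} - (D_2)_{pq}|^2$ produces cross terms $-2\Re\bigl((D_1)_{pq}\,\overline{(D_2)_{pq}}\bigr)$ of unconstrained sign, and in fact the integrated cross term $\int_0^1 \trace(e^{sA} D_1 e^{(1-s)A} D_2)\, ds$ can be strictly negative even when $D_1, D_2 \succeq 0$, so the positive semidefiniteness of $D_1, D_2$ must be used in aggregate. My plan is to recognize the two sides as squared Hilbert--Schmidt norms---namely $\|e^{A/2} D_i\|_F^2$ on the left and $\int_0^1 \|e^{sA/2}(D_1 - D_2)\, e^{(1-s)A/2}\|_F^2\, ds$ on the right---then factor $D_i = E_i^* E_i$ and combine Cauchy--Schwarz for the resulting Hilbert--Schmidt pairings with the scalar $\operatorname{LM} \leq \operatorname{AM}$ inequality in order to absorb the cross term. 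This step is the two-matrix analogue of the key trace inequality in Lieb's single-matrix proof, and is where the delicate matrix-analytic work lies.
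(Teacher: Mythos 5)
Your reduction is sound and in fact coincides with the paper's own route: under the commutativity hypothesis $\log(C_i \pm zK_i) = \log C_i + \log(I \pm zD_i)$ with $D_i = C_i^{-1}K_i \in \Psd$, the Duhamel formula gives exactly the paper's expression for $f''(0)$ (there written with the operators $T_{C_i}, R_{C_i}$, which collapse to $D_i$ and $D_i^2$ when $C_i$ and $K_i$ commute), and passing to the eigenbasis of $M = e^{H(0)}$ turns the claim $f''(0)\le 0$ into precisely the paper's inequality \eqref{eq:diagonalcase}, i.e.\ your displayed inequality with weights $\tfrac{\lambda_p+\lambda_q}{2}$ on the left and $\LM(\lambda_p,\lambda_q)$ on the right. (A small gloss: concavity near $0$ does not follow from $f''(0)\le 0$ ``by continuity'' if $f''(0)=0$; instead one reruns the same computation at nearby base points $z_0$, which is legitimate since $C_i \pm z_0K_i$ remains positive definite and still commutes with $K_i$.) The genuine gap is that you stop exactly where the work begins: the two-matrix inequality is accompanied by a plan rather than a proof, and the plan as described cannot succeed. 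Any strategy that ``absorbs'' the cross term $\int_0^1 \trace\big(e^{sA}D_1e^{(1-s)A}D_2\big)\,ds$ through its absolute value---which is all that Cauchy--Schwarz on Hilbert--Schmidt pairings can deliver---would establish the uniform bound $\sum_{p,q}\big(\tfrac{\lambda_p+\lambda_q}{2}-\LM(\lambda_p,\lambda_q)\big)\big(|(D_1)_{pq}|^2+|(D_2)_{pq}|^2\big) \ge 2\,\big|\!\int_0^1 \trace(e^{sA}D_1e^{(1-s)A}D_2)\,ds\big|$, and this bound is simply false: take $D_1=D_2\neq 0$ and $A$ a multiple of the identity, so the left-hand slack vanishes while the cross term is strictly positive. (Concretely, the natural Cauchy--Schwarz estimate bounds the cross term by $\big(\sum\LM|(D_1)_{pq}|^2\big)^{1/2}\big(\sum\LM|(D_2)_{pq}|^2\big)^{1/2}$, which after averaging leaves you needing $\tfrac{\lambda_p+\lambda_q}{2}\ge 2\LM(\lambda_p,\lambda_q)$, false at $\lambda_p=\lambda_q$; and the factorization $D_i=E_i^*E_i$ does not help, because the pairing it produces is between the two distinct matrices $E_1e^{sA}E_2^*$ and $E_1e^{(1-s)A}E_2^*$, not a norm, so no sign information is gained.) Positive semidefiniteness of $D_1,D_2$ must therefore be used in a sign-sensitive, aggregate way.

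This is exactly where the paper's proof does its delicate work, and its mechanism is quite different from what you propose. It first replaces $\LM$ not by the arithmetic mean but by the binomial mean $\AGM(x,y)=\big(\tfrac{\sqrt x+\sqrt y}{2}\big)^2$ (\Theorem{Carlson}); the expansion $\AGM = \tfrac{x+y}{4}+\tfrac{\sqrt{xy}}{2}$ is what allows the right-hand side of \eqref{eq:diagonalcase} to be rewritten as genuine trace quantities involving both $M$ and $M^{1/2}$. It then discards the manifestly nonnegative term $\trace M^{1/2}XM^{1/2}Y$, reduces to an inequality in a diagonal matrix $D=M^{1/2}$, applies an entrywise AM--GM step, and closes by applying the Schur product theorem to the Hadamard product $X\circ Y \in \Psd$---this last step is where positive semidefiniteness of both matrices finally controls the sign of the cross term. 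To complete your write-up you need an argument of this kind (the interpolation through $\AGM$ rather than the arithmetic mean is the step that makes the cross term tractable), or a genuinely new idea such as the paper's alternative complex-analytic proof via Herglotz functions; Cauchy--Schwarz combined with $\LM \le \tfrac{x+y}{2}$ will not close the gap.
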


First we need some preliminary definitions.
For $x,y \geq 0$,
define the logarithmic mean and binomial mean as follows:
\begin{align*}
\LM(x,y) &~=~
\begin{cases} \frac{x-y}{\log x - \log y} &\quad(x \neq y) \\ x &\quad(\text{otherwise}) \end{cases} \\
\AGM(x,y) &~=~ \Big( \frac{x+y}{2} + \sqrt{xy}\Big)/2
          ~=~ \Big( \frac{\sqrt{x}+\sqrt{y}}{2} \Big)^2.
\end{align*}

\begin{theorem}[Carlson~\cite{Carlson}, Bhatia~\cite{BhatiaMean}]
\TheoremName{Carlson}
For $x,y \geq 0$, 
\[ \sqrt{xy} ~\leq~ \LM(x,y) ~\leq~ \AGM(x,y) ~\leq~ (x+y)/2. \]
\end{theorem}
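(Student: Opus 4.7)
The plan is to reduce each of the three inequalities to a univariate inequality by the substitution $t = \sqrt{x/y}$, and then verify each by elementary calculus. Throughout we may assume $x > y > 0$: the case $x = y$ reduces to equality in all three bounds, and the boundary case $y = 0$ (or $x = 0$) follows by continuity, noting that $\LM(x,0) = 0$ in the limit.

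First I would handle the rightmost inequality, $\AGM(x,y) \leq (x+y)/2$, directly from the classical AM-GM inequality: since $\sqrt{xy} \leq (x+y)/2$, averaging this with $(x+y)/2$ gives $\AGM(x,y) = \tfrac{1}{2}\big((x+y)/2 + \sqrt{xy}\big) \leq (x+y)/2$.

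For the remaining two inequalities, set $t := \sqrt{x/y} > 1$. A direct computation yields
\begin{equation*}
\LM(x,y) = \frac{y(t^2-1)}{2\log t}, \qquad \sqrt{xy} = yt, \qquad \AGM(x,y) = y\Big(\frac{t+1}{2}\Big)^{\!2}.
\end{equation*}
The inequality $\sqrt{xy} \leq \LM(x,y)$ is therefore equivalent to $2t\log t \leq t^2 - 1$, or $\log t \leq \tfrac{1}{2}(t - 1/t)$; and $\LM(x,y) \leq \AGM(x,y)$ is equivalent to $\tfrac{2(t-1)}{t+1} \leq \log t$.

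The main step is then to verify both scalar inequalities for $t \geq 1$ by differentiation. For the first, let $h_1(t) := \tfrac{1}{2}(t - 1/t) - \log t$; then $h_1(1) = 0$ and
\begin{equation*}
h_1'(t) = \tfrac{1}{2}(1 + 1/t^2) - 1/t = \frac{(t-1)^2}{2t^2} \geq 0,
\end{equation*}
so $h_1(t) \geq 0$ for $t \geq 1$. For the second, let $h_2(t) := \log t - \tfrac{2(t-1)}{t+1}$; then $h_2(1) = 0$ and
\begin{equation*}
h_2'(t) = \frac{1}{t} - \frac{4}{(t+1)^2} = \frac{(t-1)^2}{t(t+1)^2} \geq 0,
\end{equation*}
so $h_2(t) \geq 0$ for $t \geq 1$. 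No step here is a genuine obstacle — the only minor care needed is handling the boundary $y = 0$ by a limiting argument and the diagonal $x = y$ by noting that both $\LM$ and $\AGM$ agree with $x$ there, which matches the defining case in $\LM$.
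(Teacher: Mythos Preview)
Your proof is correct and self-contained. The paper does not actually prove this theorem; it simply states the inequality chain and cites Carlson and Bhatia for the result. Your argument---reducing to the single variable $t=\sqrt{x/y}$ and verifying the two scalar inequalities $\tfrac{2(t-1)}{t+1}\le\log t\le\tfrac12(t-1/t)$ by differentiation---is exactly the classical elementary route that those references take, and every step checks out (including the derivative computations and the handling of the degenerate cases $x=y$ and $y=0$).
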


For any $X \in \Pd$, define the operators $T_X, R_X : \Sym \rightarrow \Sym$ by
\begin{align*}
T_X(Y) &~:=~ \int_{0}^{\infty} (X+tI)^{-1} Y (X+tI)^{-1} \, dt \\
R_X(Y) &~:=~ 2 \int_{0}^{\infty} (X+tI)^{-1} Y (X+tI)^{-1} Y (X+tI)^{-1} \, dt.
\end{align*}

\newcommand{\FACTCOMM}{\textup{\textbf{(P1)}}}
\newcommand{\FACTINV}{\textup{\textbf{(P2)}}}
\newcommand{\FACTDIAG}{\textup{\textbf{(P3)}}}
\newcommand{\FACTPOS}{\textup{\textbf{(P4)}}}

\begin{claim}
Let $X \in \Pd$ and $Y \in \Sym$.
\begin{itemize}
\item \FACTCOMM\textup{:}
If $X$ and $Y$ commute then \tcskip
$T_X(Y)=YX^{-1}$ and $R_X(Y)=Y^2 X^{-2}$.

\item \FACTINV\textup{:}
The inverse of $T_X$ is the operator $T_X^{-1}$ where \tcskip
$T_X^{-1}(Y) = \int_{0}^1 X^t Y X^{1-t} \, dt $.

\item \FACTDIAG\textup{:}
In a basis in which $X$ is diagonal, we have \tcskip
$\big(T_X^{-1}(Y)\big)_{i,j} = Y_{i,j} \cdot \LM(X_{i,i},X_{j,j})$.

\item \FACTPOS\textup{:}
$T_X$ is a positive map, i.e., $T_X(Y) \in \Psd$ whenever $Y \in \Psd$.
\end{itemize}
\end{claim}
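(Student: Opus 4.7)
My plan is to prove all four properties by passing to an eigenbasis of $X$. Fix such a basis, so that $X = \diag(x_1,\ldots,x_n)$ with $x_i > 0$. In this basis $(X+tI)^{-1}$ is diagonal with entries $1/(x_i+t)$, and therefore the map $Y \mapsto (X+tI)^{-1} Y (X+tI)^{-1}$ acts entrywise as $Y_{ij} \mapsto Y_{ij}/\bigl((x_i+t)(x_j+t)\bigr)$. This entrywise reduction will drive the proofs of \FACTINV, \FACTDIAG, and \FACTPOS.

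For \FACTCOMM\ I will instead exploit commutativity. If $X$ and $Y$ commute then $(X+tI)^{-1}$ also commutes with $Y$ (being a function of $X$), so I can pull $Y$ (respectively $Y^2$) outside the integrands defining $T_X$ and $R_X$. What remains are the scalar-type integrals $\int_0^\infty (X+tI)^{-2}\,dt = X^{-1}$ and $\int_0^\infty (X+tI)^{-3}\,dt = \tfrac{1}{2} X^{-2}$, which give exactly the stated closed forms $T_X(Y) = YX^{-1}$ and $R_X(Y) = Y^2 X^{-2}$.

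To establish \FACTDIAG, the entrywise reduction yields $T_X(Y)_{ij} = Y_{ij}\int_0^\infty dt/\bigl((x_i+t)(x_j+t)\bigr)$. Partial fractions evaluates this integral to $(\log x_j - \log x_i)/(x_j - x_i) = 1/\LM(x_i,x_j)$ when $x_i \neq x_j$, and to $1/x_i = 1/\LM(x_i,x_i)$ otherwise. Hence in the eigenbasis of $X$, $T_X$ acts as Hadamard multiplication with the matrix whose $(i,j)$ entry is $1/\LM(x_i,x_j)$; its inverse is thus Hadamard multiplication with $\LM(x_i,x_j)$, which is exactly \FACTDIAG. For \FACTINV\ I will directly verify that the proposed formula $\int_0^1 X^t Y X^{1-t}\,dt$ has the same entrywise action: in the diagonal basis its $(i,j)$ entry is $Y_{ij}\int_0^1 x_i^t x_j^{1-t}\,dt$, and this scalar integral evaluates to $\LM(x_i,x_j)$ (with the $x_i = x_j$ case handled by continuity).

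Finally \FACTPOS\ follows directly from the integral definition of $T_X$: for any $Y = ZZ\transpose \in \Psd$, the integrand $(X+tI)^{-1}Y(X+tI)^{-1}$ equals $W_t W_t\transpose$ with $W_t := (X+tI)^{-1}Z$, so is PSD for every $t$; and the integral of a PSD-valued function over $[0,\infty)$ remains PSD since $\Psd$ is a closed convex cone. I do not anticipate any real obstacle here; the only mildly delicate point is the continuity argument for the logarithmic-mean integral at coinciding eigenvalues, but this is a standard removable singularity.
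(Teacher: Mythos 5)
Your proposal is correct, and it is in fact more than the paper offers: the paper does not prove this claim at all, but simply cites Lieb (p.~277) and Ohya--Petz (Eq.~(3.7) and p.~49) for all four properties. Your self-contained argument is the standard one and checks out in every part: for (P1) the commutation lets you pull $Y$ (resp.\ $Y^2$) out of the integrals, and the scalar identities $\int_0^\infty (x+t)^{-2}\,dt = x^{-1}$, $\int_0^\infty (x+t)^{-3}\,dt = \tfrac12 x^{-2}$ give exactly the stated closed forms (the factor $2$ in the definition of $R_X$ cancelling the $\tfrac12$); for (P3) and (P2) the entrywise computation in an eigenbasis of $X$ correctly identifies $T_X$ with Schur multiplication by $\bigl[1/\LM(x_i,x_j)\bigr]_{i,j}$ and the candidate formula $\int_0^1 X^t Y X^{1-t}\,dt$ with Schur multiplication by $\bigl[\LM(x_i,x_j)\bigr]_{i,j}$, and since these operations and their composition are basis-independent, verifying the inverse relation in one basis suffices; for (P4) the factorization of the integrand as $W_t W_t\transpose$ plus closedness of the PSD cone (with the easy integrability bound guaranteeing convergence) is exactly right. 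The only benefit of the paper's route is brevity; yours makes the appendix self-contained, and the one delicate point you flag (the coinciding-eigenvalue case of the logarithmic mean) is indeed just a removable singularity handled by the direct computation $\int_0^\infty (x+t)^{-2}\,dt = 1/x = 1/\LM(x,x)$.
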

\begin{proof}
See Lieb \cite{Lieb} p.~277, and Ohya and Petz \cite{OP} Eq.~(3.7) and p.~49.
\end{proof}

\begin{claim}
For any $C \in \Pd$, $K \in \Sym$ and $x \in \bR$,
\[
    \log(C+xK) ~=~ \log C + x T_C(K) - \frac{1}{2} x^2 R_C(K) + O(x^3).
\]
\end{claim}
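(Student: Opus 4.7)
\begin{proofof}{the claim}
The plan is to start from the integral representation of the matrix logarithm used earlier in the paper, namely
\[
\log X \;=\; \int_{0}^{\infty} \Big(\tfrac{1}{1+t}\,I \,-\, (X+tI)^{-1}\Big)\, dt
\qquad (X \in \Pd),
\]
apply it to $X = C+xK$, and then expand $(C+tI+xK)^{-1}$ using the resolvent (Neumann) series in powers of $x$. With $A_t := C+tI \in \Pd$, the identity $(A_t+xK)^{-1} = A_t^{-1} - xA_t^{-1}KA_t^{-1} + x^2 A_t^{-1}KA_t^{-1}KA_t^{-1}(A_t+xK)^{-1}A_t$ (or, equivalently, the geometric series $A_t^{-1}\sum_{k\ge 0}(-xKA_t^{-1})^k$) gives a pointwise expansion
\[
(C+tI+xK)^{-1} \;=\; A_t^{-1} \,-\, x\,A_t^{-1} K A_t^{-1} \,+\, x^2\,A_t^{-1} K A_t^{-1} K A_t^{-1} \,+\, E_t(x),
\]
where $E_t(x)$ is the tail of the Neumann series.

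Plugging this into the integral representation of $\log(C+xK)$, the zeroth-order term integrates to $\log C$, the first-order term is exactly $x T_C(K)$, and the second-order term is $-x^2\int_0^\infty A_t^{-1} K A_t^{-1} K A_t^{-1}\, dt = -\tfrac{1}{2} x^2 R_C(K)$, by the very definitions of $T_C$ and $R_C$ given just above. Thus, modulo justifying the interchange of summation and integration, we obtain the desired expansion.

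The only step requiring care is bounding the remainder $\int_0^\infty E_t(x)\,dt$. Since $\|A_t^{-1}\| \le (\lmin(C)+t)^{-1}$, for $|x| < \tfrac{1}{2}\lmin(C)/\|K\|$ we have $\|xKA_t^{-1}\| \le 1/2$ for all $t \ge 0$, so the Neumann series converges in norm, and one obtains
\[
\|E_t(x)\| \;\le\; C_1\,|x|^3\,\|K\|^3\,(\lmin(C)+t)^{-4}
\]
for some constant $C_1$. This bound is integrable in $t$, so $\big\|\int_0^\infty E_t(x)\,dt\big\| = O(|x|^3)$, uniformly in a neighborhood of $0$. The same norm bound also justifies Fubini/interchange for the lower-order terms. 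This will be the main (though routine) technical step; everything else is direct substitution.
\end{proofof}
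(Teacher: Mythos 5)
Your proposal is correct, and in fact it supplies a genuine argument where the paper gives none: the paper's ``proof'' of this claim is only a pointer to Lieb's equations (3.6) and (3.9) and to Ohya--Petz, and your derivation (the integral representation of $\log$, the Neumann expansion of $(C+tI+xK)^{-1}$ in powers of $x$, term-by-term integration, and a uniform $O(|x|^3(\lmin(C)+t)^{-4})$ bound on the tail to justify the interchange) is precisely the standard computation underlying those citations. The sign and the factor $\tfrac12$ come out right: the $k=2$ term of the series contributes $-x^2\int_0^\infty A_t^{-1}KA_t^{-1}KA_t^{-1}\,dt = -\tfrac12 x^2 R_C(K)$ by the definition of $R_C$, and the remainder bound is integrable in $t$ once $|x|\,\|K\|\,\|A_t^{-1}\|\le 1/2$, which your restriction $|x|<\tfrac12\lmin(C)/\norm{K}$ guarantees uniformly in $t$. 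One small correction: the displayed ``exact'' resolvent identity
\[
(A_t+xK)^{-1} \;=\; A_t^{-1} - xA_t^{-1}KA_t^{-1} + x^2 A_t^{-1}KA_t^{-1}KA_t^{-1}(A_t+xK)^{-1}A_t
\]
is not an identity unless $A_t$ and $K$ commute; the correct second-order form is
\[
(A_t+xK)^{-1} \;=\; A_t^{-1} - xA_t^{-1}KA_t^{-1} + x^2 A_t^{-1}KA_t^{-1}K\,(A_t+xK)^{-1}.
\]
Since your argument actually runs through the geometric series $A_t^{-1}\sum_{k\ge 0}(-xKA_t^{-1})^k$ and bounds its tail, this slip is cosmetic and does not affect the proof.
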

\begin{proof}
See Lieb \cite{Lieb} equations (3.6) and (3.9), and Ohya and Petz \cite[p.~53]{OP}.
\end{proof}

\begin{claim}
\ClaimName{exp}
Let $L \in \Sym$, $C_1, C_2 \in \Pd$ and $K_1, K_2 \in \Sym$.
Define $M = \exp(L + \log C_1 + \log C_2)$.
Then
\begin{align*}
&\exp\Big( L + \log( C_1 + z K_1 ) + \log( C_2 - z K_2 ) \Big) \\
&\quad=~ M + z \int_0^1 M^{1-s} \Big( T_{C_1}(K_1) - T_{C_2}(K_2) \Big) M^{s} \, ds \\
&\quad+~ z^2 \Bigg( -\frac{1}{2} \int_0^1 M^{1-s} \Big( R_{C_1}(K_1) + R_{C_2}(K_2) \Big) M^s \, ds \\
&\quad+~ \int_0^1 \int_0^s M^{1-s}
                \Bigl( T_{C_1}(K_1) - T_{C_2}(K_2) \Bigr) M^{s-u}\tcskip
                \Bigl( T_{C_1}(K_1) - T_{C_2}(K_2) \Bigr) M^u ~ du \, ds \Biggr)
 ~+~ O(z^3).
\end{align*}
\end{claim}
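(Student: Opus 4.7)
The plan is to compose two Taylor expansions. First, I would apply the preceding claim on the expansion of $\log(C + xK)$ to each of $\log(C_1 + zK_1)$ and $\log(C_2 - zK_2)$ separately (using $x=z,\,K=K_1$ and $x=-z,\,K=K_2$ respectively). Summing with $L$ and collecting by powers of $z$ yields
\[
L + \log(C_1 + zK_1) + \log(C_2 - zK_2) ~=~ A + zB_1 + z^2 B_2 + O(z^3),
\]
where I set $A := L + \log C_1 + \log C_2$, $B_1 := T_{C_1}(K_1) - T_{C_2}(K_2)$, and $B_2 := -\tfrac{1}{2}\bigl(R_{C_1}(K_1) + R_{C_2}(K_2)\bigr)$. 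Note that $M = e^A$ by definition.

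Second, I would invoke the Duhamel (Dyson) expansion of the matrix exponential,
\[
\exp(A + W) ~=~ e^A \,+\, \int_0^1 e^{(1-s)A} W e^{sA}\,ds \,+\, \int_0^1\!\!\int_0^s e^{(1-s)A} W e^{(s-u)A} W e^{uA}\,du\,ds \,+\, \cdots,
\]
which converges for small $W$ in operator norm. Substituting $W = W(z) := zB_1 + z^2 B_2 + O(z^3)$ and regrouping by powers of $z$ yields the claim. The $z^0$ term is simply $e^A = M$. The $z^1$ coefficient comes entirely from the single-integral term applied to the linear part $zB_1$ of $W(z)$, producing $\int_0^1 M^{1-s} B_1 M^s\,ds$. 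The $z^2$ coefficient has two sources: the single integral applied to the $z^2 B_2$ part of $W(z)$, which produces $-\tfrac{1}{2}\int_0^1 M^{1-s}(R_{C_1}(K_1)+R_{C_2}(K_2)) M^s\,ds$; and the double integral with both $W$ factors replaced by $zB_1$, which produces the displayed nested integral in two copies of $B_1$. Every other contribution (cubic and higher parts of $W(z)$, and Dyson terms of order $\geq 3$ in $W$) is absorbed into the $O(z^3)$ remainder.

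The main technical inputs are really just these two classical expansions; the remainder of the argument is routine bookkeeping of powers of $z$. There is no obstacle of substance here---in particular, the commutativity hypothesis of \Theorem{specialcase} plays no role in this expansion step, and is only invoked subsequently (together with \FACTDIAG\ and \Theorem{Carlson}) to bound the sign of the $z^2$ coefficient of $\trace$ applied to the above expression.
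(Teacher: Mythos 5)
Your proposal is correct and is essentially the argument the paper intends: the paper's proof of \Claim{exp} is just a citation to Ohya and Petz, whose derivation is exactly this composition of the second-order expansion of $\log(C+xK)$ with the Duhamel/Dyson expansion of the matrix exponential, followed by collecting powers of $z$. Your bookkeeping of the $z^0$, $z^1$, and $z^2$ coefficients matches the claimed formula, and you are right that no commutativity is needed at this stage.
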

\begin{proof}
Similar to Ohya and Petz \cite[p.~53]{OP}.
\end{proof}

\begin{proofof}{\Theorem{specialcase}}
The theorem is equivalent to $0 \leq \frac{d^2 f}{dz^2} \big|_{z=0}$
(assuming that this derivative exists).
From \Claim{exp} we have
\begin{align}
\EquationName{deriv}
&{\displaystyle \frac{d^2 f}{dz^2} \bigg|_{z=0} }
\\\nonumber
&=~ {-} \trace M \big( R_{C_1}(K_1) + R_{C_2}(K_2) \big) ~+
\iftc{\\\nonumber&}
\trace {\textstyle \int_0^1} \big( T_{C_1}(K_1) \!-\! T_{C_2}(K_2) \big) M^y \big( T_{C_1}(K_1) \!-\! T_{C_2}(K_2) \big) M^{1-y} \, dy
\\\nonumber
&=~ {-} \trace M \big( R_{C_1}(K_1) + R_{C_2}(K_2) \big) ~+
\iftc{\\\nonumber&}
\trace \big( T_{C_1}(K_1) \!-\! T_{C_2}(K_2) \big) T_M^{-1} \big( T_{C_1}(K_1) \!-\! T_{C_2}(K_2) \big).
\end{align}
From \FACTCOMM\ and the assumption that $C_i$ and $K_i$ commute
we have $R_{C_i}(K_i) = T_{C_i}(K_i)^2$. 
So the assertion of the theorem is equivalent to
\begin{multline}
\EquationName{equivassertion}
\trace M \, T_{C_1}(K_1)^2 + \trace M \, T_{C_2}(K_2)^2
~\geq~ \iftc{\\}
\trace \Big( T_{C_1}(K_1) - T_{C_2}(K_2) \Big) T_M^{-1}
         \Big( T_{C_1}(K_1) - T_{C_2}(K_2) \Big).
\end{multline}
We will prove the more general statement that for all $M \in \Pd$ and $X, Y \in \Psd$, 
\begin{equation}
\EquationName{diagonalcase}
\trace M X^2 ~+~ \trace M Y^2 ~\geq~ \trace (X-Y) T_M^{-1}(X-Y).
\qquad
\end{equation}
This implies \eqref{eq:equivassertion} by our assumption that $K_1, K_2 \in \Psd$
and \FACTPOS.

The preceding discussion is basis-independent.
It is now convenient to fix a basis in which $M$ is diagonal and to 
view $M$, $X$ and $Y$ as matrices in that basis.
Let us denote the diagonal entries of $M$ by $\lambda_i = M_{i,i}$;
these are positive since we assume $M \in \Pd$.
By \FACTDIAG, the right-hand side of \eqref{eq:diagonalcase} is
\begin{align}
\trace (X - Y) T_M^{-1}( X - Y )
~&=~ \sum_{i,j} \: \LM(\lambda_i,\lambda_j) \cdot ( X_{i,j} - Y_{i,j} \big)^2 \notag \\
~&\leq~ \sum_{i,j} \: \AGM(\lambda_i,\lambda_j) \cdot ( X_{i,j} - Y_{i,j} \big)^2,
\EquationName{AGM}
\end{align}
by \Theorem{Carlson}.
We may rewrite the right-hand side as
\begin{multline}
\sum_{i,j}
    \Big( \frac{\lambda_i}{4} + \frac{\lambda_j}{4} + \frac{\sqrt{\lambda_i \lambda_j}}{2} \Big)
    ( (X_{i,j})^2 + (Y_{i,j})^2 - 2 X_{i,j} Y_{i,j} ) \\
\EquationName{lotsoftrace}
~=~ \frac{\trace M X^2}{2} + \frac{\trace M^{1/2} X M^{1/2} X}{2} \iftc{\\}
   + \frac{\trace M Y^2}{2}\, + \frac{\trace M^{1/2} Y M^{1/2} Y}{2} \iftc{\\}
   - \trace M X Y - \trace M^{1/2} X M^{1/2} Y
\end{multline}
by repeatedly using the observation
\[
\sum_{i,j} D_{i,i} P_{i,j} Q_{i,j} E_{j,j}
    ~=~ \trace DPEQ
    ~=~ \trace EPDQ
\]
for all $D, E \in \Diag$, $P, Q \in \Sym$.

Thus, combining \eqref{eq:diagonalcase}, \eqref{eq:AGM} and \eqref{eq:lotsoftrace},
it suffices to prove
\begin{multline*}
    \trace M X^2
    - \trace M^{1/2} X M^{1/2} X
    + \trace M Y^2
    - \trace M^{1/2} Y M^{1/2} Y\iftc{\\}
    ~\geq~
    - 2 \trace M X Y
    - 2 \trace M^{1/2} X M^{1/2} Y
\end{multline*}
for every $M, X, Y \in \Psd$.

Since that inequality is invariant under
choice of orthonormal basis, and since $\trace M^{1/2} X M^{1/2} Y \geq 0$,
it suffices to prove
\begin{equation}
\EquationName{rewritten}
\begin{split}
&\trace X D^2 X  - \trace X D X D
+ \trace Y D^2 Y  - \trace Y D Y D \iftc{\\&\quad }
~\geq~ {-2} \trace X D^2 Y
\quad~~\forall D \in \Diag ,~ \forall X, Y \in \Psd.
\end{split}
\end{equation}
Denote the diagonal entries of $D$ by $d_i = D_{i,i}$.
Then
\begin{align*}
  \trace X D^2 X - \trace XDXD
 &~=~ \frac{1}{2} \sum_{i,j} X_{i,j}^2 (d_i^2+d_j^2) - \sum_{i,j} X_{i,j}^2 d_i d_j
 \iftc{\\&}
 ~=~ \frac{1}{2} \sum_{i,j} X_{i,j}^2 (d_i - d_j)^2.
\end{align*}
So the left-hand side of \eqref{eq:rewritten} equals
\[
\sum_{i,j} \frac{X_{i,j}^2 + Y_{i,j}^2}{2} (d_i - d_j)^2
~\geq~ 
\sum_{i,j} |X_{i,j} Y_{i,j}| \cdot (d_i - d_j)^2,
\]
by the arithmetic-mean geometric-mean (AM-GM) inequality.
The right-hand side of \eqref{eq:rewritten} is
\begin{align*}
{-} 2 \trace(X D^2 Y) 
 &~=~ {-} \trace(X D^2 Y) - \trace(D^2 X Y)
\iftc{\\&}
 ~=~ {-} \sum_{i,j} X_{i,j} Y_{i,j} (d_i^2 + d_j^2).
\end{align*}
So, to prove \eqref{eq:rewritten}, it suffices to prove that
\begin{equation}
\EquationName{weirdabs}
\sum_{i,j} |X_{i,j} Y_{i,j}| \cdot (d_i - d_j)^2
~\geq~ 
- \sum_{i,j} X_{i,j} Y_{i,j} (d_i^2 + d_j^2).
\end{equation}

We will prove the more general inequality
\begin{equation}
\EquationName{moregeneral}
\sum_{i,j} | Z_{i,j} | \cdot (d_i - d_j)^2
~\geq~
- \sum_{i,j} Z_{i,j} (d_i^2 + d_j^2)
\quad \forall d \in \bR^n ,~ \forall Z \in \Psd.
\end{equation}
This implies \eqref{eq:weirdabs} by letting $Z = X \circ Y$
(the Hadamard product of $X$ and $Y$),
which is positive semidefinite by the Schur product theorem \cite[p.~23]{Bhatia}.
Rearranging, \eqref{eq:moregeneral} becomes
\begin{equation}
\EquationName{moregeneral2}
\frac{1}{2} \sum_{i,j} (|Z_{i,j}| + Z_{i,j}) (d_i^2 + d_j^2)
~\geq~
\sum_{i,j} |Z_{i,j}| \, d_i d_j.
\end{equation}
Since $|Z_{i,j}| + Z_{i,j} \geq 0$, the AM-GM inequality
implies that the left-hand side is at least
\[
\sum_{i,j} (|Z_{i,j}| + Z_{i,j}) d_i d_j
~=~ \sum_{i,j} |Z_{i,j}| \, d_i d_j \:+\: d \transpose Z d.
\]
Since $Z \in \Psd$, this implies \eqref{eq:moregeneral2}.
\end{proofof}

\section{Connections to the Kadison-Singer Problen}
\AppendixName{kadison}

The Kadison-Singer problem, which dates back to 1959, is an important, and until very recently unsolved, question in operator theory.
The importance of this question has become increasingly apparent in recent
years as it is now known to be equivalent, or closely related, to
numerous conjectures in disparate areas of mathematics \cite{CT}.
In a very recent breakthrough, Marcus, Spielman and Srivastava~\cite{MSS} positively resolved the Kadison-Singer problem. 
\onote{Hmm, maybe this is only related to Anderson's paving conjecture?}
More precisely, they proved the following strong form of Weaver's conjecture~\protect\cite[Conjecture $\text{KS}_2$ and Theorem 2]{Weaver}:

\begin{theorem}[\cite{MSS}]\label{thm:MSS} 
Let $\epsilon > 0$, and $u_1,\ldots,u_m \in \bC^n$ such that
$\norm{u_i} \leq \epsilon$ for all $i$, and $\sum_i u_i u_i \transpose = I$.
Then there exists a partition of $[m]$ into $S_1, S_2$ such that for each $j \in \{1,2\}$, 
\begin{equation}\label{eq:MSS}
\sum_{i \in S_j} u_i u_i \transpose  ~\leq~ \tfrac12(1 + \sqrt{2\epsilon})^2.
\end{equation}
\end{theorem}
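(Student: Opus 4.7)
First I would reduce the two-sided bound to a single operator-norm bound in dimension $2n$ via a standard doubling trick. Let $k_1,\ldots,k_m$ be i.i.d.\ uniform in $\set{1,2}$ and set $v_i = \sqrt{2}\, e_{k_i}\otimes u_i \in \bC^{2n}$. The random matrix $M := \sum_i v_i v_i\transpose$ is block diagonal with blocks $2\sum_{i \in S_1} u_iu_i\transpose$ and $2\sum_{i \in S_2} u_iu_i\transpose$, where $S_j = \setst{i}{k_i = j}$. A direct check gives $\expect{M} = I_{2n}$, $\norm{v_i}^2 \le 2\epsilon$, and $\lmax(M) \le (1+\sqrt{2\epsilon})^2$ is equivalent to \eqref{eq:MSS} holding simultaneously for both $j$. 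Hence it suffices to exhibit at least one outcome of $(k_1,\ldots,k_m)$ achieving this top-eigenvalue bound.

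\textbf{Interlacing families.} For each leaf $\omega = (k_1,\ldots,k_m)$ let $q_\omega(x) = \det(xI - M)$, and at each internal node of the binary tree $\set{1,2}^m$ form the conditional expected characteristic polynomial. I would prove these form an \emph{interlacing family}: at every internal node the two children admit a common interlacer. By the Hermite--Kakeya--Obreschkoff theorem this reduces to real-rootedness of every convex combination of the two children, which in turn reduces to real-rootedness of every conditional expected polynomial. A short averaging argument then produces a leaf $\omega^*$ with $\lmax(q_{\omega^*}) \le \lmax(\bar q)$, where $\bar q = \expectover{\omega}{q_\omega}$ is the fully averaged polynomial.

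\textbf{Real stability.} The averaged polynomial is the \emph{mixed characteristic polynomial} of the $2m$ PSD matrices $A_{i,k} := 2\, e_k e_k\transpose \otimes u_i u_i\transpose$ for $i \in [m]$, $k \in \set{1,2}$. Expanding the determinant over independent choices of $k_i$,
\[
\bar q(x) \;=\; \prod_{i=1}^m \tfrac12\bigl((1 - \partial_{y_i}) + (1 - \partial_{z_i})\bigr)\Big|_{y=z=0}\, \det\Big(xI + \sum_{i=1}^m (y_i A_{i,1} + z_i A_{i,2})\Big).
\]
The inner multivariate polynomial is real stable in $(x,y,z)$ by the Borcea--Br\"and\'en characterization of determinants of affine pencils of Hermitian PSD matrices, and the differential operators $1 - \partial_{y_i}$, $1 - \partial_{z_i}$ preserve real stability. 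Setting $y = z = 0$ yields a univariate real-rooted polynomial, and the same argument at each subtree furnishes the real-rootedness needed for the interlacing-family hypothesis.

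\textbf{Barrier method.} The remaining and most delicate step is to show $\lmax(\bar q) \le (1+\sqrt{2\epsilon})^2$. I would follow the multivariate barrier approach: for the determinantal polynomial $P(x,w)$ above (with $w$ collecting $y$ and $z$), define barrier functions $\Phi_{i,k}(b) = \partial_{w_{i,k}} \log P(b,w)\big|_{w=0}$ at a candidate upper bound $b$. The trace identity $\sum_i A_{i,k} = I_{2n}$ together with $\norm{u_i}^2 \le \epsilon$ yields initial estimates on $\Phi_{i,k}(b)$ that are $O(\epsilon/(b-1))$. A barrier-shift lemma then says: if $\Phi_{i,k}(b) \le 1 - 1/\delta$, then $(1-\partial_{w_{i,k}})P$ has no roots above $b + \delta$ and all barriers at $b+\delta$ remain controlled by their values at $b$. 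Inducting over the $2m$ operators while maintaining all barriers below a common threshold, and optimizing $\delta$ against the $\epsilon$-budget, yields the final bound $(1+\sqrt{2\epsilon})^2$. The principal obstacle is orchestrating this simultaneous control over all $2m$ barriers through the shift-and-absorb procedure; the algebraic scaffolding in the earlier paragraphs is routine once the barrier framework is in place.
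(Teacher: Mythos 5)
Note first that the paper does not prove this statement: it is imported as a black box from Marcus--Spielman--Srivastava \cite{MSS} (a strong form of Weaver's conjecture $\mathrm{KS}_2$), and \Appendix{kadison} only uses it. So your proposal is necessarily an attempt to reconstruct the MSS proof itself, and its architecture --- the doubling reduction to a one-sided bound in $\bC^{2n}$, interlacing families via common interlacers and real-rootedness, the mixed characteristic polynomial via real stability and the operators $1-\partial$, and finally a multivariate barrier argument --- is exactly theirs, not an alternative route.

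As a proof, however, it contains one concrete error and one genuine gap. The error: the claimed identity $\sum_i A_{i,k} = I_{2n}$ is false; with $A_{i,k} = 2\, e_k e_k\transpose \otimes u_i u_i\transpose$ one has $\sum_i A_{i,k} = 2\, e_k e_k\transpose \otimes I_n$, a rank-$n$ matrix of norm $2$, so the initial barrier estimates you quote are not available in your $2m$-variable normalization. The clean setup (and the one MSS use) collapses the two choices per index into $M_i := \tfrac12\big(A_{i,1}+A_{i,2}\big) = I_2 \otimes u_i u_i\transpose$, so that $\sum_i M_i = I_{2n}$ and $\trace M_i = 2\norm{u_i}^2$, and identifies $\bar q$ with the $m$-matrix mixed characteristic polynomial $\mu[M_1,\ldots,M_m]$; your $2m$-operator expansion is a valid identity but breaks the hypotheses your barrier step needs. (Also, your bound $\norm{v_i}^2 \le 2\epsilon$ silently uses $\norm{u_i}^2 \le \epsilon$ rather than the $\norm{u_i}\le\epsilon$ written in the statement; harmless in the interesting regime, but worth flagging.) The gap: the entire quantitative content of the theorem is the bound $(1+\sqrt{2\epsilon})^2$ on the largest root of $\mu[M_1,\ldots,M_m]$ when $\sum_i M_i = I$ and $\trace M_i \le 2\epsilon$ (MSS's Theorem 5.1). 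Your sketch delegates this to ``a barrier-shift lemma'' and an induction whose orchestration you yourself call the principal obstacle; but that lemma --- monotonicity \emph{and convexity} of the barrier functions above the roots of a real stable polynomial, which is what allows each $(1-\partial)$ application to shift the upper bound by $\delta$ without any barrier increasing --- is the heart of the argument and is nowhere established or even precisely stated. Without proving it (or explicitly citing it), the interlacing and real-stability scaffolding only yields a leaf whose top root is at most that of $\bar q$, with no control on the constant, so the stated bound $\tfrac12(1+\sqrt{2\epsilon})^2$ is not obtained.
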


It is well-known that, given a strong discrepancy result such as \eqref{eq:MSS}, an iterative
argument yields a sparse object that gives a good approximation.
See, e.g., Rudelson \cite{RudelsonOrthogonal}. For the sake of completeness, we include here a detailed
argument that \Theorem{MSS} implies the existence of $O(1/\conduc)$-spectrally-thin trees.

First, the following corollary of \Theorem{MSS} will be convenient for induction purposes.

\begin{corollary}
\label{cor:strongKS}
There exists a constant $C \geq 1$ such that the following is true.
Let $v_1,\ldots,v_m \in \bR^n$ be such that
$\alpha I \preceq \sum_i v_i v_i \transpose \preceq \beta I$
and $\norm{v_i}^2 = \delta := n/m$ for all $i$.
Suppose that $\alpha \in [1/2,1]$ and $\beta \in [1,2]$.
Then there exists $S \subseteq [m]$ satisfying
\begin{equation}\label{eq:strongKS}
(\alpha - C \sqrt{\delta}) I
~\preceq~ 2 \sum_{i \in S} v_i v_i \transpose
~\preceq~ (\beta + C \sqrt{\delta}) I.
\end{equation}
\end{corollary}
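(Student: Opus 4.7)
The plan is to reduce to Theorem \ref{thm:MSS} via a change of variables that renders the outer-product sum equal to the identity, apply Weaver's discrepancy bound, and then transport the bound back using the spectral sandwich on $M := \sum_i v_i v_i \transpose$.

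Concretely, I would first set $M := \sum_i v_i v_i \transpose$ and $u_i := M^{-1/2} v_i$, so that $\sum_i u_i u_i \transpose = I$. Since $M \succeq \alpha I \succeq \tfrac12 I$, one has
$$\|u_i\|^2 \;=\; v_i \transpose M^{-1} v_i \;\leq\; \|v_i\|^2/\alpha \;\leq\; 2\delta,$$
which is precisely the smallness hypothesis required by Theorem \ref{thm:MSS} (taking $\epsilon$ of order $\sqrt{\delta}$). The theorem then produces a partition $S_1 \cup S_2 = [m]$ with $\sum_{i \in S_j} u_i u_i \transpose \preceq r \cdot I$ for both $j$, where $r = \tfrac12(1 + \eta)^2$ and $\eta = O(\sqrt{\delta})$. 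Using complementarity, $\sum_{i \in S_1} u_i u_i \transpose + \sum_{i \in S_2} u_i u_i \transpose = I$, the upper bound on one side is simultaneously a lower bound $(1-r)I$ on the other, so
$$(1-r)\,I \;\preceq\; \sum_{i \in S_j} u_i u_i \transpose \;\preceq\; r\,I \qquad (j = 1,2).$$

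Next I would conjugate by $M^{1/2}$, which preserves the L\"owner order, to obtain $(1-r)M \preceq \sum_{i \in S_j} v_i v_i \transpose \preceq r M$. For $\delta$ sufficiently small the scalar $1 - r$ is positive, so substituting the hypothesis $\alpha I \preceq M \preceq \beta I$ without sign reversal yields $(1-r)\alpha I \preceq \sum_{i \in S_j} v_i v_i \transpose \preceq r\beta I$. Finally, take $S$ to be either part of the partition, multiply by $2$, expand $2r = (1 + O(\sqrt{\delta}))^2 = 1 + O(\sqrt{\delta})$, and use $\alpha \leq 1$ and $\beta \leq 2$ to absorb the multiplicative slack into additive slack. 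This yields \eqref{eq:strongKS} with a universal constant $C$.

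The work is largely book-keeping rather than conceptual: one must confirm that the error scales as $\sqrt{\delta}$ (not a weaker fractional power), and track the positivity of $1-r$ when propagating bounds from $M$ back to $\alpha I$. Both points are handled cleanly thanks to the hypothesized ranges $\alpha \in [1/2,1]$ and $\beta \in [1,2]$, which appear to have been engineered precisely so that the bounds on $2\sum_{i \in S} v_i v_i \transpose$ remain within the same ranges, enabling iterated application of the corollary.
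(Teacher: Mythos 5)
Your proposal is correct and takes essentially the same route as the paper's own proof: whiten with $u_i = M^{-1/2}v_i$, invoke Theorem~\ref{thm:MSS}, use complementarity of the two parts to convert the upper bound into the lower bound, and conjugate back by $M^{1/2}$, absorbing the multiplicative $O(\sqrt{\delta})$ slack into additive slack via $\alpha \in [1/2,1]$, $\beta \in [1,2]$ (the paper takes $C=16$). The one convention to pin down is that in Theorem~\ref{thm:MSS} (Weaver's $\mathrm{KS}_2$) the parameter $\epsilon$ bounds $\norm{u_i}^2$, so one takes $\epsilon = 2\delta$ rather than ``$\epsilon$ of order $\sqrt{\delta}$''; this is exactly what yields $\eta = \sqrt{2\epsilon} = O(\sqrt{\delta})$ and the $\sqrt{\delta}$ error scaling you flagged, matching the paper's usage (and when $1-r<0$ the lower bound in \eqref{eq:strongKS} is vacuous, so no smallness assumption on $\delta$ is needed).
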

\begin{proof}
Let $\alpha, \beta, \delta, v_1,\ldots,v_m$ be as in the statement of \Corollary{strongKS}.
Note that $\delta \leq 1$, since $m \geq n$. \onote{Say why?}
Letting $M = \sum_i v_i v_i \transpose$, we see that $\norm{M^{-1}} \leq \alpha^{-1}$.
Define $u_i = M^{-1/2} v_i$.
Then
$$
\sum_i u_i u_i \transpose
~=~ M^{-1/2} \Big(\sum_i v_i v_i \transpose\Big) M^{-1/2}
~=~ I
$$
and
$$
\norm{u_i}^2 ~\leq~ \norm{M^{-1}} \norm{v_i}^2 ~\leq~ \alpha^{-1} \delta ~=:~ \epsilon.
$$
Applying \Theorem{MSS} of Marcus et al.~\cite{MSS}, we deduce \eqref{eq:MSS}, and hence (since $\epsilon \leq 2$)
\[ 2\sum_{i \in S_j} u_i u_i \transpose ~\preceq~ 1 + 4\sqrt{2\epsilon} \qquad \text{for } j \in \{1,2\}. \]
Consequently,
$$
2 \sum_{i \in S_1} v_i v_i \transpose
~\preceq~ (1 + 4 \sqrt{2\epsilon}) M
~\preceq~ (1 + 4 \sqrt{2\epsilon}) \beta
~\preceq~ \beta + 16\sqrt{\delta}
$$
by the hypotheses $\alpha \in [1/2,1]$ and $\beta \in [1,2]$.

Observing that 
\[ 2\sum_{i \in S_1} u_i u_i\transpose = 2I - 2\sum_{i \in S_2} u_i u_i\transpose \succeq 1 - 4\sqrt{2\epsilon}, \]
we similarly obtain
$$
2 \sum_{i \in S_1} v_i v_i \transpose
~\succeq~ (1 - 4 \sqrt{2\epsilon}) M
~\succeq~ (1 - 4 \sqrt{2\epsilon}) \alpha
~\succeq~ \alpha - 4 \sqrt{2\delta}.
$$
Thus taking $S=S_1$, we see that \eqref{eq:strongKS} holds with $C=16$.
\end{proof}

We may now prove \Theorem{strongIsotropic} by an application of Corollary~\ref{cor:strongKS}.
By an argument similar to the proof of \Theorem{conductanceweightedtree},
this implies the existence of $O(1/\conduc)$-spectrally-thin trees.

\begin{claim}
\ClaimName{KSimplIsotropic}
\label{KSIMPLISOTROPIC}
Corollary~\ref{cor:strongKS} implies \Theorem{strongIsotropic}.
\end{claim}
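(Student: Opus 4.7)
The plan is a standard iterated halving argument. First I would reduce to an equal-norm setup by discretizing $p$: pick a large integer $M$ (a power of $2$ times a constant multiple of $n$) and nonnegative integers $k_1,\dots,k_m$ with $\sum_i k_i = M$ and $|k_i/M - p_i| \leq 1/M$, producing a multiset of $M$ unit vectors consisting of $k_i$ copies of each $w_i$. Record the map $i:[M]\to[m]$ tracking the source of each copy, and set $v_j^{(0)} := \sqrt{n/M}\,w_{i(j)}$; these satisfy $\norm{v_j^{(0)}}^2 = n/M$ and $\sum_j v_j^{(0)}(v_j^{(0)})\transpose = I + E$ with $\norm{E} = O(m/M)$, which can be made arbitrarily small by taking $M$ large.

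Next I would iterate \Corollary{strongKS} for $T$ rounds, with the goal of halving the surviving index set at each round until only $cn$ vectors remain, where $c$ is a sufficiently large constant. At round $t$, with $m_t$ vectors of squared norm $\delta_t = n/m_t$ and spectral bounds $\alpha_t I \preceq \sum_j v_j^{(t)}(v_j^{(t)})\transpose \preceq \beta_t I$, the corollary yields a subset $S_t$ (one side of a partition); taking a side of size exactly $m_t/2$ and setting $v_j^{(t+1)} := \sqrt{2}\,v_j^{(t)}$ gives $\norm{v_j^{(t+1)}}^2 = 2\delta_t = \delta_{t+1}$ together with the updates $\alpha_{t+1} = \alpha_t - C\sqrt{\delta_t}$ and $\beta_{t+1} = \beta_t + C\sqrt{\delta_t}$. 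Since $\delta_t$ doubles each round up to $\delta_T = 1/c$, the cumulative drift $\sum_{t=0}^{T-1} C\sqrt{\delta_t}$ is geometric and bounded by $O(\sqrt{1/c})$; choosing $c$ large enough keeps $\alpha_t \in [1/2,1]$ and $\beta_t \in [1,2]$ throughout, so the hypotheses of \Corollary{strongKS} remain satisfied at every step.

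After $T$ iterations the final index set $S^{(T)}$ has size $cn$, $\sum_{j \in S^{(T)}} v_j^{(T)}(v_j^{(T)})\transpose \preceq 2I$, and minimum eigenvalue at least $1/2$. Unwinding the scaling, $v_j^{(T)} = c^{-1/2} w_{i(j)}$, so $\sum_{j \in S^{(T)}} w_{i(j)} w_{i(j)}\transpose \preceq 2c\cdot I$, and the multiset $\{w_{i(j)} : j \in S^{(T)}\}$ spans $\bR^n$. Hence its set of distinct original indices contains a basis $S \subseteq [m]$ of\/ $\bR^n$ of size $n$, and since $\sum_{i \in S} w_i w_i\transpose$ is PSD-dominated by $\sum_{j \in S^{(T)}} w_{i(j)} w_{i(j)}\transpose$, we conclude $\norm{\sum_{i \in S} w_i w_i\transpose} = O(c) = O(1)$, proving \Theorem{strongIsotropic}.

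The main obstacle I anticipate is combinatorial bookkeeping: ensuring at each round that the chosen half $S_t$ has size exactly $m_t/2$ (so the squared-norm doubling aligns with the new target $n/|S_t|$), and that the discretization error $\norm{E}$ remains negligible after amplification through $T = O(\log M)$ rounds. Both are standard technicalities in iterated-MSS-style arguments (cf.\ Rudelson~\cite{RudelsonOrthogonal}) and do not affect the conceptual shape of the proof.
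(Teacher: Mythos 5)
Your plan is the same iterated-halving argument the paper uses (reduce to equal weights by duplication/discretization, iterate \Corollary{strongKS} until only $\Theta(n)$ vectors survive, then extract a basis), but one step as written would fail: you cannot ``take a side of size exactly $m_t/2$.'' \Corollary{strongKS} (and the underlying \Theorem{MSS}) produces a subset $S$ with two-sided spectral bounds and gives no control over $\card{S}$; neither side of the partition need have size $m_t/2$, and you do not get to choose which side or how large it is. Since your rescaling by $\sqrt{2}$ and the bookkeeping $\delta_{t+1}=2\delta_t$ both hinge on exact halving (the corollary requires the common squared norm to equal $n$ divided by the \emph{actual} number of surviving vectors), the induction as stated does not close. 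This is not the ``standard technicality'' you describe; it is the one place where an extra device is needed.

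The repair is exactly what the paper does: keep the true set size in the loop. Set $\gamma_t = 2^t\card{S_t}$ and renormalize the survivors by $\sqrt{m2^t/\gamma_t}$, so their squared norms equal $n/\card{S_t}$ as required; then take the trace of the inductive two-sided bound $\alpha_t \preceq m2^t\sum_{i\in S_t}v_iv_i\transpose \preceq \beta_t$ to conclude $\alpha_t \le \gamma_t \le \beta_t$. This shows $\card{S_t}$ is within a $(1\pm\epsilon)$ factor of $m/2^t$, which is enough to keep $\alpha_t/\gamma_t\in[1/2,1]$ and $\beta_t/\gamma_t\in[1,2]$, to keep $\delta_t = n/\card{S_t}$ small, and to keep the drift sum $\sum_t C\sqrt{\delta_t}$ geometric, so the recursion closes. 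With that substitution your argument coincides with the paper's proof (the initial equal-weight reduction and the final step of extracting a basis from the spanning survivors are the same), so the flaw is a single missing ingredient --- the $\gamma_t$/trace bookkeeping --- rather than a wrong route, but it does need to be supplied.
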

\begin{proof}
\newcommand{\vv}[2]{v_{#1}^{(#2)}}
\newcommand{\al}{\alpha}
\newcommand{\be}{\beta}
\newcommand{\ga}{\gamma}

As in \Theorem{strongIsotropic}, 
let $w_1,\ldots,w_m \in \bR^n$ satisfy $\norm{w_i} = 1$ for all $i$.
Let $p_1,\ldots,p_m$ be a probability distribution on these vectors
such that the covariance matrix is $\sum_i p_i w_i w_i \transpose = I/n$.

Without loss of generality, we may assume $p_i = 1/m ~\forall i$.
To see this, suppose that $p_1,\ldots,p_m$ are rational numbers of the form $q_i/M$
where $q_1,\ldots,q_m,M$ are nonnegative integers.
Then we may replace each $w_i$ with $q_i$ copies of itself.
The uniform distribution on the resulting vectors still has covariance matrix $I/n$.
Proving \Theorem{strongIsotropic} for the resulting vectors establishes
the theorem for the original vectors under distribution $p$.
If $p_1,\ldots,p_m$ are irrationals, we may approximate them by rationals
while introducing vanishing error.

Define $v_i = \sqrt{n/m} \cdot w_i$, so that $\norm{v_i}^2 = n/m =: \delta_0$ for all $i$. 
We will iteratively construct sets $S_t \subseteq [m]$, with $S_0 = [m]$.
Let $C$ be as in \Corollary{strongKS}.
Define $\al_0 = \be_0 = m$, and then inductively
\begin{align*}
    \al_{t+1} &= \al_t - C (2^tn)^{1/2} (\be_t)^{1/2}\\
    \be_{t+1} &= \be_t + C (2^tn)^{1/2} (\be_t)^{1/2}.
\end{align*}

Let $\epsilon \in (0,1]$ be a small constant to be chosen in a moment, and let 
\[ T = \max \Bigl\{\, t \,:\,  C \sum_{j=0}^{t-1} (2^jn/m)^{1/2} \leq \epsilon/2\, \Bigr\}. \] 
\todo{Perhaps state the value of $T$ and forward pointer to \eqref{eq:T}.}

This choice of $T$ is motivated by the following:
\begin{claim}\label{clm:abbounds}
    For all $t \leq T$, $\be_t \leq m(1+\epsilon)$ and $\al_t \geq m(1-\epsilon)$.
\end{claim}
\begin{proof}
    For $0 \leq t < T$,
    \[ \be_{t+1} ~=~ \be_{t}(1 + C (2^{t}n / \be_{t})^{1/2}) ~\leq~ \be_{t}(1 + C (2^{t}n/m)^{1/2}). \]
So
\begin{align*} \be_t &~\leq~ m\prod_{j=0}^{t-1}\Bigl(1+ C (2^jn/m)^{1/2}\Bigr)\\
                        &~\leq~ m \exp\Bigl(C \sum_{j=0}^{t-1} (2^jn/m)^{1/2}\Bigr)\\
                        &~\leq~ m\exp(\epsilon/2)\\
                        &~\leq~ m(1+\epsilon).
\end{align*}
Note that 
\[ \al_0 - \al_t ~=~ \be_t - \be_0 \]
and so since $\be_t \leq m(1+\epsilon)$, $\al_t \geq m(1-\epsilon)$.
\end{proof}
Note that since $\sum_{j=0}^{T-1} (2^jn/m)^c = \Theta((2^Tn/m)^{1/2})$,  we have that 
\begin{equation}\label{eq:T}
2^T = \Theta\left(\frac{m}{n} \epsilon^{2}\right). 
\end{equation}
So we may choose $\epsilon \in (0,1/3]$ to be a constant sufficiently small so that 
\begin{equation}\EquationName{boundedByd}
    (1-\epsilon)^{-1}2^T (n/m) ~\leq~ d.
\end{equation}

Our first goal will be to show inductively that for all $t \leq T$, there exists a set $S_t \subseteq [m]$ so that
\begin{equation}\label{eq:indass}
\al_t ~\preceq~ m2^t \sum_{i \in S_t} v_iv_i\transpose ~\preceq~ \be_t.
\end{equation}
Note that this is true for $t=0$ by assumption.

It will be convenient to define $\ga_t = 2^t|S_t|$.
Suppose \eqref{eq:indass} holds for some particular $t < T$.
Define \[\vv{i}{t} ~=~ v_i \cdot \sqrt{m2^t/\ga_t}, \]
so that $\lVert\vv{i}{t}\rVert^2 = \frac{n}{|S_t|} =: \delta_t$ for all $t$. 
Then just by scaling,
\[ \al_t / \ga_t ~\preceq~ \sum_{i \in S_t} \vv{i}{t}(\vv{i}{t})\transpose ~\preceq~ \be_t / \ga_t. \]
Taking a trace yields $n \al_t/\ga_t \leq n \leq n\be_t/\ga_t$, i.e., 
\begin{equation}\EquationName{alphagammabeta}
\al_t ~\leq~ \ga_t ~\leq~ \be_t.
\end{equation}
By \eqref{eq:alphagammabeta}, \Claim{abbounds} and \eqref{eq:boundedByd}, we have
\begin{gather*}
1/2 ~\leq~ \frac{1-\epsilon}{1+\epsilon} ~\leq~ \frac{\al_t}{\be_t} ~\leq~ \frac{\al_t}{\ga_t} ~\leq~ 1 \\
1 ~\leq~ \frac{\be_t}{\ga_t} ~\leq~ \frac{\be_t}{\al_t} ~\leq~ \frac{1+\epsilon}{1-\epsilon} ~\leq~ 2 \\
\delta_t ~=~ 2^tn/\ga_t ~\leq~ 2^Tn/\al_t ~\leq~ 2^Tn/(m(1-\epsilon)) ~\leq~ d.
\end{gather*}
Now apply \Corollary{strongKS} with
$S_t$ instead of $[m]$,
$\vv{i}{t}$ instead of $v_i$,
$\al_t / \ga_t$ instead of $\alpha$, 
$\be_t / \ga_t$ instead of $\beta$, and
$\delta_t$ instead of $\delta$.
The hypotheses of \Corollary{strongKS} are satisfied, so it follows that there is a set $S_{t+1}
\subseteq S_t$ with
\[
   \al_t / \ga_t - C \delta_t^{1/2}
 ~\preceq~ 2\sum_{i \in S_{t+1}} \vv{i}{t}(\vv{i}{t})\transpose
 ~\preceq~ \be_t / \ga_t + C \delta_t^{1/2}.
\]
Rewriting in terms of the original $v_i$'s, we obtain
\[
    \al_t - C \ga_t \delta_t^{1/2}
    ~\preceq~ 2^{t+1}m \sum_{i \in S_{t+1}} v_iv_i\transpose
    ~\preceq~ \be_{t} + C \ga_t \delta_t^{1/2}.
\]
Now 
\[ \ga_t\delta_t^{1/2} ~=~ \ga_t (n/|S_t|)^{1/2} ~=~ (2^tn)^{1/2}(\ga_t)^{1/2} ~\leq~ (2^tn)^{1/2} (\be_t)^{1/2}. \]
Hence
\[ \al_{t+1} ~\preceq~ 2^{t+1}m \sum_{i \in S_{t+1}} v_iv_i\transpose ~\preceq~ \be_{t+1}, \]
and the inductive step is achieved.

\medskip

From \Claim{abbounds} and \eqref{eq:indass} for $t=T$, we deduce that
\[ m(1-\epsilon) ~\preceq~ 2^Tn \sum_{i \in S_T} w_iw_i\transpose ~\preceq~ m(1+\epsilon); \]
hence by \eqref{eq:T} and since $\epsilon$ is a constant,
\[         \Theta\big( (1-\epsilon) \epsilon^{-2} \big)
 ~\preceq~ \sum_{i \in S_T} w_iw_i\transpose
 ~\preceq~ \Theta\big( (1+\epsilon) \epsilon^{-2} \big)
 ~=~ \Theta(1). \]

The left inequality implies that $\setst{ w_i }{ i \in S_T }$ spans $\bR^n$.
To conclude,
select an arbitrary basis $B \subseteq S_T$; then
\[ \sum_{i \in B} w_iw_i\transpose ~\preceq~ \sum_{i \in S_T} w_iw_i\transpose ~\preceq~ \Theta(1), \]
and so \Theorem{strongIsotropic} holds.

\end{proof}

\end{document}